\newlength{\spacelen}
\newcommand{\tabspace}[1]
        {\settowidth{\spacelen}{$#1$}
         \hspace*{\spacelen} }
\newcommand{\negspace}[1]
        {\settowidth{\spacelen}{$#1$}
         \hspace*{-\spacelen}}
\def\ms#1{\null\ifmmode\mathord{\mathcode`-="702D\it #1\mathcode`\-="2200}%
        \else$\mathord{\mathcode`-="702D\it #1\mathcode`\-="2200}$\fi}
\newcommand{\cws}[3]
        {\\[-#1pt] $$ {#3} $$ \\[-#2pt]}
\newlength{\widtharg} \newlength{\heightarg}
\newcommand{\cala}
        {{\mathcal A}}
\newcommand{\calb}
        {{\mathcal B}}
\newcommand{\cale}
        {{\mathcal E}}
\newcommand{\calp}
        {{\mathcal P}}
\newcommand{\snil}
	{\underline 1}
\newcommand{\narrow}[1]
        {\arrow{#1}\!\!\!\!\!\!\!\!\!\!\!\!/\,\,\,\,\,\,\,\,\,\,}
\newcommand{\sarrow}[1]
        {\, \auxsarrow\limits^{#1} \,}
\newcommand{\auxsarrow}
        {\mathop{\longrightarrow}}
\newcommand{\arrow}[1]
        {\, \auxarrow\limits^{#1} \,}
\newcommand{\notarrow}[1]
        {\, \notauxarrow\limits^{#1} \,}
\newcommand{\arrowd}[2]
        {\, {\auxarrow\limits^{#1}}_{#2} \,}
\newcommand{\narrowd}[2]
        {\, {\notauxarrow\limits^{#1}}_{#2} \,}
\newcommand{\auxarrow}
        {\mathop{\longrightarrow}}
\newcommand{\notauxarrow}
        {\mathop{\not \!\! \longrightarrow}}
\newcommand{\warrow}[1]
        {\, \auxwarrow\limits^{#1} \,}
\newcommand{\auxwarrow}
        {\mathop{\Longrightarrow}}
\newcommand{\nil}
        {\underline 0}
\newcommand{\tch}
        {+^t \,}
\newcommand{\pco}[1]
        {\mathop{\Vert_{#1}}}
\newcommand{\lme}[1]
        {\mathop{\lfloor \;\!\!\! \lfloor_{#1}}}
\newcommand{\sme}[1]
        {\mathop{|_{#1}}}
\newcommand{\notexists}
        {\not \!\! \exists}
\newcommand{\infr}[2]
        {\renewcommand{\arraystretch}{1.5}
        \begin{array}{c}
        #1\\
        \hline
        #2
        \end{array}}
\begin{document}

\title{Axiomatizing Maximal Progress and Discrete Time}

\author[M.~Bravetti]{Mario Bravetti}
\address{University of Bologna, Italy / INRIA FoCUS Team, France}
\email{mario.bravetti@unibo.it}{}{}
 







%
\thanks{Research partly supported by the
H2020-MSCA-RISE project ID 778233 ``Behavioural Application Program Interfaces
(BEHAPI)''}

\begin{abstract}




\noindent
Milner's complete proof system for observational congruence 
is crucially based on the possibility to equate $\tau$ divergent 
expressions to non-divergent ones by means of the axiom 
$\ms{rec} X. (\tau.X + E) = \ms{rec} X. \tau. E$. 
In the presence of a notion of priority, where, e.g., 
actions of type $\delta$ have a lower priority than silent 
$\tau$ actions, this
axiom is no longer sound. Such a form of priority is, however, common in 
timed process algebra, where, due to the interpretation of $\delta$ as a time delay, it naturally arises from the 
maximal progress assumption.
We here present our solution, based on introducing an auxiliary operator 
$\ms{pri}(E)$ defining a ``priority scope'',  to the long time open problem of axiomatizing priority using standard
observational congruence: 
we provide 
a complete axiomatization 
for a basic process algebra with priority and (unguarded) recursion.
We also show that,
when the setting is extended by considering static operators of a discrete time calculus,
an axiomatization that is complete over (a characterization of) finite-state terms can be developed by re-using 
techniques devised in the context of a cooperation with Prof. Jos Baeten. 


%
\end{abstract}


\maketitle

\section{Introduction}


%
%

The necessity of extending the expressiveness of classical
process algebras, so to make them suitable for the specification and
analysis of real case studies, led to the definition of several timed
calculi (see, e.g.,~\cite{Yi,NS,Han,Hil,HenR,Her,Ber,mtcs00,tcs,tcs3,Bra,BMM,ABDGHW,mtcs02,jlamp18}). Even if expressing different ``kind''
of times, e.g.\ discrete time or stochastic time,
often such extensions led to the necessity
of introducing a notion of priority among actions. A quite common technical design choice is to exploit priority for 
enacting the so-called {\it maximal progress assumption}~\cite{Yi,NS,HenR}:
the possibility of executing internal transitions prevents the 
execution of timed
transitions, thus expressing that the system cannot wait if it has something
internal to do. 

Technical problems arising from expressing maximal progress were
studied in the context of prioritized process algebras, see, 
e.g.,~\cite{CH,NCCC,CLN} and~\cite{CLN2} for a survey. 
One of the open questions in this context (see~\cite{CLN2}) was 
finding a complete axiomatization for observational congruence in the presence
of (unguarded) recursion.

In this paper we present our solution to such a problem. We proceed as follows. First, in Section~\ref{SectAPFSB},
we do it in the context of a {\it basic calculus}, presenting the results in \cite{icalp00} and their full technical machinery. In particular, we consider the algebra of
finite-state agents (made up of choice, prefix and recursion only),
used by Milner in~\cite{Mil89} for
axiomatizing observational congruence in presence of recursion, and we
extend it with $\delta$ prefixing, where
$\delta$ actions have lower {\it priority} than internal $\tau$ actions.
Such a calculus can be interpreted in this way: 
$\delta$ actions
represent ``generic'' time delays, standard actions are executed in zero time,
and the priority of $\tau$ actions over $\delta$ actions derives from the
maximal progress assumption. 
As in~\cite{CH,NCCC,CLN,CLN2,HL} we assume that visible actions
never have pre-emptive power over lower priority $\delta$ actions, because 
we see visible actions as indicating only the potential for execution.
The presence of such a priority mechanism
makes the standard Milner's complete proof system
for observational congruence (with $\delta$ actions being treated as visible actions) no longer sound. In particular this happens
for the fundamental $Ung$ axiom
$\ms{rec} X. (\tau.X + E) = \ms{rec} X. \tau. E$ (a $\delta$ action 
performable by $E$
is pre-empted in the left-hand term but not in the right-hand term),
which makes it possible to equate $\tau$ divergent
expressions to non-divergent ones, so to remove unguarded recursion.
Such a problem was previously faced in~\cite{HL}, where an alternative axiom was found 
in the context of a $\tau$-divergent sensitive equivalence that is finer than observational congruence/weak bisimulation.
Our solution, which does not require a modification of the equivalence, is based on the idea of introducing an auxiliary operator
$\ms{pri}(E)$ that cuts behaviours in $E$ starting with an unprioritized $\delta$ action: it allows us to express the result of applying priority within a certain {\it scope}.
By suitably modifying the axiom above and by
introducing some new axioms, our technique provides
a complete axiomatization for Milner's standard observational congruence 
over a basic calculus with this simple kind of
priority and (unguarded) recursion.


Then, in Section~\ref{SectDRT}, we consider a full timed calculus, including also static operators like parallel composition. 
In particular, we interpret unprioritized $\delta$ actions of the basic calculus as representing time delays in the context of discrete time  (see~\cite{HenR} and the references therein), where a $\delta$ action is called a ``tick''. Ticks take a fixed (unspecified)
amount of time, 
which is the same for all processes,
and are assumed to synchronize over all system processes.

We first show that, under this ``timed'' interpretation for $\delta$ actions, we can extend the basic calculus of Section~\ref{SectAPFSB} with {\it static} operators, like CSP~\cite{Hoa} parallel composition and hiding, preserving the congruence property of standard observational congruence.
We then consider a full discrete time calculus, which, besides including recursion and the above ``timed'' CSP parallel composition and hiding operators, is endowed with a timed variant, taken from~\cite{HenR}, of the prefix and choice operators: a ``timed prefix'' $a^t.P$, which allows time to pass via $\delta$ actions, while waiting for the action $a$ to be performed; and a ``timed choice'' $P+^t Q$, which, similarly, allows time to pass via $\delta$ actions performed (in synchronization) by $P$ and $Q$, while waiting for the choice to be resolved by a standard action. Considering such a timed variant of the operators of the basic calculus is convenient from a modeling viewpoint and leads to {\it time-determinism}: any process of the discrete time calculus can perform at most one $\delta$ transition. Such a property is natural in discrete time (see~\cite{HenR}) in that the mere passage of time is expected to modify system behavior in a deterministic way.

Even if we consider the discrete time calculus as a ``specification level calculus'' that does not include the standard prefix and choice operators of the basic calculus of Section~\ref{SectAPFSB}, the latter, also due to presence of recursion, plays a fundamental role in its axiomatization.
The idea is that, since all the operators of the full discrete time calculus have a sort of static behaviour (even $a^t.P$ and $P+^t Q$), we can use 
terms of the basic calculus of Section~\ref{SectAPFSB} (which essentially can be seen as a process algebraic representation of finite-state labeled transition systems) to express ``normal forms'' of processes. The 
complete axiomatization, developed in Section~\ref{SectAPFSB}, enforced over normal forms, would then yield completeness for  
finite-state processes of the discrete time calculus.

As a matter of fact, the introduction of the ``timed choice'' $P+^t Q$ operator, makes standard observational congruence,
with $\delta$ actions being treated as standard visible actions, no longer a congruence. Similarly as in~\cite{CLM} we need to extend the root condition 
of the equivalence (the first step, leading to weak bisimulation), so that the (matched) execution of a $\delta$ action does not lead to leaving the root condition, i.e.\ to weak bisimulation. So the modified notion of observational congruence still coincides
with the standard one as far as standard actions are concerned (it is a conservative extension of it). Moreover,  for any action, it still
leads to standard weak bisimulation once the (strengthened) root condition is left. 
As a consequence, we have to slightly modify the axiomatization of the basic calculus considered in Section~\ref{SectAPFSB},
in order to account for the modified way in which the discrete time observational congruence deals with $\delta$ actions:
this just causes an additional $\delta$-specific axiom to be added to standard $Tau$ axioms and does not affect the $Ung$ axioms (the ones we dealt with in Section~\ref{SectAPFSB} due to priority of $\tau$ over $\delta$).

Apart from axiomatizing the basic calculus, one of the technical difficulties concerning the (inductive) procedure of turning discrete time calculus processes into normal form is dealing with unguardedness generated by static operators, i.e., in our case by the hiding operator. During a two months visit of Prof.\ Jos Baeten (who was on sabbatical leave) at University of Bologna, a procedure of this kind was devised in the context of a standard (untimed) process algebra, based on the introduction of a specific axiom. Such a procedure, that is here reapplied in a much more complex setting, is contained in~\cite{concur05,mscs08}.
The completeness result of the axiomatization for the discrete time calculus is then obtained by adopting a syntactical characterization that guarantees finite-stateness, which is similar, apart from treatment of the $P+^t Q$ operator, to that of~\cite{tocl,concur05,mscs08}. Notice that such a characterization includes the possibility of expressing unguarded recursion, which is dealt with by resorting to the basic calculus axiomatization in Section~\ref{SectAPFSB}.

Finally, in Section~\ref{SectRW} we discuss related work and in Section~\ref{SectConc} we report remarks about future work.

Section~\ref{SectAPFSB} is an extended and revised version of~\cite{icalp00} that presents, for the first time,
complete proofs. Presenting such a technical machinery in its entirety is needed for Section~\ref{SectDRT}
that builds its results and proofs on it.

\section{Axiomatizing Prioritized Finite State Behaviors}
\label{SectAPFSB}

In this section we present the complete axiomatization of observational congruence for the
basic calculus: the algebra of
finite-state agents 
used by Milner in~\cite{Mil89} extended with generic $\delta$ prefixing, 
where $\delta$ actions have lower priority than internal $\tau$ actions.


\subsection{A Basic Calculus}
\label{SSectBC}

Prioritized observable actions are denoted by $a,b,c, \dots$.
The denumerable set of all prioritized actions, which includes the silent action 
$\tau$ denoting an internal move, is denoted by $\ms{PAct}$, 
ranged over by $\alpha, \alpha', \dots$ . 
The set of all actions is defined by $\ms{Act} =
\ms{PAct} \cup \{ \delta \}$, ranged over by $\gamma, \gamma', \dots$ .
The denumerable set of term variables is $\ms{Var}$, ranged over
by $X,Y, \dots$ . 
The set $\cale$ of behavior expressions, ranged over by $E, F, G, \dots$ is 
defined by the following syntax.
$$E ::= \nil \mid X \mid \gamma . E \mid E + E \mid \ms{rec}X . E$$
The meaning of the operators is the standard one of~\cite{Mil,Mil89}, where
$\ms{rec}X$ denotes recursion in the usual way. 
We adopt the standard definitions of~\cite{Mil,Mil89} for {\it free} variable,
and {\it open} and {\it closed} term. 
The set of processes, i.e. closed terms, is denoted by $\calp$,
ranged over by $P, Q, R, \dots$.

As in~\cite{Mil89} we take the liberty of identifying expressions which differ only by 
a change of bound variables (hence we do not need to deal with
$\alpha$-conversion explicitly). 
We will write $E \{ F / X \}$ for the result of syntactically substituting
$F$ for each free occurrence of $X$ in $E$, renaming bound variables as necessary.
This is also generalized to sets of variables $\tilde{X} = \{ X_1, \dots , X_n \}$:
$E \{ \tilde{F} / \tilde{X} \}$, where $\tilde{F} = \{ F_1, \dots , F_n \}$ stands for the result of substituting
$F_i$ for each free occurrence of $X_i$ in $E$ for all $i \in \{1, \dots, n\}$, renaming bound variables as necessary.

We adopt the following standard (see~\cite{Mil89}) definitions concerning guardedness of
variables. 

\begin{defi}
A free occurrence of $X$ in $E$ is {\it weakly guarded} if it occurs
within some subexpression of $E$ of the form $\gamma.F$.
It is {\it (strongly) guarded} if we additionally have that
$\gamma \neq \tau$. It is {\it unguarded} if it is not (strongly) guarded.
It is {\it fully unguarded} if it is not weakly guarded.
\end{defi}

We say that $X$ is weakly guarded/(strongly) guarded in $E$ if 
each free occurrence of $X$ in $E$ is weakly guarded/(strongly) guarded, respectively.
Correspondingly, we say that $X$ is unguarded/fully unguarded in $E$ if 
some free occurrence of $X$ in $E$ is unguarded/fully guarded, respectively.
Moreover we say that a recursion $\ms{rec} X . E$ is weakly guarded/(strongly) guarded/unguarded/ fully unguarded, if $X$ is weakly guarded/(strongly) guarded/unguarded/fully unguarded in $E$, respectively.
Finally, we say that an expression $E$ is weakly guarded/(strongly) guarded if
every subexpression of $E$ which is a recursion is weakly guarded/(strongly)
guarded, respectively.
Correspondingly, we say that an expression $E$ is unguarded/fully unguarded if 
some subexpression of $E$ which is a recursion 
is unguarded/fully guarded, respectively.

The operational semantics of the algebra terms is given as a relation
$\arrow{} \subseteq \calp \times  \ms{Act} \times \calp$. We write $P 
\arrow{\gamma} Q$ for $(P,\gamma,Q) \in \arrow{}$, $P \arrow{\gamma}$ for 
$\exists Q 
: (P,\gamma,Q) \in \arrow{}$ and $P \notarrow{\gamma}$ for $\notexists Q
: (P,\gamma,Q) \in \arrow{}$. $\arrow{}$ is defined as the least relation
satisfying the operational rules in Tables~\ref{StRules} and~\ref{DeltaRules}.
Notice that, even if the rules in Table~\ref{DeltaRules} include a negative
premise, the operational semantics is well-defined in that
the inference of transitions can be stratified (see, e.g., \cite{Gro}).

        {\begin{table}[t]

{ \[
\begin{array}{|c|} 
\hline \\
\begin{array}{l}
\gamma. P \arrow{\gamma} P \\[0.3cm]
\end{array} \\
\begin{array}{cccc}
\infr{P \arrow{\alpha} P'}{P + Q \arrow{\alpha} P'} & \hspace{1cm} & 
\infr{Q \arrow{\alpha} Q'}{P + Q \arrow{\alpha} Q'} & \\[0.6cm]
\end{array} \\
\begin{array}{l}
\infr{E\{ \ms{rec}X . E / X \} \arrow{\gamma} P'} 
{\ms{rec}X . E \arrow{\gamma} P'  } \\[0.7cm]
\end{array} \\
\hline
\end{array}\]}

\caption{Standard Rules}\label{StRules}

        \end{table}}

        {\begin{table}[t]

{ \[
\begin{array}{|c|} 
\hline \\
\begin{array}{cccc}
\infr{P \arrow{\delta} P' \hspace{.5cm} Q \notarrow{\tau} }
{P + Q \arrow{\delta} P'} & &
\infr{Q \arrow{\delta} Q' \hspace{.5cm} P \notarrow{\tau} }
{P + Q \arrow{\delta} Q'} & \\[0.6cm]
\end{array} \\
\hline
\end{array}\]}

\caption{Special Rules Expressing Priority.}\label{DeltaRules}

        \end{table}}

As in~\cite{NCCC} we capture the priority of $\tau$ 
actions over $\delta$ actions by cutting transitions which cannot be 
performed directly in semantic models (and not by discarding them
at the level of bisimulation definition as done in~\cite{HL}) so
that we can just apply the ordinary notion of observational 
congruence~\cite{Mil}. 

\begin{prop}[maximal progress]\label{maxprog}
%
Relation $\arrow{}$ is such that: \\[.2cm]
\centerline{
$P \arrow{\tau} \hspace{.1cm}$ implies
$P \narrow{\delta}$}
%
\end{prop}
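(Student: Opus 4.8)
The plan is to prove Proposition~\ref{maxprog} by structural induction on the derivation of $P \arrow{\tau} P'$ for some $P'$, i.e. by induction on the proof tree witnessing that $P$ has a $\tau$-transition, and to show simultaneously that no rule can then produce a $\delta$-transition out of $P$. Since $\tau \in \ms{PAct}$, the $\tau$-transitions are derived only from the rules in Table~\ref{StRules}, so there are exactly three cases to consider for the last rule applied: the prefix axiom $\gamma.P \arrow{\gamma} P$ with $\gamma = \tau$, the two symmetric choice rules for $P + Q$, and the recursion unfolding rule.

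\textbf{Base case.} If $P = \tau.P'$, then the only transition rules whose conclusion has the form $\tau.P' \arrow{\gamma} \cdot$ is the prefix axiom with $\gamma = \tau$; in particular there is no rule deriving $\tau.P' \arrow{\delta} \cdot$ (the prefix axiom would require the prefix to be $\delta$, and the choice and recursion rules do not apply to a prefix term), so $\tau.P' \narrow{\delta}$ trivially.

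\textbf{Inductive step.} If $P = Q + R$ and $P \arrow{\tau} P'$, then by the choice rules in Table~\ref{StRules} we have $Q \arrow{\tau} P'$ or $R \arrow{\tau} P'$; assume w.l.o.g.\ the former. Now suppose for contradiction that $P = Q+R \arrow{\delta} P''$. This transition can only be inferred by one of the rules in Table~\ref{DeltaRules} (the standard choice rules in Table~\ref{StRules} apply only to $\alpha \in \ms{PAct}$, not to $\delta$). The first such rule requires $R \notarrow{\tau}$ as a side condition and $Q \arrow{\delta} P''$; the second requires $Q \notarrow{\tau}$. Either way one of the negative premises $Q \notarrow{\tau}$ or $R \notarrow{\tau}$ is required, and one of these contradicts $Q \arrow{\tau} P'$ (for the second rule directly; for the first rule, if $Q \arrow{\delta} P''$ were derived we would still need $R \notarrow{\tau}$, which is fine, so here I actually need to recurse): more carefully, in the subcase $Q \arrow{\delta} P''$ with $R \notarrow{\tau}$, I invoke the induction hypothesis on $Q \arrow{\tau} P'$ (a strictly smaller derivation) to get $Q \narrow{\delta}$, contradicting $Q \arrow{\delta} P''$. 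In the subcase $R \arrow{\delta} P''$ with $Q \notarrow{\tau}$, the side condition $Q \notarrow{\tau}$ directly contradicts $Q \arrow{\tau} P'$. Hence $Q + R \narrow{\delta}$. Finally, if $P = \ms{rec}X.E$ and $P \arrow{\tau} P'$, then by the recursion rule $E\{\ms{rec}X.E/X\} \arrow{\tau} P'$ via a strictly smaller derivation; if also $\ms{rec}X.E \arrow{\delta} P''$, the only rule deriving this is again the recursion rule, giving $E\{\ms{rec}X.E/X\} \arrow{\delta} P''$, which together with the induction hypothesis applied to $E\{\ms{rec}X.E/X\} \arrow{\tau} P'$ yields $E\{\ms{rec}X.E/X\} \narrow{\delta}$, a contradiction.

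The main obstacle — or rather the point that needs care — is that the proposition cannot be proved by a naive induction on the structure of $P$ alone, because the choice rules in Table~\ref{DeltaRules} mix a $\delta$-transition of one summand with a negative $\tau$-premise on the \emph{other} summand, and the recursion rule replaces $P$ by the structurally larger term $E\{\ms{rec}X.E/X\}$; the clean way to handle both is to induct on the derivation of the $\tau$-transition, exploiting the well-stratification of $\arrow{}$ noted just before the statement, and at each step to argue by a short case analysis on which rule could possibly have produced a competing $\delta$-transition. One should also note explicitly at the outset that no $\tau$-transition is ever derived using a rule from Table~\ref{DeltaRules}, since those rules only produce $\delta$-labelled transitions, which is what makes the case analysis on the last rule finite and confined to Table~\ref{StRules}.
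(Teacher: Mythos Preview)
Your proof is correct and follows exactly the approach the paper sketches: induction on the height of the inference tree of the $\tau$-transition, with the $\tau$-prefix as base case. The paper gives only a one-line hint, and your proposal fills in precisely the expected case analysis (prefix, choice with the two $\delta$-rule subcases, and recursion unfolding).
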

\begin{proof}
Easily shown by induction on the height
of the inference tree of $\tau$ transitions of terms
$P \in \calp$ (the base case being a $\tau$ prefix).
\end{proof}

\subsection{Weak Bisimulation Equivalence}
\label{SSectWBE}

The equivalence notion we consider over the terms of our prioritized
process algebra is the standard notion of 
observational congruence extended to open terms~\cite{Mil,Mil89}.

As in~\cite{Mil89}
we use $\sarrow{\gamma}\!\!^+$ to denote computations composed of all 
$\sarrow{\gamma}$ transitions whose length is at least one 
and $\sarrow{\gamma}\!\!^*$ to denote computations composed of all 
$\sarrow{\gamma}$ transitions whose length is possibly zero.
Let $\warrow{\gamma}$ denote $\sarrow{\tau}\!\!^*
\sarrow{\gamma} \sarrow{\tau}\!\!^*$. Moreover we define 
$\warrow{\hat{\gamma}} = \warrow{\gamma}$ if $\gamma \neq \tau$ and
$\warrow{\hat{\tau}} = \sarrow{\tau}\!\!^*$. 

\begin{defi}
A relation $\beta \subseteq \calb \times \calb$ is a weak bisimulation if,
whenever 
$(P,Q) \in \beta$:

\begin{itemize}\label{basicweakbis}
\item If $P \arrow{\gamma} P'$ then, for some $Q'$, $Q \warrow{\hat{\gamma}} Q'$
and $(P',Q') \in \beta$.
\item If $Q \arrow{\gamma} Q'$ then, for some $P'$, $P \warrow{\hat{\gamma}} P'$
and $(P',Q') \in \beta$.
\end{itemize}
Two processes $P$, $Q$ are weakly bisimilar,
written $P \approx Q$, iff $(P,Q)$ is included in some weak bisimulation.

\end{defi}

\begin{defi}\label{basicobscongr}
Two processes $P$, $Q$ are observationally congruent, written 
$P \simeq Q$, 
iff:

\begin{itemize}
\item If $P \arrow{\gamma} P'$ then, for some $Q'$, $Q \warrow{\gamma} Q'$
and $P' \approx Q'$.
\item If $Q \arrow{\gamma} Q'$ then, for some $P'$, $P \warrow{\gamma} P'$
and $P' \approx Q'$.
\end{itemize}
\end{defi}
Weak bisimulation $\approx$ and observational congruence $\simeq$ are indeed equivalence relations~\cite{Mil}.

Open terms are dealt with as follows~\cite{Mil}.

\begin{defi}\label{obscongruence}
Two open terms $E$, $F$ are observationally congruent, written 
$E \simeq F$, if, assumed the set of variables $\tilde{X}$ to include all free variables occurring
in $E$ and $F$, the following holds. For all sets $\tilde{P}$ of closed terms we have
$E \{ \tilde{P} / \tilde{X} \} \simeq F \{ \tilde{P} / \tilde{X} \}$.
\end{defi}


\begin{cor}\label{cor}
If $P \simeq Q$ then:
\cws{10}{30}{P \arrow{\tau} \hspace{.5cm} \Leftrightarrow \hspace{.5cm} 
Q \arrow{\tau}}
\end{cor}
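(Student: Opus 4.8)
The plan is to derive this corollary directly from the definition of observational congruence (Definition~\ref{basicobscongr}) together with the maximal progress property (Proposition~\ref{maxprog}). The statement is an equivalence, and by symmetry of $\simeq$ it suffices to prove one direction, say $P \arrow{\tau} \Rightarrow Q \arrow{\tau}$; the converse follows by swapping the roles of $P$ and $Q$.

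So suppose $P \simeq Q$ and $P \arrow{\tau} P'$ for some $P'$. By the first clause of Definition~\ref{basicobscongr}, there is a $Q'$ with $Q \warrow{\tau} Q'$ and $P' \approx Q'$. Unfolding the definition of $\warrow{\tau}$, we have $Q \; \sarrow{\tau}\!\!^* \; \sarrow{\tau} \; \sarrow{\tau}\!\!^* \; Q'$, which is a computation of at least one $\tau$ step from $Q$. In particular its first step is a transition $Q \arrow{\tau} Q''$ for some $Q''$, hence $Q \arrow{\tau}$, which is what we wanted.

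**Where maximal progress actually enters.** The argument above is purely about $\tau$, so Proposition~\ref{maxprog} is not strictly needed for this corollary as literally stated; however it is worth noting that the corollary is the $\tau$-reflection property one needs precisely because, in the prioritized setting, whether a term can do $\tau$ affects which $\delta$-transitions of an enclosing sum are enabled (via Table~\ref{DeltaRules}). I would remark that Corollary~\ref{cor} is the key sanity check ensuring that $\simeq$ (and $\approx$) interact correctly with the negative premises in the operational rules, and that this is exactly why $\simeq$ remains a congruence here even though it is defined by the ordinary clauses.

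**The only subtlety**, and the step I would be most careful about, is the passage from $Q \warrow{\tau} Q'$ to $Q \arrow{\tau}$: one must observe that $\warrow{\tau} = \sarrow{\tau}\!\!^* \sarrow{\tau} \sarrow{\tau}\!\!^*$ genuinely contains at least one real $\tau$-transition (the $\sarrow{\tau}$ in the middle is not the reflexive closure), in contrast to $\warrow{\hat{\tau}} = \sarrow{\tau}\!\!^*$ which could be empty. This is immediate from the definitions given just before Definition~\ref{basicweakbis}, but it is the one place where using $\simeq$ rather than $\approx$ matters — the analogous statement fails for $\approx$, since $\warrow{\hat\tau}$ allows zero steps. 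I would state this explicitly so the reader sees why the root condition of observational congruence is what makes the corollary go through.
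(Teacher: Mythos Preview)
Your proof is correct and is exactly the direct argument the paper implicitly relies on: the corollary is stated without proof in the paper, and your derivation from Definition~\ref{basicobscongr} via the fact that $\warrow{\tau}$ (as opposed to $\warrow{\hat\tau}$) forces at least one $\tau$-step is precisely the intended one-line justification. Your observation that maximal progress is not needed for the corollary itself, and your remark on why the root condition (rather than mere $\approx$) is essential, are both accurate and add useful clarity.
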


The following theorem shows that the presence of priority preserves 
the congruence property of observational congruence w.r.t.\ the 
operators of the algebra.

\begin{thm}
$\simeq$ is a congruence w.r.t. prefix, choice and recursion operators.
\end{thm}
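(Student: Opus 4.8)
The plan is to prove congruence operator by operator, following the structure of Milner's classical proof but paying special attention to the interaction of the negative premises in Table~\ref{DeltaRules} with the weak bisimulation game. The only genuinely new phenomenon relative to~\cite{Mil89} is the priority of $\tau$ over $\delta$, so the core of the argument is: whenever we need to match a $\delta$-move, we must be sure that the candidate matching context does not suddenly acquire a $\tau$-move that would pre-empt it. This is exactly what Proposition~\ref{maxprog} and Corollary~\ref{cor} are for.

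First I would handle prefix. Given $P \simeq Q$, one shows $\gamma.P \simeq \gamma.Q$ directly from the definitions: the only transition of $\gamma.P$ is $\gamma.P \arrow{\gamma} P$, matched by $\gamma.Q \arrow{\gamma} Q$, and $P \approx Q$ since $P \simeq Q$ implies $P \approx Q$. No priority issue arises because $\gamma.R$ has no summands that could carry a $\tau$. Second, for choice, the plan is to prove the stronger statement that $\approx$ is preserved by $+$ \emph{at the root} in the following sense, and then derive the $\simeq$ statement. Suppose $P_1 \simeq P_2$ and $Q_1 \simeq Q_2$; we must show $P_1 + Q_1 \simeq P_2 + Q_2$. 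Consider a move $P_1 + Q_1 \arrow{\gamma} R$. If $\gamma = \alpha \in \ms{PAct}$, it comes (w.l.o.g.) from $P_1 \arrow{\alpha} R$ via a standard rule in Table~\ref{StRules}, and $P_1 \simeq P_2$ gives $P_2 \warrow{\alpha} R'$ with $R \approx R'$; then $P_2 + Q_2 \warrow{\alpha} R'$ because the first transition of a nonempty $\warrow{\alpha}$ from $P_2$ is either a $\tau$ or the $\alpha$ itself, and in both cases $P_2 + Q_2$ inherits it (a $\tau$-move of $P_2$ lifts to $P_2 + Q_2$ by the standard rule, an $\alpha$-move likewise, and once we have left the root we are inside $P_2$ and continue freely). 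The delicate case is $\gamma = \delta$: then $P_1 + Q_1 \arrow{\delta} R$ requires, say, $P_1 \arrow{\delta} R$ \emph{and} $Q_1 \notarrow{\tau}$. From $P_1 \simeq P_2$ we get $P_2 \warrow{\delta} R'$ with $R \approx R'$; from $Q_1 \simeq Q_2$ and Corollary~\ref{cor} we get $Q_2 \notarrow{\tau}$; and here is the key point — since $Q_1 \notarrow{\tau}$, by Corollary~\ref{cor} $Q_2 \notarrow{\tau}$, and since $P_1 \notarrow{\tau}$ as well (because $P_1 \arrow{\delta}$ forces $P_1 \narrow{\tau}$ by Proposition~\ref{maxprog}), Corollary~\ref{cor} also gives $P_2 \notarrow{\tau}$, so $P_2 + Q_2 \notarrow{\tau}$. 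Hence the $\warrow{\delta}$ from $P_2$ is actually a bare $\arrow{\delta}$ (no leading $\tau$ is possible), and the special rule in Table~\ref{DeltaRules} applies to give $P_2 + Q_2 \arrow{\delta} R'$ with $R \approx R'$. A symmetric argument handles moves originating from $Q_1$, and the whole argument is symmetric in the two terms, so $P_1 + Q_1 \simeq P_2 + Q_2$.

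For recursion the plan is the standard one: assuming $E \simeq F$ as open terms (Definition~\ref{obscongruence}), show $\ms{rec}X.E \simeq \ms{rec}X.F$. The usual technique is to close up under substitution and to exhibit a bisimulation up to $\approx$ containing all pairs $(\ms{rec}X.E\{\tilde P/\tilde Y\},\, \ms{rec}X.F\{\tilde P/\tilde Y\})$ together with pairs obtained from subterms, using the fact that $\arrow{}$ on $\ms{rec}X.E$ is generated by unfolding $E\{\ms{rec}X.E/X\}$, and invoking the substitutivity of $\simeq$ under prefix and choice already established, plus a standard "bisimulation up to context" lemma for the finite-state fragment. Because all transitions of a recursion are inherited, via the unfolding rule, from transitions of $E\{\ms{rec}X.E/X\}$ — which are built from prefix/choice contexts around copies of $\ms{rec}X.E$ — the priority-sensitive case ($\delta$ with a negative $\tau$-premise) is again discharged by Corollary~\ref{cor}: a $\notarrow{\tau}$ side condition on a subterm of $E\{\tilde P/\tilde X\}$ transfers to the corresponding subterm of $F\{\tilde Q/\tilde X\}$ because $E \simeq F$ preserves presence/absence of initial $\tau$ on all instantiations.

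The main obstacle, and the only place where this departs from the textbook proof, is the $\delta$-transition matching in the choice case: one must verify that no leading $\tau$ can creep into the matching weak $\delta$-computation and that the negative premise $Q \notarrow{\tau}$ is robust under $\simeq$. Both are settled by Proposition~\ref{maxprog} (which forbids a term with a $\delta$-move from having a $\tau$-move) and Corollary~\ref{cor} (which says $\simeq$ preserves having an initial $\tau$), so the obstacle is really only one of careful bookkeeping rather than a new idea; once the choice case is done cleanly, prefix is trivial and recursion goes through by the classical bisimulation-up-to argument with the same two facts invoked at the leaves.
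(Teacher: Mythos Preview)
Your prefix and choice arguments are fine and essentially match the paper's (the paper varies only one summand, $P+R$ versus $Q+R$, which is slightly cleaner; your observation that $P_2\notarrow{\tau}$ is correct but unnecessary, since a leading $\tau$ of $P_2$ would lift to $P_2+Q_2$ anyway and take you past the sum). One small imprecision: from $P_2\notarrow{\tau}$ you only get that the matching weak $\delta$-move has no \emph{leading} $\tau$'s; trailing $\tau$'s may remain, so you obtain $P_2+Q_2 \arrow{\delta}\!\!\arrow{\tau}^{\!*} R'$, not a bare $\arrow{\delta} R'$. This is harmless for the conclusion.

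The recursion case has a genuine gap. You claim that a negative premise $G''\{\ms{rec}X.E/X\}\notarrow{\tau}$ transfers to $G''\{\ms{rec}X.F/X\}\notarrow{\tau}$ ``because $E\simeq F$ preserves presence/absence of initial $\tau$ on all instantiations''. But Corollary~\ref{cor} applies only to $\simeq$-related pairs, and the two terms in question are \emph{not} of the form $E\{\cdot\}$ versus $F\{\cdot\}$; they are the \emph{same} context $G''$ with different recursions substituted. Knowing $G''\{\ms{rec}X.E/X\}\simeq G''\{\ms{rec}X.F/X\}$ would require congruence of recursion, which is precisely what you are proving, so the argument is circular. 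The appeal to a ``standard bisimulation up to context lemma'' does not rescue this: for weak bisimulation such up-to techniques are delicate (cf.\ \cite{SM}), and in any case you would still need to discharge the negative premise at the $+$-node.

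The paper's fix is a stratification that you should adopt: work with the relation $\beta=\{(G\{\ms{rec}X.E/X\},G\{\ms{rec}X.F/X\})\}$ and prove the bisimulation-up-to-$\approx$ clause by induction on inference depth, \emph{first} for all standard $\alpha$-transitions (whose derivations involve only $\alpha$-transitions, so no negative premises intervene), and \emph{then} for $\delta$-transitions. At the $G\equiv G'+G''$ case for $\delta$, the side condition $G''\{\ms{rec}X.F/X\}\notarrow{\tau}$ follows not from $E\simeq F$ but from the already-established $\alpha$-clause for the pair $(G''\{\ms{rec}X.E/X\},G''\{\ms{rec}X.F/X\})\in\beta$: a $\tau$-move of the right component would force a matching weak $\tau$-move of the left, contradicting $G''\{\ms{rec}X.E/X\}\notarrow{\tau}$.
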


\begin{proof}
As far as the prefix operator is concerned, from $P \simeq Q$ we immediately derive
$\gamma.P \simeq \gamma.Q$

Concerning the choice operator, from $P \simeq Q$ we derive 
$P + R \simeq Q + R$ as follows. Suppose $P + R \arrow{\gamma} P'$,
we have two cases.
\begin{itemize}

\item If $P \arrow{\gamma} P'$ (and $R \notarrow{\tau}$) then (from $P \simeq Q$) 
$Q \warrow{\gamma} Q'$ for some $P' \approx Q'$.
Hence we have $Q + R \warrow{\gamma} Q'$. In particular
if $\gamma = \delta$ this derives from the fact that
$R \notarrow{\tau}$.

\item If $R \arrow{\gamma} P'$ (and $P \notarrow{\tau}$)
then $Q + R \arrow{\gamma} P'$. In particular 
if $\gamma = \delta$ this derives from the following consideration.
From $P \notarrow{\tau}$  we have also that $Q \notarrow{\tau}$
(by Corollary~\ref{cor}).
\end{itemize}

Concerning the recursion operator, from $E \simeq F$ 
we derive 
$\ms{rec}X.E \simeq \ms{rec}X.F$ as follows. We show that
\cws{10}{10}{
\beta = \{ (G \{ \ms{rec}X.E / X \}, G \{ \ms{rec}X.F / X \}) \mid 
G \; {\rm contains \; at \; most} \; X \; {\rm free} \} }
satisfies the condition: \\[.2cm]
\hspace*{1cm} if $G \{ \ms{rec}X.E / X \} \arrow{\gamma} H$ then, 
for some $H',H''$, \\[.1cm]
\hspace*{3cm}
$G \{ \ms{rec}X.F / X \} \warrow{\gamma} H''$ with $H'' \approx H'$ such that $(H,H') \in \beta$, \\[.1cm]
\hspace*{1cm} and symmetrically for a move of $G \{ \ms{rec}X.F / X \}$. \\[.2cm]
This implies that $\beta$ is a weak bisimulation up to $\approx$, see \cite{SM}. 
Moreover
by taking $G \equiv X$ we may conclude that $\ms{rec}X.E \simeq \ms{rec}X.F$.

In the following we begin the proof that $(G \{ \ms{rec}X.E / X \}, 
G \{ \ms{rec}X.F / X \})
\in \beta$ satisfies the condition above by inducing on the height of the
inference tree by which transitions of $G \{ \ms{rec}X.E / X \} \arrow{\alpha} H$ 
are inferred. That is, first we consider standard $\alpha$ transitions (which are inferred from
standard transitions only) and then $\delta$ transitions.

We have the following cases depending on
the structure of $G$.

\begin{itemize}
\item If $G \equiv \nil$ then the condition above trivially holds.
\item If $G \equiv X$ then $G \{ \ms{rec}X.E / X \} \equiv \ms{rec}X.E$ and
$G \{ \ms{rec}X.F / X \} \equiv \ms{rec}X.F$. \\
$\ms{rec}X.E \arrow{\alpha} H$ implies $E\{ \ms{rec}X.E / X \} \arrow{\alpha} H$. 
Hence, by induction, 
for some $H',H''$, 
$E \{ \ms{rec}X.F / X \} \warrow{\alpha} H''$ with $H'' \approx H'$ such that $(H,H') \in \beta$.
Since $E \simeq F$, we have also that  
$F \{ \ms{rec}X.F / X \} \warrow{\alpha} H'''$ with $H''' \approx H''$. Therefore 
$\ms{rec}X.F \warrow{\alpha} 
H'''$ with $H''' \approx H'$ such that $(H,H') \in \beta$. 

\item If $G \equiv \alpha.G'$ then $G \{ \ms{rec}X.E / X \} \equiv 
\alpha.(G' \{ \ms{rec}X.E / X \})$ and $G \{ \ms{rec}X.F / X \} 
\equiv 
 \alpha.(G' \{ \ms{rec}X.F \linebreak
/ X \})$. The result trivially follows.

\item If $G \equiv G' + G''$ then $G \{ \ms{rec}X.E / X \} \equiv 
G' \{ \ms{rec}X.E / X \} + G'' \{ \ms{rec}X.E / X \}$ and
$G \{ \ms{rec}X.F / X \}  \linebreak
\equiv G' \{ \ms{rec}X.F / X \} + 
G'' \{ \ms{rec}X.F / X \}$. \\
Suppose $G' \{ \ms{rec}X.E / X \} + G'' \{ \ms{rec}X.E / X \} 
\arrow{\alpha} H$, we have two cases.
\begin{itemize}
\item If $G' \{ \ms{rec}X.E / X \} \arrow{\alpha} H$ then 
(by induction) $G' \{ \ms{rec}X.F / X \} \warrow{\alpha}$ 
$H''$ with
$H'' \approx H'$ and $(H,H') \in \beta$. Therefore $G' \{ \ms{rec}X.F / X \} 
+$ 
$G'' \{ \ms{rec}X.F / X \} \warrow{\alpha} H''$ with
$H'' \approx H'$ and $(H,H') \in \beta$. 
\item If $G'' \{ \ms{rec}X.E / X \} \arrow{\alpha} H$ then
the result is derived in a similar way.
\end{itemize}

\item If $G \equiv \ms{rec}Y.G'$, with $Y \neq X$, then $G \{ \ms{rec}X.E / X \} \equiv
\ms{rec}Y. G' \{ \ms{rec}X.E / X \}$ and 
$G \{ \ms{rec}X.F$ 
$/ X \} \equiv \ms{rec}Y. G' 
\{ \ms{rec}X.F / X \}$. \\
$\ms{rec}Y. G' \{ \ms{rec}X.E / X \} \arrow{\alpha} H$ implies 
$(G' \{ \ms{rec}X.E / X \}) \{ \ms{rec}Y. G' \{ \ms{rec}X.E / X \} 
/ Y \} \equiv \\
(G' \{ {rec}Y. G' / Y \}) \{ {rec}X.E / X \}
\arrow{\alpha} H$.
By induction we have 
$(G' \{ {rec}Y. G' / Y \})
\{ \ms{rec}X.F / X \} \linebreak
\equiv \! (G' \{ \ms{rec}X.F / X \}) \{ \ms{rec}Y. G' \{ \ms{rec}X.F / X \} / Y \}
\warrow{\alpha} H''$ with
$H'' \approx H'$ and $(H,H')  \in \beta$. 
Therefore $\ms{rec}Y. G' \{ \ms{rec}X.F / X \} \warrow{\alpha}$ 
$H''$ with
$H'' \approx H'$ and $(H,H')  \in \beta$.

\end{itemize}
A completely symmetric inductive proof is performed when we start from a transition of $G \{ \ms{rec}X.F / X \}
\arrow{\alpha} H$ in the condition above.

An identical inductive proof is performed when we start from a transition of $G \{ \ms{rec}X.E / X \}$
$\arrow{\delta} H$ in the condition above (we just have to consider $\delta$ instead of a generic $\alpha$), 
apart when $G \equiv G' + G''$. In this case, to derive the final statement $G' \{ \ms{rec}X.F / X \} + G'' \{ \ms{rec}X.F / X \} \warrow{\delta}$ \linebreak
$H''$ with $H'' \approx H'$ and $(H,H') \in \beta$ we have
to additionally prove that the $\delta$ transition is not pre-empted. In the first subcase this is
obtained as follows. It must be that 
$G'' \{ \ms{rec}X.E / X \} \notarrow{\tau}$, hence, 
since we already showed the condition for relation $\beta$ presented above (and its symmetric one) to hold for any $\alpha$ transition,
we have also that 
$G'' \{ \ms{rec}X.F / X \} \notarrow{\tau}$. The other subcase is analogous.
Finally, a completely symmetric inductive proof is performed when we start from a transition of $G \{ \ms{rec}X.F / X \} \arrow{\delta} H$ in the condition above.
\end{proof}

In the following we show that, by exploiting the priority constraint, we
can rewrite weak bisimulation definition into a less generic form.
In particular we present two reformulations of weak bisimulation
(the second one exploiting guardedness) and a reformulation
of observational congruence.

\begin{prop}\label{reformulation}
A relation $\beta \subseteq \calp \times \calp$ is a weak bisimulation iff,
whenever 
$(P,Q) \in \beta$:

\begin{itemize}
\item If $P \arrow{\alpha} P'$ then, for some $Q'$, $Q \warrow{\hat{\alpha}} Q'$
and $(P',Q') \in \beta$.
\item 
There exists $Q''$ such that $Q \arrow{\tau}\!\!^* \; Q''$ and:
\\[.1cm]
\hspace*{1cm} 
if $P \arrow{\delta} P'$ then, for some $Q'$, 
$Q'' \arrow{\delta}\!\! \arrow{\tau}\!\!^* \;Q'$
and $(P',Q') \in \beta$.
\item If $Q \arrow{\alpha} Q'$ then, for some $P'$, $P \warrow{\hat{\alpha}} P'$
and $(P',Q') \in \beta$.
\item 
There exists $P''$ such that $P \arrow{\tau}\!\!^* \; P''$ and:
\\[.1cm]
\hspace*{1cm} 
if $Q \arrow{\delta} Q'$ then, for some $P'$, 
$P'' \arrow{\delta}\!\! \arrow{\tau}\!\!^* \;P'$
and $(P',Q') \in \beta$.
\end{itemize}
\end{prop}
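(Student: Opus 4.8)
The statement reformulates Definition~\ref{basicweakbis} by factoring out, once and for all, the silent moves that $Q$ (resp.\ $P$) must perform before matching a $\delta$-transition; essentially all of its content sits in the maximal progress property, Proposition~\ref{maxprog}. The plan is to prove the two implications separately.

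For the direction stating that the four reformulated clauses imply that $\beta$ is a weak bisimulation, I would simply reassemble the factored data. Take $(P,Q)\in\beta$ satisfying the clauses. A transition $P\arrow{\alpha}P'$ with $\alpha\in\ms{PAct}$ is matched verbatim by the first clause, so it remains to treat $P\arrow{\delta}P'$: the second clause provides a state $Q''$ with $Q\arrow{\tau}\!\!^*\;Q''$, and then $Q''\arrow{\delta}\!\!\arrow{\tau}\!\!^*\;Q'$ with $(P',Q')\in\beta$; concatenating the two $\tau$-segments yields $Q\warrow{\delta}Q' = Q\warrow{\hat{\delta}}Q'$, exactly what Definition~\ref{basicweakbis} asks for. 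Moves of $Q$ are handled symmetrically, so $\beta$ is a weak bisimulation.

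For the converse, assume $\beta$ is a weak bisimulation and $(P,Q)\in\beta$. The clauses about $\alpha$ transitions are immediate, being the instances of the clauses of Definition~\ref{basicweakbis} with $\gamma=\alpha\in\ms{PAct}$. For the $\delta$ clause on moves of $P$: if $P\narrow{\delta}$ the implication is vacuous and I take $Q''\equiv Q$. Otherwise $P\arrow{\delta}$, and by maximal progress (Proposition~\ref{maxprog}) this precludes $P\arrow{\tau}$, i.e.\ $P\narrow{\tau}$. The crucial step is then to observe that $(P,Q_1)\in\beta$ for every $\tau$-descendant $Q_1$ of $Q$: moving one step at a time along $Q\arrow{\tau}\!\!^*\;Q_1$ and applying the second clause of Definition~\ref{basicweakbis} each time, every required match $P\warrow{\hat{\tau}}P'$ must in fact have $P'\equiv P$ because $P\narrow{\tau}$, so $P$ stays related to each intermediate state and hence to $Q_1$. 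Now fix any one $\delta$-move $P\arrow{\delta}P'_0$; since $\beta$ is a weak bisimulation it is matched by some $Q\warrow{\delta}Q'_0$, which in particular exhibits a state $Q''$ with $Q\arrow{\tau}\!\!^*\;Q''$ and $Q''\arrow{\delta}$, and I fix this $Q''$. For an arbitrary move $P\arrow{\delta}P'$, from $(P,Q'')\in\beta$ I obtain $Q''\warrow{\delta}Q'$ with $(P',Q')\in\beta$, and since $Q''\arrow{\delta}$ again precludes $Q''\arrow{\tau}$ (Proposition~\ref{maxprog}), the leading $\tau$-segment of that weak transition is empty, i.e.\ $Q''\arrow{\delta}\!\!\arrow{\tau}\!\!^*\;Q'$. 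So this single $Q''$ witnesses the $\delta$ clause, and the $\delta$ clause on moves of $Q$ is the mirror argument.

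I expect the only genuinely non-routine point to be the \emph{crucial step}: noticing that once $P$ is $\delta$-enabled — hence $\tau$-blocked by maximal progress — $\beta$-relatedness propagates along the entire $\tau$-reachable set of $Q$, which is precisely what allows a single $Q''$ to stand in for the family of per-move $\tau$-prefixes otherwise hidden inside the weak-transition matches of Definition~\ref{basicweakbis}. Everything else is routine manipulation of $\warrow{}$ together with one further, immediate appeal to Proposition~\ref{maxprog}.
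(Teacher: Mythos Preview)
Your proposal is correct and follows essentially the same route as the paper's proof: both directions hinge on maximal progress, first to infer that $P$ is $\tau$-blocked once $\delta$-enabled (so $\beta$-relatedness propagates along $Q$'s $\tau$-prefix to the chosen $Q''$), and then again to infer that $Q''$ is $\tau$-blocked, forcing the weak match from $Q''$ to begin immediately with $\delta$. The paper's argument is terser but structurally identical; your explicit treatment of the vacuous case $P\narrow{\delta}$ and the step-by-step propagation along $Q\arrow{\tau}\!\!^*\;Q''$ are welcome elaborations rather than deviations.
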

\begin{proof}
If $\beta$ satisfies such a stricter condition then obviously $\beta$ is a weak bisimulation.
Conversely, suppose that $\beta$ is a weak bisimulation; we now prove that it satisfies the condition above. If $P \arrow{\delta}$ then
it must be that $Q \arrow{\tau}\!\!^* Q'' \arrow{\delta}
\arrow{\tau}\!\!^*$ for some $Q''$. We now prove that such a $Q''$ satisfies the related constraint in the condition above.
Since $Q \arrow{\tau}\!\!^* Q''$, it must be $(P, Q'') \in \beta$: because of the priority constraints, since $P \arrow{\delta}$, the sequence of $\tau$ transitions of $Q$ can only by matched by zero length moves of $P$.
From $(P, Q'') \in \beta$ the constraint above is directly
derived by observing that, since $Q'' \arrow{\delta}$, $Q''$ cannot perform $\tau$ transitions.
Symmetrically for $Q \arrow{\delta}$.
\end{proof}

%
%


\begin{prop}\label{necessary}
If a relation $\beta \subseteq \calp \times \calp$ is a weak bisimulation then,
whenever 
$(P,Q) \in \beta$:

\begin{itemize}
\item If $P \arrow{\alpha} P'$ then, for some $Q'$, $Q \warrow{\hat{\alpha}} Q'$
and $(P',Q') \in \beta$.
\item If $P \arrow{\delta} P'$ then, either $Q \arrow{\tau}$, or,
for some $Q'$, $Q \arrow{\delta}\!\! \arrow{\tau}\!\!^* \;Q'$
and $(P',Q') \in \beta$.
\item If $Q \arrow{\alpha} Q'$ then, for some $P'$, $P \warrow{\hat{\alpha}} P'$
and $(P',Q') \in \beta$.
\item If $Q \arrow{\delta} Q'$ then, either $P \arrow{\tau}$, or,
for some $P'$, $P \arrow{\delta}\!\! \arrow{\tau}\!\!^* \;P'$
and $(P',Q') \in \beta$.
\end{itemize}
If, in addition, $\beta$ is a relation over guarded processes of $\calp$, then
$\beta$ is a weak bisimulation iff it satisfies the condition above.

\end{prop}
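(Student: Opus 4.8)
The first part of the statement (that any weak bisimulation satisfies the four displayed clauses) I would obtain directly from Proposition~\ref{reformulation}. Indeed, the reformulated $\delta$-clause there says that there is $Q''$ with $Q \arrow{\tau}\!\!^* Q''$ such that every $\delta$-move of $P$ is matched from $Q''$. The plan is to do a case split: if $Q \arrow{\tau}$, the first disjunct of the clause here holds and we are done; if $Q \notarrow{\tau}$, then necessarily $Q'' \equiv Q$ (the only $\tau$-sequence from $Q$ of the allowed form is the empty one), so the matching $Q'' \arrow{\delta}\!\!\arrow{\tau}\!\!^* Q'$ becomes $Q \arrow{\delta}\!\!\arrow{\tau}\!\!^* Q'$ with $(P',Q') \in \beta$, which is the second disjunct. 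The $\alpha$-clauses are literally those of Definition~\ref{basicweakbis}, and the clauses for moves of $Q$ are symmetric. So this direction is essentially a repackaging of the previous proposition.

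For the second part — that over \emph{guarded} processes the displayed condition is not merely necessary but also sufficient — I would argue that the displayed condition implies the stricter condition of Proposition~\ref{reformulation}, whence $\beta$ is a weak bisimulation. The only point needing work is recovering the existential witness $Q''$ in that stricter $\delta$-clause. Suppose $(P,Q) \in \beta$ and $P \arrow{\delta} P'$. By the displayed condition, either some $Q'$ has $Q \arrow{\delta}\!\!\arrow{\tau}\!\!^* Q'$ with $(P',Q') \in \beta$, or $Q \arrow{\tau}$. In the first case take $Q'' \equiv Q$: since $P \arrow{\delta}$, maximal progress (Proposition~\ref{maxprog}) gives $P \narrow{\tau}$, and this is also needed to see that the zero-length $\tau$-move $Q \arrow{\tau}\!\!^* Q''$ is legitimate — but here no $\tau$ is required of $Q$ anyway, so $Q'' \equiv Q$ works. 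The real obstacle is the second case, $Q \arrow{\tau}$: here we must show that, nonetheless, $P \arrow{\delta}$ can be matched, i.e. that this case cannot actually arise for guarded $P,Q$. This is where guardedness enters.

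The key lemma I expect to need is: for a \emph{guarded} process $R$, if $R \arrow{\tau}$ then $R \narrow{\delta}$ — equivalently, a guarded process cannot offer both an initial $\tau$ and an initial (un-preempted) $\delta$. Actually the needed statement is subtler: from $(P,Q) \in \beta$, $P \arrow{\delta}$, and $Q \arrow{\tau}$, I want a contradiction. Using the $\alpha$-clauses of the displayed condition applied to $Q \arrow{\tau}$, we get $P \warrow{\hat\tau} P''$ with $(P'', Q'') \in \beta$ for the $\tau$-residual $Q''$ of $Q$; iterating along a maximal $\tau$-path of $Q$ (which terminates, since guarded processes have no $\tau$-cycles — this is the crucial use of guardedness) we reach $Q_0 \notarrow{\tau}$ with some $P_0$, $P \warrow{\hat\tau} P_0$, $(P_0, Q_0) \in \beta$; and since $P \arrow{\delta}$, by maximal progress $P \narrow{\tau}$, so in fact $P_0 \equiv P$ and $(P, Q_0) \in \beta$ with $Q_0 \notarrow{\tau}$. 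Now the displayed $\delta$-clause applied to $(P,Q_0)$ and $P \arrow{\delta} P'$ can only take its second disjunct (since $Q_0 \notarrow{\tau}$), giving $Q_0 \arrow{\delta}\!\!\arrow{\tau}\!\!^* Q'$, and then $Q \arrow{\tau}\!\!^* Q_0 \arrow{\delta}\!\!\arrow{\tau}\!\!^* Q'$ provides the witness $Q'' \equiv Q_0$ for Proposition~\ref{reformulation}. So the termination of $\tau$-chains for guarded processes is the heart of the matter; I would isolate it as a small auxiliary fact (proved by the standard argument that a guarded recursion unfolds to a weakly guarded term and weakly guarded processes admit no $\tau$-loops, or directly by the finite-state structure of $\calp$-terms), and the rest is bookkeeping with the already-established reformulations.
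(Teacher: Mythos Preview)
Your proposal is correct and follows essentially the same route as the paper: both parts reduce to Proposition~\ref{reformulation}, and for the converse under guardedness you iterate along a maximal $\tau$-path of $Q$ (using the $\alpha$-clause with $\alpha=\tau$ and maximal progress to keep $P$ fixed) until reaching a $\tau$-stable $Q_0$, then apply the $\delta$-clause there to obtain the witness $Q'' \equiv Q_0$. Two small expository points: your remark that the case $Q \arrow{\tau}$ ``cannot actually arise'' is misleading (it can, and you then correctly handle it rather than derive a contradiction), and the termination of the $\tau$-iteration needs both the absence of $\tau$-cycles \emph{and} finite-stateness of guarded processes in $\calp$, as the paper notes --- your final paragraph hints at this, but the main argument should say so explicitly.
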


\begin{proof}
We start by proving the first statement of the proposition.
Suppose $P \arrow{\delta} P'$, it is just sufficient to observe that: if
$Q \notarrow{\tau}$ then it must be $Q'' \equiv Q$ in the condition of
Proposition~\ref{reformulation}. Symmetrically for $Q \arrow{\delta} Q'$.

Concerning the second statement, if $\beta$ is a weak bisimulation then the 
condition above is satisfied as for the first statement.
Conversely, suppose that $\beta$ satisfies the condition above; 
we now prove that it satisfies the condition of Proposition~\ref{reformulation}. 
If $P \arrow{\delta}$ then either $Q \notarrow{\tau}$ and we are done
(in this case $Q'' \equiv Q$ in the condition of
Proposition~\ref{reformulation} hence the two conditions coincide), 
or $Q \arrow{\tau} Q'$ for some $Q'$. In the latter case it must be $(P,Q') \in
\beta$ (because of the priority constraints, $P$ can only match the $\tau$ move of $Q'$ by a zero length move). Then the same argument is repeated for the pair
$(P,Q')$ and so on\dots until a $\arrow{\tau}\!\!^* Q^n$ derivative of $Q$ is reached 
such that $Q^n \notarrow{\tau}$ and we are done. The guardedness constraint guarantees that such a derivative is always reached: the labeled transition system generated by a guarded process $\calp$ is such that every $\tau$ path eventually reaches a state with no outgoing $\tau$ transitions (since it is finite-state and no $\tau$ loops can be generated with just strongly guarded recursion).
Symmetrically for $Q \arrow{\delta}$.
\end{proof}

A simple counterexample that shows why the guardedness constraint for processes
is needed to obtain the ``iff'' in Proposition~\ref{necessary} is the following: 
$\beta = \{rec X. \tau . X, \delta.\nil\}$ satisfies the condition of the theorem but obviously 
it is not a 
weak bisimulation.

\begin{prop}\label{refcongr}
Two processes $P$, $Q$ are observationally congruent ($P \simeq Q$), 
iff:

\begin{itemize}
\item If $P \arrow{\alpha} P'$ then, for some $Q'$, $Q \warrow{\alpha} Q'$
and $P' \approx Q'$.
\item If $P \arrow{\delta} P'$ then, for some $Q'$, 
$Q \arrow{\delta}\!\! \arrow{\tau}\!\!^* \;Q'$
and $P' \approx Q'$.
\item If $Q \arrow{\alpha} Q'$ then, for some $P'$, $P \warrow{\alpha} P'$
and $P' \approx Q'$.
\item If $Q \arrow{\delta} Q'$ then, for some $P'$, 
$P \arrow{\delta}\!\! \arrow{\tau}\!\!^* \;P'$
and $P' \approx Q'$.
\end{itemize}
\end{prop}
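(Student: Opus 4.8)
The plan is to derive Proposition~\ref{refcongr} directly from Definition~\ref{basicobscongr}, the only real content being the strengthening of the $\delta$-matching clause from $Q \warrow{\delta} Q'$ to $Q \arrow{\delta}\!\! \arrow{\tau}\!\!^* Q'$ (and symmetrically for moves of $Q$). The ``if'' direction I would dispatch immediately: for $\alpha \in \ms{PAct}$ (including $\tau$) the strengthened clause literally coincides with the corresponding clause of Definition~\ref{basicobscongr}, and $Q \arrow{\delta}\!\! \arrow{\tau}\!\!^* Q'$ is a particular case of $Q \warrow{\delta} Q'$, so the strengthened conditions entail $P \simeq Q$.

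For the ``only if'' direction I would assume $P \simeq Q$ in the sense of Definition~\ref{basicobscongr}. The clauses for actions in $\ms{PAct}$ carry over verbatim, so all the work is in the $\delta$ clause. Suppose $P \arrow{\delta} P'$. By maximal progress (Proposition~\ref{maxprog}) we have $P \narrow{\tau}$, hence by Corollary~\ref{cor} also $Q \narrow{\tau}$. Applying Definition~\ref{basicobscongr} to the move $P \arrow{\delta} P'$ yields some $Q'$ with $Q \warrow{\delta} Q'$ and $P' \approx Q'$; writing $Q \warrow{\delta} Q'$ as $Q \arrow{\tau}\!\!^* \arrow{\delta} \arrow{\tau}\!\!^* Q'$ and using $Q \narrow{\tau}$ to kill the leading $\arrow{\tau}\!\!^*$ segment gives $Q \arrow{\delta}\!\! \arrow{\tau}\!\!^* Q'$, which is exactly the strengthened clause. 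The case of a $\delta$ move of $Q$ is handled symmetrically.

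The only subtlety worth flagging is that maximal progress cannot be applied to $Q$ directly --- that would require already knowing that $Q$ performs a $\delta$ transition, which is not given --- so the property $P \narrow{\tau}$ must instead be transferred to $Q$ via Corollary~\ref{cor}. Once that observation is in place the argument is a one-line case analysis, with no induction and no computation, and I would keep the final write-up essentially as terse as this sketch.
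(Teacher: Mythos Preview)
Your proposal is correct and follows essentially the same route as the paper: the ``if'' direction is immediate, and for the ``only if'' direction the paper also argues that $P \arrow{\delta}$ forces $Q \notarrow{\tau}$ (phrased there as a direct contradiction argument from $P \simeq Q$, which is exactly the content of Corollary~\ref{cor}), after which the leading $\tau$-segment in $Q \warrow{\delta} Q'$ collapses. Your explicit remark that maximal progress cannot be applied to $Q$ directly, and that one must transfer $P \notarrow{\tau}$ via Corollary~\ref{cor}, is a helpful clarification the paper leaves implicit.
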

\begin{proof}
If $P,Q$ satisfy such a stricter condition then obviously $P \simeq
Q$.
Conversely, suppose that $P \simeq
Q$; we now prove that they satisfy the condition above.
Suppose $P \arrow{\delta} P'$, due to the priority constraints,
we must have that $Q \notarrow{\tau}$ (otherwise, by contradiction,
being $P \simeq Q$, also $P$ should perfrom a $\tau$ move), therefore
the condition above directly derives from the definition of observational congruence.
Symmetrically for $Q \arrow{\delta} Q'$.
\end{proof}

\subsection{Axiomatization}
\label{SSectAxiomBC}

We now present an axiomatization of $\simeq$ which is complete
over processes $P \in \calp$ of our algebra.

As we already explained in the introduction, the  
law of ordinary CCS which makes it possible escape $\tau$ divergence:
$$\ms{rec} X. (\tau.X + E) = \ms{rec} X. \tau. E$$
is not sound in a calculus with this kind of priority. 
Consider for instance the divergent term
$F \equiv \ms{rec} X. (\tau.X + \delta . \nil)$. 
Because of priority of ``$\tau$''
actions over ``$\delta$'' actions the operational semantics of  
$F$ is isomorphic to that of $\ms{rec} X. \tau.X$. Hence $F$ is an 
infinitely looping term which can never escape from divergence by executing 
the action ``$\delta$''. If the cited law were sound, 
we would obtain $F = \ms{rec} X. \tau. \delta . \nil$ and this is 
certainly not the case. 

%
        {\begin{table}[t]

{ \[
\begin{array}{|c|} 
\hline 
\begin{array}{c}
\hspace {.5cm} \infr{P \arrow{\alpha} P'}{\ms{pri}(P) \arrow{\alpha} P' } \hspace {.5cm} \\[0.6cm]
\end{array} \\
\hline
\end{array}\]}

\caption{Rule for Auxiliary Pri Operator.}\label{PriRule}

        \end{table}}

In general the behavior of $E'$ such that $\ms{rec} X. (\tau.X + E)=\ms{rec} X. \tau. E' $ is obtained from that of $E$ by removing all 
``$\delta$'' actions (and subsequent behaviors) performable in $E$.  
We denote such $E'$, representing the ``prioritized'' behavior of $E$, 
with $\ms{pri}(E)$. The operational semantics of the auxiliary
operator $\ms{pri}$ is simply that in Table \ref{PriRule}. 
As we discussed in the introduction, this auxiliary operator is crucial for being able
to axiomatize the priority of $\tau$ actions over $\delta$ actions in that it allows us to represent
a {\it scope} to which priority is applied.


%
The axiomatization of ``$\simeq$'' we propose is made over the set of terms
$\cale_{pri}$, generated by extending the syntax
to include the new operator $\ms{pri}(E)$. 

We start by noting that the congruence property of ``$\simeq$'' trivially extends to the new 
operator.
			
\begin{thm}
$\simeq$ is a congruence w.r.t.\ the new operator $\ms{pri}(E)$.
\end{thm}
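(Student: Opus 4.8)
The plan is to prove that $P \simeq Q$ implies $\ms{pri}(P) \simeq \ms{pri}(Q)$ by working directly with the reformulated characterization of observational congruence given in Proposition~\ref{refcongr}, exploiting the fact that the operational rule for $\ms{pri}$ in Table~\ref{PriRule} simply forwards the prioritized ($\alpha$) transitions of the argument and discards the $\delta$ transitions. The key observation is that $\ms{pri}(P) \arrow{\gamma} R$ can only happen when $\gamma = \alpha \in \ms{PAct}$ and $P \arrow{\alpha} R$; there are no $\delta$ transitions out of $\ms{pri}(P)$, so the two $\delta$-clauses of Proposition~\ref{refcongr} are vacuous for $\ms{pri}(P)$ and $\ms{pri}(Q)$.

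First I would fix $P \simeq Q$ and aim to verify the (first) $\alpha$-clause of Proposition~\ref{refcongr} for the pair $\ms{pri}(P), \ms{pri}(Q)$. Suppose $\ms{pri}(P) \arrow{\alpha} P'$. By the only rule for $\ms{pri}$, this means $P \arrow{\alpha} P'$. Since $P \simeq Q$, by Proposition~\ref{refcongr} there is $Q'$ with $Q \warrow{\alpha} Q'$ and $P' \approx Q'$. Now $Q \warrow{\alpha} Q'$ means $Q \sarrow{\tau}\!\!^* \sarrow{\alpha} \sarrow{\tau}\!\!^* Q'$; but every $\tau$ and $\alpha$ transition of $Q$ is available (with the same target) from $\ms{pri}(Q)$'s first step only if it has not yet passed through $\ms{pri}$ — here I need to be careful: the $\ms{pri}$ wrapper is only around the \emph{top} term, so after the first transition the wrapper is gone. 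In fact $\ms{pri}(Q) \arrow{\tau}\!\!^* \arrow{\alpha} \arrow{\tau}\!\!^* Q'$ follows directly: the first $\tau$ (or $\alpha$, if the weak sequence starts with the visible action) transition of $Q$ lifts through the $\ms{pri}$ rule to a transition of $\ms{pri}(Q)$ with the same derivative, and the remaining transitions are transitions of that derivative, which no longer carries a $\ms{pri}$. Hence $\ms{pri}(Q) \warrow{\alpha} Q'$ with $P' \approx Q'$, which is exactly what the clause requires.

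The symmetric clause (a move of $\ms{pri}(Q)$ matched by $\ms{pri}(P)$) is handled identically by the symmetry of $\simeq$. The two $\delta$-clauses hold trivially because neither $\ms{pri}(P)$ nor $\ms{pri}(Q)$ has any outgoing $\delta$ transition (the $\ms{pri}$ rule only propagates $\alpha \in \ms{PAct}$ labels). Applying Proposition~\ref{refcongr} in the other direction, we conclude $\ms{pri}(P) \simeq \ms{pri}(Q)$. To extend this to open terms as in Definition~\ref{obscongruence} — since $\cale_{pri}$ is the relevant term algebra — I note that $\ms{pri}$ commutes with substitution ($\ms{pri}(E)\{\tilde P/\tilde X\} \equiv \ms{pri}(E\{\tilde P/\tilde X\})$), so congruence for closed instances lifts to the open case directly.

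I do not expect any genuine obstacle here; this is the "trivial" case mentioned in the text, and the mild subtlety is merely the bookkeeping in the previous paragraph, namely checking that a weak transition sequence $Q \warrow{\alpha} Q'$ of the argument lifts to $\ms{pri}(Q) \warrow{\alpha} Q'$ — which holds precisely because $\ms{pri}$ only wraps the outermost term and the wrapper is consumed on the first step. Everything else is an immediate application of Proposition~\ref{refcongr} together with the shape of the rule in Table~\ref{PriRule}.
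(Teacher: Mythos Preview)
Your proof is correct and follows essentially the same approach as the paper's: both observe that $\ms{pri}(P)$ has no $\delta$ transitions, that any $\alpha$-move of $\ms{pri}(P)$ comes from an $\alpha$-move of $P$, and that the matching weak $\alpha$-move of $Q$ lifts to $\ms{pri}(Q)$ because its first step is a non-$\delta$ transition that passes through the $\ms{pri}$ rule. Your version is slightly more explicit about the lifting bookkeeping and the extension to open terms, but the argument is the same.
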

\begin{proof}
Let us suppose $P \simeq Q$.
If $\ms{pri}(P) \arrow{\gamma} P'$ then, according to the semantics of Table~\ref{PriRule}, it must be 
$\gamma \neq \delta$ and $P \arrow{\gamma} P'$. Hence $Q \warrow{\gamma} Q'$
with $P' \approx Q'$. Thus, being $\gamma \neq \delta$, $\ms{pri}(Q) \warrow{\gamma} Q'$.
Symmetrically for transitions of $\ms{pri}(Q)$.
\end{proof}

We adopt the following notion of {\it serial} variable, which is used
in the axiomatization.

\begin{defi}
$X$ is {\it serial} in $E \in \cale_{pri}$ if every subexpression of $E$ 
which contains a free occurrence of $X$, apart from $X$ itself, is of the form $\gamma . F$,
$F' + F''$ or $\ms{rec}Y.F$ for some variable $Y$. 

\end{defi}

        { \begin{table}[t]

{\[\begin{array}{|l|}
\hline
\begin{array}{lrcl}
(A1) \;\;\;\;\, & E + F & = & F + E \hspace{2.3cm} \\[0cm]
(A2) & (E + F) + G & = & E + (F + G) \\[0cm]
(A3) & E + E & = & E \\[0cm]
(A4) & E + \nil & = & E \\[.0cm]
\end{array} \\ 
\hline
\begin{array}{lrcl}
(Tau1) \; & \gamma . \tau . E & = & \gamma . E \\[0cm]
(Tau2) & E + \tau . E & = & \tau . E \\[0cm]
(Tau3) & \gamma . ( E + \tau . F ) + \gamma . F & = & \gamma . ( E + \tau . F )
\\[.0cm]
\end{array} \\ 
\hline
\begin{array}{lrcl}
(Pri1) \; & \ms{pri}(\nil) & = & \nil \\[0cm]
(Pri2) & \ms{pri}(\alpha . E) & = & \alpha . E \\[0cm]
(Pri3) & \ms{pri}(\delta . E) & = & \nil \\[0cm]
(Pri4) & \ms{pri}(E + F) & = & \ms{pri}(E) + \ms{pri}(F) \\[0cm]
(Pri5) & \ms{pri}(\ms{pri}(E)) & = & \ms{pri}(E) \\[0cm]
(Pri6) & \tau . E + F & = & \tau . E + \ms{pri}(F) \\[.0cm]
\end{array} \\
\hline
\begin{array}{lrcll}
(Rec1) & \ms{rec}X . E & = & E \{ \ms{rec}X . E / X \} & \\[0cm]
(Rec2) & F = E \{ F / X \} & \Rightarrow & F = \ms{rec}X . E &
\; \rm{provided\;that}\;X\;\rm{is\;serial\;and} \\[0cm]
& & & & \tabspace{\; \rm{provided}\,} \; \rm{strongly\;guarded\;in}\; E 
\\[.0cm]
\end{array} \\ 
\hline
\begin{array}{lrcl}
(Ung1) & \ms{rec}X . (X + E) & = & \ms{rec}X . E \\[0cm]
(Ung2) & \ms{rec}X . (\tau . X + E) & = & \ms{rec}X . 
( \tau . \ms{pri} (E) ) \\[0cm]
(Ung3) & \ms{rec}X . (\tau . (X + E) + F) & = & \ms{rec}X . (\tau . X
+ E + F) \\[0cm]
(Ung4) & \ms{rec}X . (\tau . (\ms{pri}(X) + E) + F) & = & \ms{rec}X . (\tau . X
+ E + F) \\[.0cm]
\end{array} \\
\hline
\end{array}\]}

\caption{Axiom system $\cala$}\label{Axioms}

        \end{table}}

The axiom system $\cala$ is formed by the axioms presented in 
Table~\ref{Axioms}.
The axiom $(Pri6)$ expresses the priority of $\tau$ actions over 
$\delta$ actions.
Note that from $(Pri6)$ we can derive $\tau . E + \delta . E = \tau . E$
by applying $(Pri3)$.
The axioms $(Rec1)$, $(Rec2)$ handle
strongly guarded recursion in the standard way~\cite{Mil89}.
The axioms $(Ung1)$ and $(Ung2)$ are used to turn unguarded terms into 
strongly guarded ones similarly as in~\cite{Mil89}. The axiom
$(Ung3)$ and the new axiom $(Ung4)$ are used to transform weakly
guarded recursions into the form required by the axiom $(Ung2)$, so
that they can be turned into strongly guarded ones. In particular the
role of axiom $(Ung4)$ is to remove unnecessary occurrences of 
terms $\ms{pri}(X)$ in weakly guarded recursions.
In the following, when transforming terms via axiom applications, we will often omit mentioning 
axioms  $(A1) - (A4)$, which are just used to rearrange arguments and eliminate duplicates and $\nil$ in sums.

\subsection{Soundness}

We start by showing that the axiom system $\cala$ is sound.

\begin{thm}~\label{soundness}
Given $E, F \in \cale_{pri}$, if $\cala \vdash E = F$ then $E \simeq F$.
\end{thm}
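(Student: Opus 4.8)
The plan is to prove soundness by establishing that each axiom of $\cala$, read as a pair of (possibly open) terms, relates observationally congruent terms, and then invoke the fact — already proved in the excerpt — that $\simeq$ is a congruence with respect to all the operators of $\cale_{pri}$ (prefix, choice, recursion, and $\ms{pri}$). Since $\cala \vdash E = F$ means $E=F$ is derivable from the axioms by reflexivity, symmetry, transitivity, and substitution under contexts, and since $\simeq$ is an equivalence relation closed under all these operations, it suffices to check soundness of each single axiom instance. For open terms, by Definition~\ref{obscongruence} this further reduces to checking, for each axiom $E = F$ and each closing substitution $\tilde P/\tilde X$, that $E\{\tilde P/\tilde X\} \simeq F\{\tilde P/\tilde X\}$; in practice the verification is uniform in the substituted processes, so I will just argue at the level of the (schematic) terms.

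First I would dispatch the easy groups. Axioms $(A1)$--$(A4)$ are sound already for strong bisimulation, hence a fortiori for $\simeq$. Axioms $(Tau1)$--$(Tau3)$ are Milner's standard $\tau$-laws; their soundness for $\simeq$ is classical and unaffected by the $\delta$ machinery, since $\delta$ is never introduced or removed by them — one exhibits the usual weak bisimulations witnessing $\approx$ on the relevant subterms and checks the root condition. Axioms $(Pri1)$--$(Pri5)$ follow directly from the operational rule for $\ms{pri}$ in Table~\ref{PriRule}: $\ms{pri}(P)$ has exactly the non-$\delta$ transitions of $P$ leading to the same (unmodified) derivatives, so $\ms{pri}(\nil)$ is deadlocked, $\ms{pri}(\alpha.E)$ and $\alpha.E$ have identical transition systems, $\ms{pri}(\delta.E)$ is deadlocked, $\ms{pri}(E+F)$ and $\ms{pri}(E)+\ms{pri}(F)$ have the same transitions (note no $\delta$ can fire under $\ms{pri}$, so the negative premises in Table~\ref{DeltaRules} are vacuous here), and $\ms{pri}(\ms{pri}(E))$ equals $\ms{pri}(E)$. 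Axiom $(Rec1)$ is immediate from the recursion rule in Table~\ref{StRules}, and $(Rec2)$ is the standard unique-solution principle for strongly guarded, serial recursion, whose proof (building a bisimulation up to $\simeq$ from the hypothesis $F = E\{F/X\}$) goes through exactly as in~\cite{Mil89}, the guardedness ensuring no new $\tau$-loops.

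The genuinely new points are $(Pri6)$ and the $(Ung)$ axioms. For $(Pri6)$, $\tau.E + F = \tau.E + \ms{pri}(F)$: on the left the $\delta$-transitions of $F$ are pre-empted by the $\tau$ from $\tau.E$ (rule in Table~\ref{DeltaRules}, since $\tau.E \arrow{\tau}$), so the only surviving transitions of $\tau.E + F$ are the $\tau$ to $E$ and the non-$\delta$ transitions of $F$ — which are precisely the transitions of $\tau.E + \ms{pri}(F)$, where again the $\delta$'s are pre-empted identically; hence the two terms are in fact strongly bisimilar, and the root condition is trivially met. For $(Ung1)$--$(Ung4)$ I would argue that both sides denote the same behaviour by exhibiting an explicit (weak) bisimulation between the unfoldings and checking the root condition, using Proposition~\ref{refcongr}. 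The key case is $(Ung2)$, $\ms{rec}X.(\tau.X+E) = \ms{rec}X.(\tau.\ms{pri}(E))$: on the left, by maximal progress (Proposition~\ref{maxprog}) the $\delta$-transitions of $E$ are pre-empted by the $\tau$ self-loop, so the reachable behaviour is: loop on $\tau$, or perform a non-$\delta$ transition of $E$ and proceed as that derivative of $E$ (unmodified) — this matches the right-hand side, where after the $\tau$ we reach $\ms{pri}(E)$, whose transitions are exactly the non-$\delta$ transitions of $E$ to the same derivatives; the surrounding $\ms{rec}X$ on the right is vacuous since $X$ does not occur in $\tau.\ms{pri}(E)$. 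One must be slightly careful that the derivatives reached from $E$ on both sides are literally the same processes (the $\ms{pri}$ does not propagate into them), which is exactly what Table~\ref{PriRule} gives. Axioms $(Ung3)$ and $(Ung4)$ are handled by the same method — unfold once and compare reachable transition systems, observing that $X + E$ versus $X$ under a guarding $\tau$, and $\ms{pri}(X)$ versus $X$ under a guarding $\tau$, make no difference once the surrounding recursion is unfolded, because the $X$-occurrence is reinstated by $(Rec1)$-style unfolding and $\ms{pri}$ is idempotent-absorbed by the $\tau$-context via $(Pri6)$-type reasoning.

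The main obstacle is the bookkeeping in the $(Ung)$ cases: one has to set up the bisimulation relation between the two recursive terms carefully enough to track, simultaneously, (i) the pre-emption of $\delta$'s induced by the unguarded $\tau$, (ii) the fact that $\ms{pri}$ appears only at the ``top'' of the relevant subterm and not in its derivatives, and (iii) the root-condition check in the form of Proposition~\ref{refcongr} (a $\delta$ on one side must be matched by $\arrow{\delta}\arrow{\tau}^*$ on the other) — here one typically argues that in these axioms no unpre-empted $\delta$ is enabled at the root at all, so the $\delta$-clause is vacuous. I expect the cleanest organization is a lemma stating that for any $E$, $\ms{rec}X.(\tau.X+E)$ is strongly bisimilar to $\tau.\ms{pri}(E) + (\text{the }\tau\text{-self-loop})$, i.e. to $\ms{rec}X.(\tau.X + \ms{pri}(E))$, from which $(Ung2)$ follows together with the observation that $\ms{pri}(E)$ has no $\delta$-transitions so the self-loop pre-empts nothing further; combined with the already-established congruence of $\simeq$, soundness of the whole system follows.
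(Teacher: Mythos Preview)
Your overall strategy --- check each axiom, then invoke congruence --- is the same as the paper's, and your treatment of $(A1)$--$(A4)$, $(Tau1)$--$(Tau3)$, $(Pri1)$--$(Pri6)$, $(Rec1)$, $(Rec2)$ is fine. The gap is in your handling of $(Ung2)$ (and by extension $(Ung3)$, $(Ung4)$).

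You write that in $\ms{rec}X.(\tau.\ms{pri}(E))$ ``the surrounding $\ms{rec}X$ on the right is vacuous since $X$ does not occur in $\tau.\ms{pri}(E)$.'' This is false: the axiom is schematic in $E$, and $E$ may contain $X$ free. So the right-hand side is a genuine recursion, and consequently the derivatives reached after the first step on the two sides are \emph{not} literally the same processes, contrary to your next claim. On the left, after the $\tau$ self-move you land back at $\ms{rec}X.(\tau.X+E)$, and after an $\alpha$-move of $E$ you land in some derivative where each free $X$ has been replaced by $\ms{rec}X.(\tau.X+E)$; on the right, after $\tau$ you land at $\ms{pri}(E)\{\ms{rec}X.\tau.\ms{pri}(E)/X\}$, and after an $\alpha$-move from there you land in a derivative where each free $X$ has been replaced by $\ms{rec}X.\tau.\ms{pri}(E)$. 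These are syntactically different terms, and relating them is exactly where the work lies.

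What the paper does --- and what your sketch is missing --- is to build the bisimulation not between the two recursive terms alone, but between \emph{all} pairs of the form $(G\{\ms{rec}X.(\tau.X+E)/X\},\,G\{\ms{rec}X.\tau.\ms{pri}(E)/X\})$ for arbitrary $G$ with at most $X$ free, together with one extra pair matching $\ms{rec}X.(\tau.X+E)$ against $\ms{pri}(E)\{\ms{rec}X.\tau.\ms{pri}(E)/X\}$ (the state reached after the right-hand $\tau$). The proof then goes by induction on inference depth, first for $\alpha$-transitions and then for $\delta$-transitions, the latter requiring the already-established $\alpha$-case to show that pre-emption is preserved in the $G\equiv G'+G''$ subcase. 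The same pattern, with a different extra pair, handles $(Ung4)$. Your informal ``unfold once and compare'' does not track the differing substitutions into the free occurrences of $X$ inside $E$ and $F$, so as stated it does not go through.
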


\begin{proof}
We prove the soundness of the new laws $(Ung2)$ and $(Ung4)$. 
For the other laws $(Ung1)$ and $(Ung3)$ the proof is a similar 
adaptation of the standard one. 
The soundness of the laws $(Rec1)$ and $(Rec2)$ is shown 
as in~\cite{Mil} (version corrected according to~\cite{SM} concerning the structure of the weak bisimulation up to $\approx$ to be considered) by assuming transition labels to also encompass the $\delta$ action. 


The soundness of the new equations $(Pri1)-(Pri6)$ easily derives
from the fact that the semantic models of left-hand and right-hand 
terms are isomorphic. 

\vspace{.5cm}


\noindent {\bf Soundness of $(Ung2)$}$\;$
In order to prove the soundness of $(Ung2)$ we show that: \\[.2cm]
\hspace*{.5cm} $\beta = \{ \, (G \{ \ms{rec}X.(\tau.X + E) / X \}, G \{ \ms{rec}X.
( \tau .  \ms{pri} (E) ) / X \}) \mid 
G \; {\rm contains \; at}$ \\[.1cm]
\hspace*{.5cm} $\tabspace{\beta =,} 
{\rm most} \; X \; {\rm free} \, \} \, \cup
\{ \,
(
\ms{rec}X.(\tau.X + E) , 
\ms{pri} (E) \{ \ms{rec}X. ( \tau .  \ms{pri} (E) ) / X \}
) 
\, \}$ \\[.2cm]
is a weak bisimulation. From this result it straightforwardly follows
that $\ms{rec}X.(\tau.X + E) \simeq \ms{rec}X. ( \tau .  \ms{pri} (E) )$.

We start the proof that $\beta$ is a weak bisimulation by considering all the pairs
$(G \{ \ms{rec}X.(\tau.X + E) / X \}, G \{ \ms{rec}X. ( \tau .  \ms{pri} (E) ) 
/ X \})$ such that $G$ contains at most $X$ free. In particular we prove
that $\beta$ satisfies the following stronger condition \\[.2cm]
\hspace*{1cm} if $G \{ \ms{rec}X.(\tau.X + E) / X \} \arrow{\gamma} H$ then, 
for some $H'$, \\[.1cm]
\hspace*{5cm}
$G \{ \ms{rec}X.( \tau . \ms{pri} (E) ) / X \} \warrow{\gamma} H'$ with $(H,H') \in 
\beta$, \\[.1cm]
\hspace*{1cm} and symmetrically for a move of $G \{ \ms{rec}X.
( \tau . \ms{pri} (E) ) / X \}$ \\
We begin proving it by induction on the depth of the
inference by which transitions $G \{ \ms{rec}X.(\tau.X + E) / X \} \arrow{\alpha} H$ 
are inferred. 
We have the following cases depending on
the structure of $G$.

\begin{itemize}
\item If $G \equiv \nil$, then the condition above obviously holds.
\item If $G \equiv X$, then $G \{ \ms{rec}X.(\tau.X + E) / X \} \equiv 
\ms{rec}X.(\tau.X + E)$ and $G \{ \ms{rec}X. ( \tau .$ 
$  \ms{pri} (E) ) 
/ X \} \equiv \ms{rec}X. ( \tau .  \ms{pri} (E) )$. \\
Suppose $\ms{rec}X.(\tau.X + E) \arrow{\alpha} H$, we have two cases:
\begin{itemize}
\item If $\ms{rec}X.(\tau.X + E) \arrow{\tau} H \equiv \ms{rec}X.(\tau.X + E)$, then
$\ms{rec}X. ( \tau . \ms{pri} (E) ) \arrow{\tau} H' \equiv \ms{pri} (E)$ \linebreak
$\{ \ms{rec}X. ( \tau .  \ms{pri} (E) )
/ X ) \}$ with $(H,H')$.
\item If $\ms{rec}X.(\tau.X + E) \arrow{\alpha} H$ with
$E \{ \ms{rec}X.(\tau.X + E) / X \} \arrow{\alpha} H$, then, by induction,
we derive that, for some $H'$, 
$E \{ \ms{rec}X.( \tau . \ms{pri} (E) ) / X \} \warrow{\alpha} H'$ with $(H,H') \in 
\beta$. Hence, $\ms{rec}X.( \tau . \ms{pri} (E) ) \arrow{\tau} \warrow{\alpha}
H'$.
\end{itemize}

\item If $G \equiv \alpha.G'$, then $G \{ \ms{rec}X.(\tau.X + E) / X \} \equiv 
\alpha . (G' \{ \ms{rec}X.(\tau.X + E) / X \})$ and
$G \{ \ms{rec}X.(\tau.$ \linebreak $\ms{pri} (E) )
/ X \} \equiv 
\alpha . (G' \{ \ms{rec}X.(\tau. \ms{pri} (E) ) / X \})$. The result trivially follows.

\item If $G \equiv G' + G''$, then $G \{ \ms{rec}X.(\tau.X + E) / X \} \equiv
G' \{ \ms{rec}X.(\tau.X + E) / X \} + G'' \{ \ms{rec}X.(\tau.X + E) / X \}$ and
$G \{ \ms{rec}X.(\tau. \ms{pri} (E) ) / X \} \equiv 
G' \{ \ms{rec}X.(\tau. \ms{pri} (E) ) 
/ X \} + 
G'' \{ \ms{rec}X.(\tau. \ms{pri} (E) ) / X \}$. \\
Suppose $G' \{ \ms{rec}X.(\tau.X + E) / X \} + G'' \{ \ms{rec}X.(\tau.X + E) / X \} 
\arrow{\alpha} H$, we have two cases.
\begin{itemize}
\item If $G' \{ \ms{rec}X.(\tau.X + E) / X \} \arrow{\alpha} H$, then
(by induction) for some $H'$, 
$G' \{ \ms{rec}X. (\tau. \ms{pri} (E) )$ \linebreak 
$/ X \} 
\warrow{\alpha} H'$ and
$(H,H') \in \beta$. Therefore $G' \{ \ms{rec}X.(\tau. 
\ms{pri} (E) ) / X \} + 
G'' \{ \ms{rec}X.(\tau. \ms{pri} (E) )$ \linebreak 
$/ X \} \warrow{\alpha} H'$.
\item If $G'' \{ \ms{rec}X.(\tau.X + E) / X \} \arrow{\alpha} H$, then
the result is derived in a similar way.
\end{itemize}

\item If $G \equiv \ms{rec}Y.G'$ then $G \{ \ms{rec}X.(\tau.X + E) / X \} 
\equiv \ms{rec}Y. G' \{ \ms{rec}X.(\tau.X + E) / X \}$ and \\
$G \{ \ms{rec}X.( \tau.
\ms{pri} (E) ) / X \} \equiv 
\ms{rec}Y. G' \{ \ms{rec}X.( \tau. \ms{pri} (E) ) / X \}$. \\
$\ms{rec}Y. G' \{ \ms{rec}X.(\tau.X + E) / X \} \arrow{\alpha} H$ implies 
$(G' \{ \ms{rec}X.(\tau.X + E) / X \})
\{ \ms{rec}Y. 
G' \{ \ms{rec}X.(\tau.X + E) / X \} / Y \} \equiv
(G' \{ {rec}Y. G' / Y \}) \{ \ms{rec}X.(\tau.X + E) / X \})
\arrow{\alpha}$ 
$H$.
By induction we have 
$(G' \{ {rec}Y. G' / Y \}) \{ \ms{rec}X.( \tau. \ms{pri} (E) ) / X \}
\equiv 
(G' \{ \ms{rec}X.( \tau. \ms{pri} (E) ) / X \})
\{ \ms{rec}Y. G' \{ \ms{rec}X. 
( \tau. $ \linebreak
$\ms{pri} (E) ) / X \}
\warrow{\alpha} H'$ with $(H, 
H') \in \beta$. 
Therefore $\ms{rec}Y. G' \{ \ms{rec}X. 
( \tau. \ms{pri} (E) ) / X \} 
\warrow{\alpha} H'$
with
$(H, H') 
\in \beta$. \\

\end{itemize}
A symmetric inductive proof is performed when we start from a transition 
$G \{ \ms{rec}X.(\tau.X + E) / X \} \arrow{\alpha} H$ in the condition above. The only exception to 
symmetry is the case of $G \equiv X$. The proof of this case follows. \\
If $\ms{rec}X.(\tau. \ms{pri} (E) ) \arrow{\alpha} H$, we have
$\alpha = \tau$ and $H \equiv \ms{pri} (E) \{ \ms{rec}X. 
(\tau. \ms{pri} (E) )
/ X \}$. In this situation $\ms{rec}X.(\tau.X + E) \arrow{\tau}
\ms{rec}X.(\tau.X + E)$.

An identical inductive proof is performed when we start from a transition $G \{ \ms{rec}X.(\tau.X + E) / X \}
\arrow{\delta} H$ in the condition above (we just have to consider $\delta$ instead of a generic $\alpha$), 
apart from the cases $G \equiv X$ and $G \equiv G' + G''$.
In the case $G \equiv X$, due to the priority constraints, no $\delta$ transitions 
can be performed, so the condition obviously holds. In the case $G \equiv G' + G''$, 
to derive the final statement $G' \{ \ms{rec}X.(\tau. \ms{pri} (E) ) / X \} + 
G'' \{ \ms{rec}X.(\tau. \ms{pri} (E) ) / X \} \warrow{\delta} H'$ with $(H,H') \in \beta$, we have
to additionally prove that the $\delta$ transition is not pre-empted. In the first subcase this is
obtained as follows. It must be that 
$G'' \{ \ms{rec}X. (\tau.X + E) / X \} \notarrow{\tau}$, hence, 
since we already showed the condition for relation $\beta$ presented above (and its symmetric one) to hold for any $\alpha$ transition,
we have 
also that $G'' \{ \ms{rec}X. (\tau. \ms{pri} (E) )
/ X \} \notarrow{\tau}$.
The other subcase is analogous.

We conclude the proof that $\beta$ is a weak bisimulation by considering the pair:
$(\ms{rec}X.(\tau.X + E) , \ms{pri} (E) \{ \ms{rec}X. ( \tau .  \ms{pri} (E) ) / X \} )$.

Suppose $\ms{rec}X.(\tau.X + E) \arrow{\gamma} H$, first of all we note
that $\gamma \neq \delta$ for the priority constraints. We have the following
two cases.
\begin{itemize}
\item If $\ms{rec}X.(\tau.X + E) \arrow{\tau} H \equiv \ms{rec}X.(\tau.X + E)$, then
$\ms{pri} (E) \{ \ms{rec}X. ( \tau .  \ms{pri} (E) ) / X \}$ just makes no move. Denoting the latter term with $H'$, we have $(H,H')  \in \beta$.
\item If $\ms{rec}X.(\tau.X + E) \arrow{\alpha} H$ with $E \{ \ms{rec}X.(\tau.X + E) / X \} \arrow{\alpha} H$, 
then, by the first part of the proof,
we derive that, for some $H'$, 
$E \{ \ms{rec}X.( \tau . \ms{pri} (E) ) / X \} 
\warrow{\alpha} H'$ with $(H,H') \in 
\beta$. Hence, $\ms{pri} (E)
\{ \ms{rec}X. ( \tau .  \ms{pri} (E) ) / X \}
\warrow{\alpha}
H'$.
\end{itemize}

Vice-versa if $\ms{pri} (E) \{ \ms{rec}X. ( \tau .  \ms{pri} (E) ) / X \}
\arrow{\gamma} H$ (it must be $\gamma \neq \delta$ for the definition
of operator $\ms{pri}$) we have the following case.
\begin{itemize}
\item If $\ms{pri} (E) \{ \ms{rec}X. ( \tau .  \ms{pri} (E) ) / X \}
\arrow{\alpha} H$ with $E \{ \ms{rec}X. ( \tau .  \ms{pri} (E) ) / X \} \arrow{\alpha} H$, 
then, by the first part of the proof,
we derive that, for some $H'$, 
$E \{ \ms{rec}X.(\tau. X 
+ E) / X \} \warrow{\alpha} H'$ with $(H,H') \in 
\beta$. Hence, $\ms{rec}X.(\tau. X + E) \warrow{\alpha} H'$.
%
%

\end{itemize}

\vspace{.5cm}

\noindent {\bf Soundness of $(Ung4)$}$\;$
The proof that $(Ung4)$ is sound is similar. We show that: \\[.2cm]
$\beta = 
\{ (G \{ \ms{rec}X.(\tau.(\ms{pri}(X) + E) + F) / X \}, G \{ \ms{rec}X.
( \tau . X + E + F ) / X \}) \mid$ \\[.1cm]
$\tabspace{\beta =,} G \; {\rm contains \; at \; most} \; X \; {\rm free} \}
\cup 
\{ ( \ms{pri}(\ms{rec}X.(\tau.(\ms{pri}(X) + E) + F)) \, +$ \\[.1cm]
$\tabspace{\beta =,} 
E \{ \ms{rec}X.(\tau.(\ms{pri}(X) + E) + F) / X \} , 
\ms{rec}X.( \tau . X + E + F ) ) \}$ \\[.2cm]
is a weak bisimulation. From this result it straightforwardly follows
that $\ms{rec}X.(\tau.(\ms{pri}(X) + E) + F) \simeq 
\ms{rec}X.( \tau . X + E + F )$.

Again we start the proof that $\beta$ is a weak bisimulation by 
considering all the pairs
$(G \{ \ms{rec}X.(\tau. (\ms{pri}(X) + E) + F) / X \}, G \{ \ms{rec}X.
( \tau . X + E + F ) / X \})$
such that $G$ contains at most $X$ free. As we did for $(Ung2)$, 
we prove that $\beta$ satisfies the following stronger 
condition
\\[.2cm]
\hspace*{1cm} if $G \{ \ms{rec}X.(\tau.(\ms{pri}(X) + E) + F) / X \} 
\arrow{\gamma} H$ then, 
for some $H'$, \\[.1cm]
\hspace*{4.5cm}
$G \{ \ms{rec}X.( \tau . X + E + F ) / X \} 
\warrow{\gamma} H'$ with $(H,H') \in 
\beta$, \\[.1cm]
\hspace*{1cm} and symmetrically for a move of $G \{ \ms{rec}X.
( \tau . X + E + F ) / X \}$. \\
This is done, first, by induction on the depth of the
inference by which transitions of $G \{ \ms{rec}X.(\tau.
(\ms{pri}(X) + E) + F) / X \} 
\arrow{\alpha} H$ 
are inferred. 
We have the following cases depending on
the structure of $G$.

In the cases $G \equiv \nil$, $G \equiv \gamma . G'$, $G \equiv 
G' + G''$ and $G \equiv \ms{rec}Y.G'$ the proof is analogous to $(Ung2)$. \\
In the case $G \equiv X$ we have that 
$G \{ \ms{rec}X.(\tau.(\ms{pri}(X) + E) + F) / X \} \equiv 
\ms{rec}X.(\tau. 
(\ms{pri}(X) + E) + F)$ 
and 
$G \{ \ms{rec}X. ( \tau . X + E + F ) / X \} \equiv 
\ms{rec}X. ( \tau . X + E + F )$.
Suppose $\ms{rec}X. (\tau.(\ms{pri}(X) + E) + F) \arrow{\alpha} H$, 
we have two cases:
\begin{itemize}
\item if $\ms{rec}X. (\tau.(\ms{pri}(X) + E) + F) \arrow{\tau} 
H \equiv \ms{pri}(\ms{rec}X.(\tau.(\ms{pri}(X) + E) + F)) + E
\{ \ms{rec}X. 
(\tau.(\ms{pri}(X) + E) + F) / X 
\}$, then
$\ms{rec}X. ( \tau . X + E + F ) \arrow{\tau} 
H' \equiv \ms{rec}X. ( \tau . X + E + F )$  with $(H,H') \in  \beta$;

\item if $\ms{rec}X. (\tau.(\ms{pri}(X) + E) + F) \arrow{\alpha} H$ with
$F \{  \ms{rec}X. (\tau.(\ms{pri}(X) + E) + F) / X \} 
\arrow{\alpha} H$, then, by induction we
derive that, for some $H'$, $F \{ \ms{rec}X.( \tau . X + E + F ) / X \} 
\warrow{\alpha} H'$ with $(H,H') \in  \beta$. Hence,
$\ms{rec}X. ( \tau . X + E + F ) \warrow{\alpha}
H'$ with $(H,H') \in  \beta$.
\end{itemize}

A symmetric inductive proof is performed when we start from a transition of $G \{ \ms{rec}X.
( \tau . X + E + F ) / X \}
\arrow{\alpha} H$ in the condition above. The only exception to symmetry is the case of $G \equiv X$. The proof of this case follows. \\
Suppose $\ms{rec}X. (\tau. X + E + F ) \arrow{\alpha} H$,
we have three cases:
\begin{itemize}
\item if $\ms{rec}X. ( \tau . X + E + F ) \arrow{\tau} 
H \equiv \ms{rec}X. ( \tau . X + E + F )$, then 
$\ms{rec}X. (\tau.(\ms{pri}(X) + E) + F) \arrow{\tau}$\\
$ 
H' \equiv \{ ( \ms{pri}(\ms{rec}X.
(\tau.(\ms{pri}(X) + E) + F)) + E
\{ \ms{rec}X.(\tau.(\ms{pri}(X) + E) + F) / X \}$ with $(H,H') \! \in \! \beta$;

\item if $\ms{rec}X. ( \tau . X + E + F ) \arrow{\alpha} H$,
with
$E \{ \ms{rec}X. ( \tau . X + E + F ) / X \} \arrow{\alpha} H$, then, by induction we
derive that, for some $H'$, $E \{ \ms{rec}X. 
(\tau.(\ms{pri}(X) + E) + F) / X \} 
\warrow{\alpha} H'$ with $(H,H') \in  \beta$. Hence,
$\ms{rec}X. (\tau.(\ms{pri}(X) + E) + F) \arrow{\tau} \warrow{\alpha}
H'$ with $(H,H') \in  \beta$.

\item if $\ms{rec}X. ( \tau . X + E + F ) \arrow{\alpha} H$,
with
$F \{ \ms{rec}X. ( \tau . X + E + F ) / X \} \arrow{\alpha} H$, then, by induction we
derive that, for some $H'$, $F \{ \ms{rec}X. 
(\tau.(\ms{pri}(X) + E) + F) / X \} 
\warrow{\alpha} H'$ with $(H,H') \in  \beta$. Hence,
$\ms{rec}X. (\tau.(\ms{pri}(X) + E) + F) \warrow{\alpha}
H'$ with $(H,H') \in  \beta$.

\end{itemize}

An identical pair of inductive proofs is performed when we consider $\delta$ transitions
in the condition above (we just have to consider $\delta$ instead of a generic $\alpha$), 
apart from the cases $G \equiv X$ and $G \equiv G' + G''$.
In the case $G \equiv X$, due to the priority constraints, no $\delta$ transitions 
can be performed, so the condition obviously holds.
In the case $G \equiv G' + G''$ we have to take into account the priority constraints 
by proceeding exactly as for $(Ung2)$. \\

We conclude the proof that $\beta$ is a weak bisimulation by considering the 
pair:\\
$( \ms{pri}(\ms{rec}X.(\tau.
(\ms{pri}(X) + E) + F)) + E
\{ \ms{rec}X.(\tau.(\ms{pri}(X) + E) + F) / X \} , 
\ms{rec}X.( \tau . X 
 + E + F ) )$.

Suppose $\ms{pri}(\ms{rec}X.(\tau.(\ms{pri}(X) + E) + F)) + E
\{ \ms{rec}X.(\tau.(\ms{pri}(X) + E) + F) / X \} 
\arrow{\gamma} H$,
first of all we note
that $\gamma \neq \delta$ for the priority constraints.
We have the following three cases.
\begin{itemize}
\item If $\ms{pri}(\ms{rec}X.(\tau.(\ms{pri}(X) + E) + F)) + E
\{ \ms{rec}X.(\tau.(\ms{pri}(X) + E) + F) / X \} \arrow{\tau}$ 
$H \equiv \ms{pri}(\ms{rec}X.(\tau. \linebreak
(\ms{pri}(X)
+ E) + F)) + E
\{ \ms{rec}X.(\tau.(\ms{pri}(X) + E) + F) / X \}$, then 
$\ms{rec}X.( \tau . X + E + F )$ just makes no move. Denoting the latter term with $H'$, we have $(H,H')  \in \beta$.
\item If $\ms{pri}(\ms{rec}X.(\tau.(\ms{pri}(X) + E) + F)) + E
\{ \ms{rec}X.(\tau.(\ms{pri}(X) + E) + F) / X \} \arrow{\alpha} H$ with 
$F \{ \ms{rec}X.$ $(\tau.(\ms{pri}(X) + E) + F) / X \}
\arrow{\alpha} H$, then, by the first part of the proof we
derive that, for some $H'$, $F \{ \ms{rec}X.( \tau . X + E + F ) / X \} 
\warrow{\alpha} H'$ with $(H,H') \in  \beta$. Hence,
$\ms{rec}X. ( \tau . X + E + F ) \warrow{\alpha}
H'$ with $(H,H') \in  \beta$. 

\item If $\ms{pri}(\ms{rec}X.(\tau.(\ms{pri}(X) + E) + F)) + E
\{ \ms{rec}X.(\tau.(\ms{pri}(X) + E) + F) / X \} \arrow{\alpha} H$ with 
$E \{ \ms{rec}X.$ $(\tau.(\ms{pri}(X) + E) + F) / X \}
\arrow{\alpha} H$, then, by the first part of the proof we
derive that, for some $H'$, $E \{ \ms{rec}X.( \tau . X + E + F ) / X \} 
\warrow{\alpha} H'$ with $(H,H') \in  \beta$. Hence,
$\ms{rec}X. ( \tau . X + E + F ) \warrow{\alpha}
H'$ with $(H,H') \in  \beta$.

%
%
\end{itemize}

Vice-versa if $\ms{rec}X.( \tau . X + E + F ) \arrow{\gamma} H$
(it must be $\gamma \neq \delta$ for the priority constraints)
we have the following three cases.
\begin{itemize}
\item If $\ms{rec}X.( \tau . X + E + F ) \arrow{\tau} 
H \equiv \ms{rec}X.( \tau . X + E + F )$, then 
$\ms{pri}(\ms{rec}X.(\tau.(\ms{pri}(X) + E) + F)) + E
\{ \ms{rec}X.(\tau.(\ms{pri}(X) + E) + F) / X \}$ just makes no move.
Denoting the latter term with $H'$, we have $(H,H')  \in \beta$.

\item if $\ms{rec}X. ( \tau . X + E + F ) \arrow{\alpha} H$,
with
$E \{ \ms{rec}X. ( \tau . X + E + F ) / X \} \arrow{\alpha} H$, then, by the first part of the proof we
derive that, for some $H'$, $E \{ \ms{rec}X. 
(\tau.(\ms{pri}(X) + E) + F) / X \} 
\warrow{\alpha} H'$ with $(H,H') \in  \beta$. Hence,
$\ms{pri}(\ms{rec}X.(\tau. 
(\ms{pri}(X) + E) + F)) + E
\{ \ms{rec}X.(\tau.(\ms{pri}(X) + E) + F) / X \} \warrow{\alpha} H'$ with $(H,H') \in  \beta$.

\item if $\ms{rec}X. ( \tau . X + E + F ) \arrow{\alpha} H$,
with
$F \{ \ms{rec}X. ( \tau . X + E + F ) / X \} \arrow{\alpha} H$, then, by the first part of the proof we
derive that, for some $H'$, $F \{ \ms{rec}X. 
(\tau.(\ms{pri}(X) + E) + F) / X \} 
\warrow{\alpha} H'$ with $(H,H') \in  \beta$. Hence,
$\ms{pri}(\ms{rec}X.(\tau. 
(\ms{pri}(X) + E) + F)) + E
\{ \ms{rec}X.(\tau.(\ms{pri}(X) + E) + F) / X \} \warrow{\alpha}
H'$ with $(H,H') \in  \beta$.
%
%
\qedhere
\end{itemize}
\end{proof}

\subsection{Completeness}

We now show that the axiom system $\cala$ is complete over
processes of $\calp$. In order to do this we follow the lines
of~\cite{Mil89}, so we deal with systems of recursive equations.

We start by introducing the machinery necessary for proving completeness
over guarded expressions $E \in \cale$. Afterwards we will show
how the axioms $(Ung)$ can be used to turn an unguarded processes 
of $\calp$ into a guarded process of $\calp$. Notice that the new operator
$\ms{pri}(E)$ is used only in the intermediate steps of the
second procedure.

\begin{defi}
An {\it equation set} with {\it formal variables} 
$\tilde{X} = \{ X_1, \dots , X_n \}$ and {\it free variables}
$\tilde{W} = \{ W_1, \dots , W_m \}$, where $\tilde{X}$ and $\tilde{W}$ are disjoint, 
is a set $S = \{ X_i = H_i \mid 1 \leq i \leq n \, \wedge \, H_i \in \cale \}$ of
equations such that the expressions $\tilde{H} = \{ H_1, \dots , H_n \}$ have free
variables in $\tilde{X} \cup \tilde{W}$. In the following we will also represent $S$ by using 
the shorthand $S : \tilde{X} = \tilde{H}$. We call $X_1$ the {\it distinguished} variable
of $S$.

We say that an expression $E$ {\it provably satisfies} $S$ if there are
expressions $\tilde{E} = \{ E_1, \dots , E_n \}$ with free variables in 
$\tilde{W}$ such that $E_1 \equiv E$ and for $1 \leq i \leq n$ we have 
$\cala \vdash E_i = H_i \{ \tilde{E} / \tilde{X} \}$. We can also 
simply write: $\cala \vdash \tilde{E} = \tilde{H} \{ \tilde{E} / \tilde{X} \}$.

Finally, we say that $S$ is {\it closed} if $W = \emptyset$. 
\end{defi}
In order to define (similarly as in~\cite{Mil89}) guardedness over
equation sets, we consider the relation 
$\arrowd{ung}{S} \subseteq \tilde{X} \times \tilde{X}$, defined as follows: \\[.3cm]
$\hspace*{2cm} 
\begin{array}{lll}
X_i \arrowd{ung}{S} X_j \hspace*{1cm} & {\rm iff} \hspace*{1cm} & H_i \rhd X_j \\[.1cm]
\end{array}$\\[.1cm]
where $E \rhd X$ denotes that expression $E$ contains a free unguarded occurrence of $X$.

\begin{defi}\label{defGES}
An equation set $S$ with formal variables 
$\tilde{X} \, = 
\{ X_1, \dots , X_n \}$ is {\it guarded} if there is no 
cycle $X_i \sarrow{ung}\!\!\!_S^{\,\,+} X_i$.

\end{defi}

The equation sets that are actually dealt with in the proof of completeness
of~\cite{Mil89} belong to the subclass of {\it standard 
equation sets}. Here we consider the subclass of {\it prioritized} ones.

\begin{defi}
An equation set $S = \{ X_i = H_i \mid 1 \leq i \leq n \}$,
with formal variables 
$\tilde{X} = \{ X_1, \dots , X_n \}$ and free variables
$\tilde{W} = \{ W_1, \dots , W_m \}$, is {\it standard} if 
each expression $H_i$ ($1 \leq i \leq n$) is of the 
form:~\footnote{We assume $\sum_{j \in J} E \equiv \nil$ if $J =
\emptyset$.}
\cws{0}{0}{
H_i \equiv \sum_{j \in J_i} \gamma_{i,j} . X_{f(i,j)} + \sum_{k \in K_i} 
W_{g(i,k)}
}
Moreover, if it also holds that:
\cws{5}{10}{
\exists j \in J_i : \gamma_{i,j} = \tau \; \Rightarrow \; \, \notexists
j \in J_i : \gamma_{i,j} = \delta \; }
we say that $S$ is {\it prioritized}.

\end{defi}

As in~\cite{Mil89}, for a standard equation set $S$ we define the relations 
$\arrowd{}{S} \subseteq \tilde{X} \times \ms{Act} \times \tilde{X}$ and $\rhd_S 
\subseteq
\tilde{X} \times \tilde{W}$ as follows: \\[.3cm]
$\hspace*{2cm} 
\begin{array}{lll}
X_i \arrowd{\gamma}{S} X_j \hspace*{1cm} & {\rm iff} \hspace*{.5cm} &
\gamma . X_j \; {\rm occurs \; in} \; H_i \\[.1cm]
X_i \rhd_S W \hspace*{1cm} & {\rm iff} \hspace*{.5cm} &
W \; {\rm occurs \; in} \; H_i
\end{array}$\\[.1cm]

Notice that, from Definition~\ref{defGES}, we have that a 
standard equation set $S$ with formal variables 
$\tilde{X} \, =
\{ X_1, \dots , X_n \}$ is guarded if there is no 
cycle $X_i \sarrow{\tau}\!\!\!_S^{\,\,+} X_i$.

\begin{defi}
A free variable $W$ is guarded in the standard equation set
$S = \{ X_i = H_i \mid 1 \leq i \leq n \}$ if it is not
the case that 
$X_1 \sarrow{\tau}\!\!\!_S^{\,\,*} X_i \rhd_S W$.

\end{defi}

We obviously have that every standard equation set can be reduced to a prioritized 
one.

\begin{thm}\label{reduction}
For every standard equation set $S$ there is a prioritized standard
equation set $S'$ such that every expression $E$ that provably satisfies
$S$, provably satisfies $S'$ as well. Moreover, if $S$ is guarded then 
$S'$ is also guarded.
\end{thm}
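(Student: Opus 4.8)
The plan is to construct $S'$ from $S$ by a local rewriting of each right-hand side $H_i$ that removes $\delta$-summands whenever a $\tau$-summand is present, and to show this rewriting is justified by the axioms, so that provable satisfaction is preserved. Concretely, for each $i$ such that $H_i \equiv \sum_{j \in J_i} \gamma_{i,j}.X_{f(i,j)} + \sum_{k \in K_i} W_{g(i,k)}$ contains some summand with $\gamma_{i,j} = \tau$, I would set $H'_i$ to be $H_i$ with all summands of the form $\delta.X_{f(i,j)}$ deleted; if $H_i$ contains no $\tau$-summand I set $H'_i \equiv H_i$. The resulting set $S' : \tilde X = \tilde H'$ is clearly standard with the same formal and free variables, and by construction it satisfies the prioritization side condition, since in each $H'_i$ we never have both a $\tau$-summand and a $\delta$-summand.

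The key algebraic fact is that $H_i \simeq H'_i$ provably, i.e.\ $\cala \vdash H_i = H'_i$: each deleted summand $\delta.X_{f(i,j)}$ can be absorbed next to a retained summand $\tau.X_{f(i',j')}$ using axiom $(Pri6)$ together with $(Pri3)$ (as noted after Table~\ref{Axioms}, $\cala \vdash \tau.E + \delta.E = \tau.E$), and more generally $\tau.E + \delta.F = \tau.E + \ms{pri}(\delta.F) + \ms{pri}(\text{rest}) = \tau.E + \ms{pri}(\delta.F + \cdots)$; applying $(Pri3)$ to the $\ms{pri}(\delta.F)$ part and then $(Pri6)$ backwards removes the $\delta$-summand. (One has to be a little careful to massage the sum via $(A1)$–$(A4)$ so that $(Pri6)$ applies, and to iterate over all $\delta$-summands; I would phrase this as a small lemma: if $E$ contains a $\tau$-prefixed summand then $\cala \vdash E = E''$ where $E''$ is $E$ with every $\delta$-prefixed summand deleted.) Then, given expressions $\tilde E$ witnessing that $E$ provably satisfies $S$, so that $\cala \vdash E_i = H_i\{\tilde E/\tilde X\}$, I observe $\cala \vdash H_i\{\tilde E/\tilde X\} = H'_i\{\tilde E/\tilde X\}$ (substitution instances of the provable equality $H_i = H'_i$, since $\cala$ is closed under substitution), hence $\cala \vdash E_i = H'_i\{\tilde E/\tilde X\}$ with the \emph{same} witnesses $\tilde E$ and the same $E_1 \equiv E$; thus $E$ provably satisfies $S'$.

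Finally, for the guardedness clause: the relation $\arrowd{\tau}{S'}$ is a sub-relation of $\arrowd{\tau}{S}$, because deleting $\delta$-summands never creates new $\tau$-summands — indeed $\tau.X_j$ occurs in $H'_i$ only if it already occurred in $H_i$. Hence any cycle $X_i \sarrow{\tau}\!\!\!_{S'}^{\,\,+} X_i$ would also be a cycle in $S$; so if $S$ is guarded (no such cycle, by the remark following Definition~\ref{defGES}), then $S'$ is guarded as well. The only mildly delicate point is the bookkeeping in the algebraic lemma — getting the sum into the exact shape $\tau.E + F$ demanded by $(Pri6)$ and iterating cleanly over possibly several $\delta$-summands pointing to different $X_{f(i,j)}$ — but this is routine rewriting with $(A1)$–$(A4)$, $(Pri3)$, $(Pri4)$ and $(Pri6)$, and is the main (and only real) obstacle.
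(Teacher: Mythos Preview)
Your proposal is correct and follows essentially the same approach as the paper: construct $S'$ by deleting the $\delta$-summands from each $H_i$ that contains a $\tau$-summand, justify $\cala \vdash H_i\{\tilde E/\tilde X\} = H'_i\{\tilde E/\tilde X\}$ via $(Pri6)$ and $(Pri3)$ (the paper uses just these two, applying $(Pri6)$ to a single $\delta$-summand at a time so that $(Pri4)$ and the free-variable issue never arise), and observe that $\tau$-edges in $S'$ are a subset of those in $S$ for guardedness. Your extra care about the shape of sums and iteration over multiple $\delta$-summands is fine but not strictly needed once you isolate one $\delta$-summand as the $F$ in $(Pri6)$.
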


\begin{proof}
$S' = \{ X_i = H'_i \mid 1 \leq i \leq n \}$ is obtained from $S = \{ X_i = H_i \mid 1 \leq i \leq n \}$ by eliminating 
the occurrences of $\delta . X_i$ (for any $i$) in the equations of $S$
which include terms $\tau . X_j$ (for any $j$). The expressions 
$\tilde{E} = \{ E_1, \dots , E_n \}$ used to show that $E$ that provably satisfies
$S$ (with $E_1 \equiv E$) can also be used to prove that $E$ provably satisfies
$S'$: for every $i$ such that $1 \leq i \leq n$ we have that $\cala \vdash
E_i = H_i \{ \tilde{E} / \tilde{X} \} = H'_i \{ \tilde{E} / \tilde{X} \}$
by using laws $(Pri6)$ and $(Pri3)$.
If $S$ is guarded then $S'$ is easily proven to be guarded by contradiction:
if there is a cycle $X_i \sarrow{\tau}\!\!\!_{S'}^{\,\,+} X_i$ for some $i$ then
the same cycle must be also in $S$.
\end{proof}

The following theorem guarantees that from a guarded 
expression $E \in \cale$ we can derive a (prioritized) standard guarded equation 
set which is provably satisfied by $E$.

\begin{thm}[representability]\label{repr}
Every guarded expression $E \in \cale$ with free variables 
$\tilde{W}$ provably satisfies a standard guarded 
equation set $S$ with free variables in $\tilde{W}$.
Moreover, if a free variable $W$ is guarded in $E$
then $W$ is guarded in $S$.
\end{thm}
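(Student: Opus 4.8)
The plan is to proceed by structural induction on $E$, following~\cite{Mil89}; the one genuinely new point will be the handling of the free variable that a strongly guarded recursion binds, which must be absorbed into the formal variables of the equation set without spoiling either standardness or guardedness.

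The base cases and the prefix and choice cases I expect to be routine. For $E \equiv \nil$ I would take $S = \{ X_1 = \nil \}$ and for a free variable $E \equiv W$ the set $S = \{ X_1 = W \}$; both are standard, trivially guarded, and make the \emph{moreover} clause vacuous (no free variable is guarded in $W$). For $E \equiv \gamma.E'$ I would apply the induction hypothesis to the guarded subexpression $E'$, obtaining a standard guarded $S' : \tilde{X'} = \tilde{H'}$ with distinguished variable $X'_1$ and free variables among $\tilde{W}$, and then prepend a fresh distinguished variable $X_1$ with equation $X_1 = \gamma.X'_1$. For $E \equiv E' + E''$ I would first rename so that the formal-variable sets of the two inductively obtained sets $S', S''$ are disjoint, then take a fresh distinguished variable $X_1$ with equation $X_1 = H'_1 + H''_1$ (still of standard form) and keep all remaining equations. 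In each of these cases nothing points back to the fresh $X_1$, so any $\tau$-cycle in $S$ would already sit inside a subordinate set, which gives guardedness; the \emph{moreover} clause follows by tracing an offending $\tau$-path from $X_1$ back into the subordinate set(s), using that $W$ guarded in $\gamma.E'$ (when $\gamma = \tau$) resp.\ in $E' + E''$ forces $W$ guarded in the relevant subexpression, hence (by induction) in the relevant subordinate equation set; the witnessing expressions are the evident ones built from $E$ and the inductive witnesses.

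The substantive case will be $E \equiv \ms{rec}X.E'$. Here $E$ guarded means $E'$ is guarded and $X$ is strongly guarded in $E'$, so by induction $E'$ provably satisfies a standard guarded $S' : \tilde{X'} = \tilde{H'}$ (distinguished variable $X'_1$, free variables among $\tilde{W} \cup \{X\}$) via witnesses $\tilde{E'}$ with $E'_1 \equiv E'$; and, crucially, applying the \emph{moreover} clause of the induction hypothesis to the strongly guarded variable $X$, the variable $X$ is guarded in $S'$, i.e.\ there is no path $X'_1 \sarrow{\tau}\!\!\!_{S'}^{\,\,*} X'_i \rhd_{S'} X$ --- in particular, taking the zero-length path, $X$ is not a summand of $H'_1$. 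I would then set $E_1 := \ms{rec}X.E'$ and $E_i := E'_i\{\ms{rec}X.E'/X\}$ for $i \geq 2$, noting $\cala \vdash E_1 = E'_1\{\ms{rec}X.E'/X\}$ by $(Rec1)$. Substituting $\ms{rec}X.E'$ for $X$ throughout $\cala \vdash E'_i = H'_i\{\tilde{E'}/\tilde{X'}\}$, and using $(A3)$ to collapse the copies of $\ms{rec}X.E' = E_1$ that this produces, yields $\cala \vdash E_i = \hat H_i\{\tilde{E}/\tilde{X'}\}$, where $\hat H_i$ is obtained from $H'_i$ by deleting every summand equal to $X$ and, if there was at least one such summand, adding a single summand $X'_1$. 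Thus $\tilde E$ provably satisfies the equation set $\hat S$ with equations $X'_i = \hat H_i$, whose only possible departure from standard form is an unguarded summand $X'_1$; and $\hat H_1$ has no such summand by the remark above, so it is already standard. Finally I would obtain $S$ from $\hat S$ by replacing, in every $\hat H_i$ that contains the summand $X'_1$, that summand by a fresh copy of $\hat H_1$; this restores standard form and is sound for provable satisfaction because $\cala \vdash E_1 = \hat H_1\{\tilde{E}/\tilde{X'}\}$.

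It then remains to check that $S$ is guarded and transmits the \emph{moreover} property, and this is where ``$X$ guarded in $S'$'' is used. Since no formal variable that is $\tau$-reachable from $X'_1$ in $S'$ carries the summand $X$, the additional $\tau$-edges created by inlining $\hat H_1$ --- which emanate only from variables that did carry $X$ --- cannot lie on a $\tau$-path issuing from $X'_1$; hence every $\tau$-path from $X'_1$ in $S$ is already a $\tau$-path in $S'$. From this, any $\tau$-cycle in $S$ would reduce --- follow it from a point at which one of the additional edges is taken --- to a $\tau$-cycle in $S'$, contradicting guardedness of $S'$; and for a free variable $W \neq X$ guarded in $E$, hence in $E'$, hence (by induction) in $S'$, any path $X'_1 \sarrow{\tau}\!\!\!_S^{\,\,*} X'_i \rhd_S W$ would already be such a path in $S'$, which is excluded. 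I expect this last piece of bookkeeping in the recursion case --- that inlining the distinguished equation to eliminate the absorbed free variable creates neither a $\tau$-cycle nor a new $\tau$-reachable occurrence of a free variable --- to be the main obstacle; the rest is a direct adaptation of~\cite{Mil89}, with $\delta$ treated just like any other non-$\tau$ action.
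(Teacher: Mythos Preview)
Your proposal is correct and follows the same structural-induction approach that the paper invokes by citing~\cite{Mil89}; the paper gives no further detail beyond that citation, and your sketch is a faithful expansion of Milner's argument, with the recursion case handled by absorbing the bound variable via inlining the distinguished equation and using strong guardedness of $X$ in $E'$ (hence guardedness of $X$ in $S'$) to ensure no new $\tau$-cycles or $\tau$-reachable free-variable summands are created.
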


\begin{proof}
The proof, by induction on the structure of $E$, is exactly as in~\cite{Mil89}.
\end{proof}

Once established the structure of prioritized standard guarded equation sets $S$, 
completeness over guarded expressions is a consequence of the uniqueness of the solution of such equation stets and of the possibility to merge them (after saturation, see Definition \ref{satur}) if they have equivalent solutions,
as in~\cite{Mil89}. 

The following theorem shows that every guarded equation set
(not necessarily standard) has a unique solution up to provable equality.

\begin{thm}[one and only one solution]\label{onesol}
If $S$ is a guarded equation set with free variables in $\tilde{W}$,
then there is a guarded expression $E \in \cale$ with free variables in $\tilde{W}$
that provably satisfies $S$. Moreover,
if $F$ with free variables in $\tilde{W}$ provably satisfies $S$,
then $\cala \vdash E = F$.

\end{thm}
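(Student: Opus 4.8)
The plan is to follow the two-phase argument of~\cite{Mil89}: first prove existence of a provable solution, then prove uniqueness. For existence, I would argue by induction on the number $n$ of formal variables in $S$. If $n = 1$, then $S$ is $\{X_1 = H_1\}$ with $X_1$ guarded in $H_1$ (since $S$ is guarded there is no cycle $X_1 \arrowpl{ung} X_1$, i.e. $X_1$ occurs only guarded in $H_1$); here ``guarded'' means not unguarded, but for the general (non-standard) equation set I must be careful that $X_1$ may occur weakly-guarded-but-not-strongly-guarded, i.e. under a $\tau$ prefix. This is exactly the subtlety that $(Rec2)$ does not cover, so in the base case I would first apply the representability machinery combined with the $(Ung)$ axioms to convert $H_1$ into a strongly guarded form, then take $E \equiv \ms{rec}X_1.H_1$ and use $(Rec1)$ to see $E$ provably satisfies $S$. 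For $n > 1$, I would isolate the last equation $X_n = H_n$, form $E_n^* \equiv \ms{rec}X_n.H_n$ (again massaging $H_n$ into strongly guarded form via the $(Ung)$ axioms so that $\ms{rec}X_n$ is legitimate), substitute $E_n^*$ for $X_n$ throughout $H_1,\dots,H_{n-1}$ to obtain a guarded equation set $S'$ on $n-1$ variables, apply the induction hypothesis to get a provable solution $E_1,\dots,E_{n-1}$ of $S'$, and then set $E_n \equiv E_n^*\{\tilde E/\tilde X\}$ and check that $E_1,\dots,E_n$ provably satisfy $S$ using $(Rec1)$ and substitutivity. I must also track the secondary claim that guardedness of a free variable $W$ in $S$ is preserved, which follows because substitution of guarded expressions for the $X_i$ cannot create a new unguarded occurrence of $W$.

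For uniqueness, suppose $F$ with free variables in $\tilde W$ also provably satisfies $S$, via witnesses $F_1,\dots,F_n$ with $F_1 \equiv F$ and $\cala \vdash F_i = H_i\{\tilde F/\tilde X\}$. I would prove $\cala \vdash E_i = F_i$ for all $i$ by the standard ``unique solution of guarded equations'' argument: because $S$ is guarded, there is a well-founded ordering (the absence of $\arrowpl{ung}$-cycles lets us topologically layer the variables, or one reasons by induction on the depth to which one must unfold before reaching a guard), and one shows $\cala \vdash E_i = F_i$ by simultaneously unfolding both $\tilde E$ and $\tilde F$ against the equations and using the axioms $(A1)$--$(A4)$, $(Tau\star)$, $(Rec1)$, $(Rec2)$, plus $(Pri6)$ to reconcile the way the two solutions treat any residual $\delta$-summands sitting beside a $\tau$-summand. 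The key tool is $(Rec2)$: after reducing $S$ to a guarded \emph{standard} equation set (Theorem~\ref{repr}) and then a prioritized one (Theorem~\ref{reduction}), each $E_i$ can be shown to provably satisfy a single guarded, serial recursion equation whose body has $X_i$ strongly guarded, so $(Rec2)$ forces $\cala \vdash E_i = \ms{rec}X_i.(\dots)$; the same holds for $F_i$, and since both satisfy the \emph{same} such equation we get $\cala \vdash E_i = F_i$, in particular $\cala \vdash E = F$.

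The main obstacle I expect is the interaction between unguardedness and the priority axioms in the existence part. In~\cite{Mil89} the passage from a general guarded equation set to one where each $\ms{rec}X_i$ body is \emph{strongly} guarded is handled purely by $(Ung1)$ and $(Ung3)$, but here a weakly-guarded occurrence of $X$ under a $\tau$ may sit next to a $\delta$-summand, and removing the $\tau$-loop via $(Ung2)$ forces the $\ms{pri}(\cdot)$ operator into the picture and \emph{deletes} the $\delta$-behaviour — so I must verify that the equation being solved genuinely has this $\delta$-pruned solution, i.e. that applying $(Ung2)$/$(Ung4)$ to the body is sound exactly when the formal-variable occurrence is $\tau$-guarded-but-not-$\delta$-guarded and that afterwards $(Ung4)$ cleans up the spurious $\ms{pri}(X)$ markers so $(Rec2)$ becomes applicable. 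Concretely, the delicate bookkeeping is: the equation set $S$ may be non-standard, so before I can invoke $(Ung2)$--$(Ung4)$ (which are stated for $\ms{rec}$-terms, not equation sets) I should convert $S$ to standard form; but Theorem~\ref{repr} is stated only for \emph{guarded} expressions, whereas the solution I am constructing may need to pass through unguarded intermediate terms. The resolution is to build $E$ directly from $S$ by the inductive elimination of variables sketched above, applying the $(Ung)$ axioms to each single-variable body $\ms{rec}X_i.H_i\{\cdots\}$ in turn, and this is where the proof's real content lies.
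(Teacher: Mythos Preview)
Your proposal rests on a misreading of the definition of \emph{guarded equation set}. In this paper, ``unguarded'' means ``not strongly guarded'', so an occurrence of $X$ that sits only under a $\tau$-prefix is already \emph{unguarded}; the relation $X_i \arrowd{ung}{S} X_j$ therefore fires whenever $X_j$ appears in $H_i$ merely $\tau$-guarded. Hence a guarded equation set (no $\arrowd{ung}{S}$-cycle) guarantees in particular that each $X_i$ is \emph{strongly} guarded in $H_i$ (a self-loop $X_i \arrowd{ung}{S} X_i$ is a cycle), and after substituting $\ms{rec}X_{n+1}.H_{n+1}$ into the remaining equations the reduced set is still guarded (any $\arrowd{ung}{S'}$-edge factors through $\arrowd{ung}{S}$-edges, possibly via $X_{n+1}$). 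So the situation you worry about---$X_1$ weakly-but-not-strongly guarded in the base case---cannot arise, and there is no need to invoke the $(Ung)$ axioms, $(Pri6)$, or any massaging of bodies. Milner's construction goes through verbatim using only $(Rec1)$ for existence and $(Rec2)$ for uniqueness; this is exactly what the paper says (``the proof is exactly as in~\cite{Mil89}''). Your entire ``main obstacle'' paragraph is therefore attacking a non-issue, and the detour through Theorems~\ref{repr} and~\ref{reduction} in your uniqueness argument is unnecessary.

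What you \emph{do} miss is the one piece of new content the paper adds: verifying that the constructed $E$ is a guarded \emph{expression} (i.e.\ every subterm $\ms{rec}X.E'$ has $X$ strongly guarded in $E'$). This is not the same as ``guardedness of a free variable $W$ in $S$ is preserved'', which is a claim from Theorem~\ref{repr}, not this one. The paper argues by contradiction: an unguarded recursion $\ms{rec}X.E'$ inside $E$ would, by the way the inductive construction nests $\ms{rec}X_i$-binders, correspond to a chain $X_{i_1} \arrowd{ung}{S} X_{i_2} \arrowd{ung}{S} \cdots \arrowd{ung}{S} X_{i_1}$ in $S$, contradicting guardedness of $S$. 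You should supply this argument rather than the $(Ung)$-based detour.
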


\begin{proof}
The proof is exactly as in~\cite{Mil89}. We just additionally show that the expression $E \in \cale$ 
obtained with the inductive procedure in~\cite{Mil89} is guarded (not proven in~\cite{Mil89}).

The inductive procedure in~\cite{Mil89} constructs $E$ as follows.
Given equation set $S = \{ X_i = H_i \mid 1 \leq i \leq n \}$, the expression $E$ that provably satisfies $S$ is obtained by induction on the number of equations $n$. The base case is $n=1$: $E$ is simply $rec X_1 . H_1$. In the inductive case, given equation set $S = \{ X_i = H_i \mid 1 \leq i \leq n+1 \}$, $E$ is inductively defined as the 
expression that provably satisfies the equation set $\{ X_i = H'_i \mid 1 \leq i \leq n \}$, where $H'_i = H_i \{ rec X_{n+1} . H_{n+1} / X_{n+1} \}$.

It is immediate to show that $E$ obtained from such a procedure is guarded by contradiction.
Suppose that $E$ contains a subterm $\ms{rec}X.E'$ such that $E'$ includes a free unguarded occurrence of $X$. Now consider the recursion operators that appear in $\ms{rec}X.E'$ and which include the free unguarded occurrence of $X$ in their scope (among them there is the recursion operator $\ms{rec}X.E'$ itself). Based on the inductive procedure above, such recursion operators (considered syntactically from outside to inside) would be in correspondence with a sequence of consecutive $X_i \arrowd{ung}{S} X_j$ steps in the equation set $S$: $X_i$ corresponds to the syntactical occurrence of a $rec X_i$ operator and $X_j$ either to the syntactical occurrence of $rec X_j$ or, if $X_j \equiv X$, to the syntactical occurrence of $X$. Such a sequence would 
lead from $X$ to itself, hence $S$ would not be guarded.
\end{proof}


In the following, given a guarded equation set $S$, we will refer to 
the solution $E$ determined in the proof of the theorem above as the
{\it standard solution} of $S$.

\begin{defi}\label{satur}
A {\it saturated} standard equation set $S$ with formal variables 
$\tilde{X}$ is a standard equation set such that for all $X \in \tilde{X}$: \\[.3cm]
$\hspace*{2cm} 
\begin{array}{llll}
(i) \hspace*{.5cm} &
X \sarrow{\tau}\!\!\!_S^{\,\,*} \; \arrowd{\alpha}{S} \, 
\sarrow{\tau}\!\!\!_S^{\,\,*} \; X' & 
\hspace*{.5cm} \Rightarrow \hspace*{.5cm} & 
X \arrowd{\alpha}{S} X' \\[.1cm]
(ii) \hspace*{.5cm} &
X \sarrow{\tau}\!\!\!_S^{\,\,*} \; \arrowd{\delta}{S} \, 
\sarrow{\tau}\!\!\!_S^{\,\,*} \; X' & 
\hspace*{.5cm} \Rightarrow \hspace*{.5cm} & 
X \arrowd{\delta}{S} X' \\[.1cm]
(iii) \hspace*{.5cm} &
X \sarrow{\tau}\!\!\!_S^{\,\,*} \; \rhd_S \; W &
\hspace*{.5cm} \Rightarrow \hspace*{.5cm} & 
X \rhd_S W
\end{array}$\\[.1cm]
%
\end{defi}

\begin{lem}\label{satlemma}
Let $E \in \cale$ provably satisfy $S$, standard and guarded. Then
there is a saturated, standard and guarded equation set $S'$
provably satisfied by $E$.
\end{lem}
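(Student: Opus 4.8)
The plan is to mimic Milner's saturation construction from~\cite{Mil89}, adapting it so that the $\delta$-transitions (which behave like ordinary visible transitions as far as reachability via $\tau$-sequences is concerned, thanks to Proposition~\ref{refcongr}) are saturated in parallel with the $\alpha$-transitions and with the free-variable occurrences. Concretely, starting from the standard guarded equation set $S = \{X_i = H_i \mid 1 \le i \le n\}$, I would define $S' = \{X_i = H'_i \mid 1 \le i \le n\}$ by setting each $H'_i$ to be the sum of $H_i$ together with one summand $\gamma . X'$ for every derived transition $X_i \sarrow{\tau}\!\!\!_S^{\,\,*} \arrowd{\gamma}{S} \sarrow{\tau}\!\!\!_S^{\,\,*} X'$ (with $\gamma$ ranging over $\ms{PAct} \cup \{\delta\}$), and one summand $W$ for every $X_i \sarrow{\tau}\!\!\!_S^{\,\,*} \rhd_S W$. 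Since $S$ is finite-state this is a finite sum, so $S'$ is again a standard equation set, and by construction it satisfies clauses $(i)$–$(iii)$ of Definition~\ref{satur}.

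First I would check that $S'$ is still \emph{guarded}: the new summands of the form $\gamma . X'$ are all strongly or weakly guarded prefixes, so they never create a new $X_i \arrowd{ung}{S'} X_j$ edge; the relation $\sarrow{\tau}\!\!\!_{S'}^{\,\,+}$ is exactly the transitive closure of $\sarrow{\tau}\!\!\!_S^{\,\,+}$ (a single $\arrowd{\tau}{S'}$ step corresponds to a non-empty $\sarrow{\tau}\!\!\!_S^{\,\,*}$ sequence followed by one $\arrowd{\tau}{S}$ step), hence a cycle in $S'$ would give a cycle in $S$, contradicting guardedness of $S$. Next I would verify that $E$ \emph{provably satisfies} $S'$, i.e.\ that the same witnessing tuple $\tilde{E}$ with $\cala \vdash E_i = H_i\{\tilde E/\tilde X\}$ also gives $\cala \vdash E_i = H'_i\{\tilde E/\tilde X\}$. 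This is the crux: each newly added summand must be shown provably absorbable. For a two-step derived transition $X_i \arrowd{\tau}{S} X_j \arrowd{\gamma}{S} X'$ one argues, exactly as in~\cite{Mil89}, using $(Tau1)$–$(Tau3)$ and the equations $E_j = H_j\{\tilde E/\tilde X\}$, that the summand $\gamma . E_{X'}$ can be added to $E_i$ without changing its provable value; longer $\tau$-chains follow by iterating, and the free-variable clause $(iii)$ is handled with $(Tau2)$ in the same way. The one genuinely new point, relative to Milner, is clause $(ii)$ for $\delta$: here I would invoke axiom $(Pri6)$ (and its consequence $\tau . E + \delta . E = \tau . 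E$) to see that adding a $\delta$-summand is sound precisely in the situations that can arise, namely when the relevant state has no competing $\tau$ — and note that a derived $\delta$-transition $X \sarrow{\tau}\!\!\!_S^{\,\,*} \arrowd{\delta}{S} \sarrow{\tau}\!\!\!_S^{\,\,*} X'$ can only be traversed through states without outgoing $\tau$ (otherwise the $\delta$ is pre-empted), so the absorption step stays within the intended scope and the prioritized shape is respected.

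The main obstacle I anticipate is exactly this interaction between $\tau$-saturation and $\delta$-saturation: one must make sure that after adding $\delta$-summands along a $\tau$-path the equation set does not silently become non-prioritized or that a spuriously added $\delta . X'$ summand sits in an equation that also has a $\tau$-summand. I would resolve this by performing the saturation in two stages — first saturate with respect to $\tau$, $\alpha$ and $W$ as in~\cite{Mil89}, then add the $\delta$-summands and immediately re-apply Theorem~\ref{reduction} to re-prioritize — so that the output is standard, guarded, saturated, prioritized, and still provably satisfied by $E$ (Theorem~\ref{reduction} preserves all of these, and the provable-satisfaction witness is carried through by $(Pri6)$ and $(Pri3)$ as in its proof). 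Everything else is a routine transcription of Milner's argument.
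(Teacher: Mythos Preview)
Your core construction --- adding a summand for each derived transition and checking guardedness via the $\tau$-cycle argument --- is exactly Milner's, and is what the paper does: it simply defers to~\cite{Mil89}, noting that axioms $(A1)$--$(A4)$ and $(Tau1)$--$(Tau3)$ are unchanged.

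However, you manufacture a difficulty that is not there. Clause~$(ii)$ for $\delta$ is \emph{not} a ``genuinely new point'': axioms $(Tau1)$--$(Tau3)$ are stated for a generic action $\gamma$, which includes $\delta$, so the absorption argument for a derived $\delta$-transition is literally the same as for a visible $a$ (e.g.\ from $E_i = \tau.E_j + \cdots$ and $E_j = \delta.E_k + \cdots$ one gets $E_i = E_j + \tau.E_j + \cdots$ by $(Tau2)$, exposing $\delta.E_k$; the tail-$\tau$ case uses $(Tau3)$ with $\gamma=\delta$). No appeal to $(Pri6)$ is needed. Your operational remark that a derived $\delta$-transition ``can only be traversed through states without outgoing $\tau$'' is irrelevant --- and in fact false --- at this level, since $\arrowd{}{S}$ is a purely syntactic relation on summands of $H_i$; a $\delta.X_k$ summand may sit next to a $\tau.X_j$ summand in a standard (non-prioritized) equation set with no inconsistency.

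Consequently your ``main obstacle'' and two-stage fix are both unnecessary (the lemma does not ask for a \emph{prioritized} result, only saturated, standard and guarded) and in fact broken: applying Theorem~\ref{reduction} after saturating deletes every $\delta$-summand from any equation that also has a $\tau$-summand, which immediately violates clause~$(ii)$ whenever, say, $X_i \arrowd{\tau}{S} X_j \arrowd{\delta}{S} X_k$ --- the saturated $\delta.X_k$ you just added to $H_i$ gets removed, yet the derived path $X_i \sarrow{\tau}\!\!\!^{\,*} \arrowd{\delta}{} X_k$ survives. Drop that paragraph; the plain Milner argument already suffices.
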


\begin{proof}
From~\cite{Mil89} we know that there is a saturated standard equation set $S'$
provably satisfied by $E$: the proof is based on the axioms $(A1)-(A4)$ and $(Tau1)-(Tau3)$ that are unchanged with respect to the standard machinery. 
\end{proof}

The possibility of saturating standard and guarded equation sets $S$
leads to the following theorem. 

%

\begin{thm}[mergeability]
Let process $P \in \calp$ provably satisfy $S$, and process $Q \in \calp$ provably satisfy
$T$, where both $S$ and $T$ are prioritized, standard, guarded and closed sets of equations, and
let $P \simeq Q$. Then there is a prioritized, standard, guarded and closed equation set $U$ 
provably satisfied by both $P$ and $Q$.
\end{thm}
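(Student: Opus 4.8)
The plan is to mimic Milner's classical mergeability argument (the ``equational characterisation'' step in~\cite{Mil89}), adapting it to the prioritized setting. First I would invoke Lemma~\ref{satlemma} to replace $S$ and $T$ by saturated, standard, guarded and closed equation sets that are still provably satisfied by $P$ and $Q$ respectively; by Theorem~\ref{reduction} I may further assume both are prioritized (eliminating $\delta$-summands alongside $\tau$-summands changes nothing, since saturation preserves the shape of equations and the reduction only uses $(Pri6)$ and $(Pri3)$). So from now on $S:\tilde X=\tilde H$ with distinguished variable $X_1$ and standard solution--witnesses $\tilde P$ (with $P_1\equiv P$), and similarly $T:\tilde Y=\tilde G$ with witnesses $\tilde Q$, $Q_1\equiv Q$.

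Next I would build the ``product'' equation set $U$ over formal variables $Z_{i,j}$ indexed by those pairs $(X_i,Y_j)$ that are \emph{reachable and related}: start from $(X_1,Y_1)$, and whenever $Z_{i,j}$ has been included and $P_i\simeq Q_j$ holds (which we maintain as an invariant), put into the equation for $Z_{i,j}$ a summand $\gamma.Z_{i',j'}$ for every matching pair of moves $X_i\arrowd{\gamma}{S}X_{i'}$ and $Y_j\warrow{\gamma}Y_{j'}$ — using saturation of $T$ to collapse $\warrow{\gamma}$ to a single $\arrowd{\gamma}{T}$-step — and symmetrically for moves of $Y_j$ matched by moves of $X_i$. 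Here the refined congruence of Proposition~\ref{refcongr} is exactly what guarantees that a $\delta$-move of one side is matched by a $\delta\,(\arrow{\tau})^*$ move of the other, which after saturation is again a single $\delta$-step; and Corollary~\ref{cor} guarantees that $P_i\arrow{\tau}$ iff $Q_j\arrow{\tau}$, so that the prioritization condition (``$\tau$-summand present $\Rightarrow$ no $\delta$-summand'') is inherited by each $H''_{i,j}$ from the prioritization of $S$ and $T$. Standardness and closedness of $U$ are immediate from the construction; guardedness of $U$ follows because a $\tau$-cycle through the $Z$'s would project onto a $\tau$-cycle in $S$ (and in $T$), contradicting guardedness there.

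It then remains to show $P$ provably satisfies $U$ and $Q$ provably satisfies $U$. For $P$: I take as witnesses $P'_{i,j}\eqdef P_i$ and must check $\cala\vdash P_i = H''_{i,j}\{\widetilde{P'}/\widetilde Z\}$, i.e. that adding to $\tilde H_i\{\tilde P/\tilde X\}$ the extra summands coming from $Y_j$'s moves is provable. Since $S$ is saturated, every such added summand $\gamma.P_{i'}$ corresponds to a derivation $X_i(\arrow{\tau})^*\arrowd{\gamma}{S}(\arrow{\tau})^*X_{i'}$, hence is already provably absorbed into $P_i$ by $(Tau1)$–$(Tau3)$ (exactly as in Milner); the $\delta$-case is handled by clause $(ii)$ of Definition~\ref{satur} together with $(Pri6)$/$(Pri3)$ to keep prioritized form. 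Symmetrically for $Q$ via saturation of $T$. The main obstacle I anticipate is precisely this bookkeeping around $\delta$: one must be sure that, after matching $\delta$-moves across the product and then re-saturating/re-prioritizing, the resulting $U$ is simultaneously standard, prioritized, guarded \emph{and} provably satisfied by both processes — in particular that introducing a matched $\delta$-summand never forces a $\tau$-summand to be dropped (it cannot, by Corollary~\ref{cor}, since the two sides agree on initial $\tau$-capability at every related pair), and that $(Pri6)$ suffices to re-absorb any $\delta$-summand that becomes redundant. Modulo that, the argument is the standard one of~\cite{Mil89}.
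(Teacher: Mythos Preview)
Your overall strategy matches the paper's: saturate $S$ and $T$, run Milner's product construction from~\cite{Mil89}, and obtain a prioritized $U$. However, there is a genuine gap in your argument. You claim to maintain $P_i \simeq Q_j$ as an invariant across the product construction, and you invoke Corollary~\ref{cor} and Proposition~\ref{refcongr} (both stated for $\simeq$) to argue that $\tau$-capability agrees on both sides and that $\delta$-moves are matched by single $\delta$-moves. But observational congruence $\simeq$ is \emph{not} preserved by the transitions that generate new pairs: after the first step the invariant drops to weak bisimilarity $\approx$, and for $\approx$ neither Corollary~\ref{cor} nor Proposition~\ref{refcongr} holds (e.g.\ $\tau.\nil \approx \nil$). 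Consequently your direct argument that each $H''_{i,j}$ is prioritized fails as written, and at non-root pairs you must allow the case in which a $\delta$-move of $P_i$ is answered only by a $\tau$-move of $Q_j$ (cf.\ Proposition~\ref{necessary}), which your construction does not accommodate. Re-prioritizing the saturated $S$ and $T$ before building the product, as you propose, also destroys saturation clause~(ii), which is precisely what your ``provably satisfies'' step for $\delta$-summands relies on.

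The paper sidesteps all of this by a simpler route: it saturates $S$ and $T$ \emph{without} re-prioritizing them, runs Milner's procedure verbatim treating $\delta$ as just another visible action to obtain a standard guarded closed $U'$, and only \emph{then} applies Theorem~\ref{reduction} to make $U'$ prioritized. The one point that needs checking is that Milner's relation between the variables of the saturated $S$ and $T$ can still be deduced from $P \simeq Q$ despite priority; the paper's observation is that, because $S$ and $T$ were prioritized \emph{before} saturation, every $\delta$-prefix appearing in the saturated sets corresponds to an actual weak transition $\warrow{\delta}$ of the underlying process, so those syntactic $\delta$-summands are matched exactly as ordinary visible-action summands are in Milner's argument. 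Your proposal does not isolate this point, and the detour through re-prioritizing the inputs and arguing prioritization of $U$ directly is where the gap opens.
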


\begin{proof}
A standard guarded (and closed) equation set $U'$ provably satisfied by both $P$ and $Q$ can be derived from $S$ and $T$ (considered just as standard
guarded equation sets) by means of the procedure in the proof of the related theorem in~\cite{Mil89}.
In particular, as in~\cite{Mil89}, we can assume standard and guarded equation sets $S$ and $T$ to have been preliminarily saturated because of Lemma~\ref{satlemma}. The relation between the variables of saturated $S$ and $T$, which in~\cite{Mil89} is deduced from $P \simeq Q$, still holds here (in spite of priority).  This is because $S$ and $T$ are assumed to be prioritized standard guarded equation sets, hence no $\delta$ prefix that does not actually correspond 
to a weak transition $\warrow{\delta}$ occurs in the saturated standard equation sets obtained from $S$ and $T$. From the existance of $U'$ we derive the existance of the
the prioritized standard guarded equation set $U$ by means of Theorem~\ref{reduction}.
\end{proof}

Hence we have proved completeness over guarded processes of $\calp$ . 

\begin{thm}\label{compguarded}
If $P$ and $Q$ are guarded processes of $\calp$ and $P \simeq Q$ then 
$\cala \vdash P = Q$.

\end{thm}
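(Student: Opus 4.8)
The plan is to assemble, exactly along the lines of Milner's completeness proof in~\cite{Mil89}, the machinery developed in the preceding results. The first step is to apply the representability theorem (Theorem~\ref{repr}) to $P$ and to $Q$ separately: since both are guarded \emph{processes}, i.e.\ closed terms, the standard guarded equation sets $S$ and $T$ that they provably satisfy have no free variables, hence are closed. The second step is to invoke Theorem~\ref{reduction} to replace $S$ and $T$ by prioritized standard guarded (and still closed) equation sets $S'$ and $T'$ that are provably satisfied by $P$ and by $Q$ respectively.

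The third step uses the hypothesis $P \simeq Q$ together with the fact that $S'$ and $T'$ are prioritized, standard, guarded and closed: the mergeability theorem then yields a single prioritized, standard, guarded and closed equation set $U$ that is provably satisfied by both $P$ and $Q$. The fourth and final step is to apply Theorem~\ref{onesol}: since $U$ is guarded as an equation set, it has a standard solution $E \in \cale$ — closed, as $U$ is — and, moreover, any closed term that provably satisfies $U$ is provably equal to $E$. Applying this to $P$ and to $Q$ gives $\cala \vdash E = P$ and $\cala \vdash E = Q$, whence $\cala \vdash P = Q$ by symmetry and transitivity of provable equality.

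The substantive work has in fact already been discharged in the lemmas above, so this argument is essentially a bookkeeping step. The one point that needs care — and that was settled in the proof of the mergeability theorem together with Lemma~\ref{satlemma} — is that passing to \emph{saturated} prioritized standard equation sets does not introduce any $\delta$-prefix failing to correspond to a genuine weak transition $\warrow{\delta}$; this is what guarantees that the correspondence between the variables of the two saturated sets extracted from $P \simeq Q$ (via the reformulation of observational congruence in Proposition~\ref{refcongr}) behaves exactly as in the unprioritized setting. Once that is in hand, every remaining detail is literally the argument of~\cite{Mil89}, so I expect no genuine obstacle at this stage — the only mild subtlety is to keep track of the fact that closedness of $P$ and $Q$ propagates through representability, reduction and mergeability so that Theorem~\ref{onesol} can be applied with $\tilde W = \emptyset$.
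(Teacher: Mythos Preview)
Your proposal is correct and follows exactly the approach that the paper takes: the theorem is stated immediately after the mergeability theorem with the remark ``Hence we have proved completeness over guarded processes of $\calp$,'' and the intended argument is precisely the chain representability $\to$ reduction to prioritized $\to$ mergeability $\to$ unique solution that you spell out. Your observation about the $\delta$-prefix subtlety in saturation is also the point the paper flags in the mergeability proof.
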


Now we show that each unguarded process can be turned into a guarded
process of $\calp$, so that we obtain completeness also over 
unguarded processes. 
We start with a technical lemma which, in analogy to~\cite{Mil89},
will be used in the proof of this result.

\begin{lem}\label{makeung}
If $X$ occurs free and unguarded in $E \in \cale$, then $\cala \vdash E = X + E$.
\end{lem}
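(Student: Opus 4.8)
The statement to prove is Lemma~\ref{makeung}: if $X$ occurs free and unguarded in $E \in \cale$, then $\cala \vdash E = X + E$. I would proceed by induction on the structure of $E$, exactly in the spirit of the analogous lemma in~\cite{Mil89}, tracking where the unguarded (but possibly only weakly guarded) free occurrence of $X$ sits. The base case is $E \equiv X$ itself, where $X + E \equiv X + X = X = E$ by $(A3)$. The cases $E \equiv \nil$ and $E \equiv \gamma.E'$ cannot arise: in the first there is no free occurrence of $X$ at all, and in the second every free occurrence of $X$ in $E$ is weakly guarded, hence (since ``unguarded'' here means ``not strongly guarded'') the only way $E \equiv \gamma.E'$ could have an unguarded free occurrence of $X$ is if $\gamma = \tau$; I would handle that sub-case separately (see below), or fold it into the recursion case since it parallels $(Ung3)$.

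\textbf{The sum case.} If $E \equiv E' + E''$ and the unguarded free occurrence of $X$ lies in, say, $E'$, then by the induction hypothesis $\cala \vdash E' = X + E'$, so $\cala \vdash E = E' + E'' = (X + E') + E'' = X + (E' + E'') = X + E$ using $(A1),(A2)$. This is routine and symmetric in $E''$. The point to be careful about is that ``unguarded in $E$'' only asserts \emph{some} free occurrence of $X$ is unguarded, so I pick one such occurrence and follow it into the appropriate summand.

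\textbf{The recursion case — the main obstacle.} If $E \equiv \ms{rec}Y.E'$ with $Y \neq X$ (the case $Y \equiv X$ is impossible since then $X$ is not free in $E$), and $X$ has an unguarded free occurrence in $E$, then that occurrence is unguarded in $E'$ and $Y$ must also occur unguarded along the way — more precisely, the standard argument shows $\cala \vdash E' = X + Y + E'$ by applying the induction hypothesis (to $X$ and, after an auxiliary step, to $Y$). Then one uses $(Rec1)$ to unfold, $(Ung1)$ (in the form $\ms{rec}Y.(Y + F) = \ms{rec}Y.F$) to absorb the unguarded $Y$, and re-folds to conclude $\cala \vdash \ms{rec}Y.E' = X + \ms{rec}Y.E'$. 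The genuinely new wrinkle compared to~\cite{Mil89} is the possibility that the ``path'' from the root of $E$ to the unguarded occurrence of $X$ passes through a $\tau$-prefix rather than through a choice/recursion nesting only — i.e.\ $E$ may contain a subterm $\tau.F$ with $X$ unguarded in $F$. Here I would invoke $(Tau2)$, $E + \tau.E = \tau.E$, together with the inductive hypothesis applied inside $F$, to push the summand $X$ out past the $\tau$; this is exactly the mechanism by which ``weakly guarded but not strongly guarded'' occurrences are accommodated, and it is the step where the priority setting could in principle interfere. It does not, however, because all the axioms used ($A1$--$A4$, $Tau1$--$Tau3$, $Ung1$, $Rec1$) are sound for $\simeq$ by Theorem~\ref{soundness} and none of them involves $\delta$; the $\ms{pri}$ operator and the $\delta$-specific laws play no role in this lemma. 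Thus the only real work is bookkeeping: at each node of $E$, identify which sub-position carries the chosen unguarded occurrence of $X$, apply the induction hypothesis there, and propagate the extra $X$ summand outward using the appropriate structural axiom.
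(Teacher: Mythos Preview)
Your approach---structural induction on $E$ using $(A1)$--$(A4)$, $(Tau2)$, and $(Rec1)$---is correct and is exactly what the paper does: the paper's proof simply says ``The proof is exactly as in~\cite{Mil89}'', and your outline is a faithful reconstruction of Milner's argument.

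Two small inaccuracies to fix. First, the $\tau$-prefix case is \emph{not} a ``new wrinkle'' compared to~\cite{Mil89}: Milner's notion of unguarded already allows occurrences under $\tau$-prefixes, and the argument via $(Tau2)$ (namely $\tau.E' = E' + \tau.E' = X + E' + \tau.E' = X + \tau.E'$, using the IH $E' = X + E'$) is already his. Nothing about priority or $\delta$ enters here. Second, your recursion case is muddled: it is \emph{not} true that $Y$ must occur unguarded in $E'$ (it may be guarded or absent), so you cannot in general derive $E' = X + Y + E'$ or invoke $(Ung1)$. The clean argument is: from the IH get $E' = X + E'$, hence by congruence $\ms{rec}Y.E' = \ms{rec}Y.(X+E')$; unfold the right-hand side by $(Rec1)$ to obtain $X + E'\{\ms{rec}Y.(X+E')/Y\}$; since $\ms{rec}Y.(X+E') = \ms{rec}Y.E'$, substitution gives $E'\{\ms{rec}Y.(X+E')/Y\} = E'\{\ms{rec}Y.E'/Y\} = \ms{rec}Y.E'$, and you are done.
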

\begin{proof}
The proof is exactly as in~\cite{Mil89}.
\end{proof}


\begin{thm}\label{unguard}
For each process $P \in \calp$ there exists a guarded $P' \in \calp$ such that
$\cala \vdash P = P'$.
\end{thm}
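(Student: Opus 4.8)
The plan is to follow Milner's strategy for eliminating unguarded recursion, as in \cite{Mil89}, but adapted so that the intermediate terms may use the auxiliary operator $\ms{pri}$ and so that the new axioms $(Ung2)$ and $(Ung4)$ do the work that $Ung$ alone did in the priority-free setting. I would argue by induction on the structure of $P$. The cases $P \equiv \nil$ is trivial, and the cases $P \equiv \gamma.P_0$ and $P \equiv P_0 + P_1$ follow immediately from the induction hypothesis, since prefixing and summation of guarded processes are guarded; the only real case is $P \equiv \ms{rec}X.E$, where by induction hypothesis we may assume $E$ itself has already been rewritten, via $\cala$, into a form in which every proper subexpression that is a recursion is strongly guarded (in other words, $E$ is as unguarded as it can be only because of the top-level $\ms{rec}X$). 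So the heart of the proof is: given $\ms{rec}X.E$ with $E$ otherwise guarded, produce a provably equal guarded process.

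For that case I would first distinguish according to how $X$ occurs in $E$. If $X$ does not occur free in $E$, or occurs only strongly guarded, we are already done. If $X$ occurs fully unguarded (i.e.\ not even weakly guarded) in $E$, then by Lemma~\ref{makeung} we have $\cala \vdash E = X + E$, hence $\cala \vdash \ms{rec}X.E = \ms{rec}X.(X + E)$, and by $(Ung1)$ this equals $\ms{rec}X.E'$ where $E'$ is $E$ with that unguarded-variable summand removed; iterating, we may assume $X$ is weakly guarded in $E$. The remaining situation is that $X$ is weakly guarded but not strongly guarded, i.e.\ some free occurrences of $X$ lie under a $\tau$-prefix (and possibly under $\ms{pri}(\cdot)$). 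Here I would use $(Ung3)$ and the new axiom $(Ung4)$ to push all such occurrences of $X$ (and $\ms{pri}(X)$) up to the top level of the summands guarded by that $\tau$, bringing $\ms{rec}X.E$ into the canonical shape $\ms{rec}X.(\tau.X + E_0)$ with $X$ strongly guarded in $E_0$ — exactly the left-hand side of $(Ung2)$. Applying $(Ung2)$ then rewrites this to $\ms{rec}X.(\tau.\ms{pri}(E_0))$. Now $X$ is strongly guarded in $\ms{pri}(E_0)$ (being strongly guarded already in $E_0$ and the $\ms{pri}$ operator not introducing any new unguarded occurrences), so the recursion is strongly guarded; finally, using the $(Pri1)$--$(Pri5)$ axioms we can drive the operator $\ms{pri}$ inward and eliminate it entirely, since on the (finite) syntax tree of $E_0$ these axioms compute $\ms{pri}$ away down to the prefixes, leaving a term of $\cale$ (no $\ms{pri}$) that is strongly guarded. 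That is the required $P'$.

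The main obstacle I anticipate is the bookkeeping in the step that transforms a weakly-guarded-but-not-strongly-guarded recursion into the form $\ms{rec}X.(\tau.X + E_0)$: one has to handle nested $\tau$-prefixes, multiple $\tau$-guarded summands each containing occurrences of $X$, and the occurrences of $\ms{pri}(X)$ that the earlier application of $(Ung2)$ on inner recursions may have introduced — this is precisely why $(Ung4)$ is needed, to collapse $\ms{pri}(X)$ back to $X$ inside such contexts. Making the induction go through cleanly requires choosing a good measure (e.g.\ on the number of $\tau$-prefixes separating an unguarded occurrence of $X$ from the top, lexicographically combined with term size) so that each rewrite strictly decreases it; this is the same subtlety present in \cite{Mil89}, but complicated here by the $\ms{pri}$ bookkeeping. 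Everything else — soundness of the individual steps (Theorem~\ref{soundness}), the fact that $\ms{pri}$ can always be eliminated by $(Pri1)$--$(Pri5)$, and the reduction of the outer induction to the $\ms{rec}$ case — is routine once this step is in place.
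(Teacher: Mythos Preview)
Your overall strategy matches the paper's, and you correctly identify that $(Ung3)$, $(Ung4)$ bring the body into the shape $\tau.X + E_0$ and that $(Ung2)$ then introduces $\ms{pri}(E_0)$. But there is a genuine gap in the final step where you claim that $(Pri1)$--$(Pri5)$ ``drive the operator $\ms{pri}$ inward and eliminate it entirely, \ldots\ leaving a term of $\cale$ (no $\ms{pri}$)''.

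The problem is that the induction must be carried out on \emph{open} expressions (since $E$ in $\ms{rec}X.E$ is open), and when you process an inner recursion $\ms{rec}X.E$ that sits inside some outer $\ms{rec}Y.(\ldots)$, the body $E$ has $Y$ free. Applying $(Ung2)$ yields $\ms{pri}(E_0)$ where $E_0$ may contain $Y$ as a top-level summand, and there is \emph{no} axiom among $(Pri1)$--$(Pri5)$ for $\ms{pri}(Y)$ with $Y$ a variable: it simply cannot be eliminated. The paper's own example makes this concrete: starting from $\ms{rec}Y.(\ms{rec}X.(\tau.(X+b.X+\delta.\nil)+Y)+Y)$, processing the inner $\ms{rec}X$ produces $\ms{rec}X.\tau.(b.X+\ms{pri}(Y))$, with $\ms{pri}(Y)$ irreducible at that stage.

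The paper therefore does \emph{not} try to eliminate $\ms{pri}$ at each inductive step. Instead it carries an invariant (seven properties) that $\ms{pri}$ occurs only as $\ms{pri}(Y)$ for free $Y$, weakly guarded, never in a sum together with a bare variable, and with every recursion subterm strongly guarded. Then, when such a $Y$ finally becomes bound by the enclosing $\ms{rec}Y$, a separate and rather delicate procedure is required: unfold via $(Rec1)$ to expose subterms $\ms{pri}(\ms{rec}Y.F'')$; unfold those once more and apply the $(Pri)$ axioms so that the offending subterm appears in serial, strongly guarded position; then invoke $(Rec2)$ to refold into a fresh recursion $\ms{rec}Y'.T'$ that no longer mentions $\ms{pri}$ of a bound variable. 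A second pass with $(Rec2)$ removes the residual copies of $\ms{rec}Y.F''$. Your sketch omits this entire mechanism; without it the argument does not close even for the two-level nested example above.
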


\begin{proof}
We show, by structural induction, that given an expression 
$E \in \cale$, it is possible to find an expression $F \in \cale_{pri}$
such that:
\begin{enumerate}
\item if $\ms{pri}(G)$ is a subexpression of $F$ then $G \equiv X$ for
some free variable $X$;
\item for any free variable $X$, $\ms{pri}(X)$ is weakly guarded in $F$, i.e. 
each occurrence of $\ms{pri}(X)$ is within some 
subexpression of $F$ of the form $\gamma.G$;
\item a summation cannot have both $\ms{pri}(X)$ and $Y$ as arguments, for any
(possibly coincident) variables $X$ and $Y$;
\item for any variable $X$, each subterm $\ms{rec}X . G$ of $F$  
is (strongly) guarded in $F$, i.e. each occurrence of $\ms{rec}X . G$ 
is within some subexpression of $F$ of the form $\gamma.H$, with $\gamma 
\neq \tau$;
\item $F$ is guarded;
\item all variables occurring free in $F$ occur free also in $E$ 
\item $\cala \vdash E = F$.
\end{enumerate}

Notice that a consequence of property $4$ is that
each unguarded occurrence of any free variable $X$ of $F$ does not lie 
within the scope of a subexpression $\ms{rec}Y . G$ of $F$;

Showing this result proves the theorem, in that if $E \in \cale$ is a process of 
$\calp$, i.e. a closed term, we have (by the properties of $F$ above) that $F$ 
is also a process of $\calp$, it is guarded, and $\cala \vdash E = F$.

The result above is derived by structural induction on the syntax of an 
expression $E \in \cale$ as follows.

\begin{itemize}
\item If $E \equiv \nil$, then $F \equiv \nil$.
\item If $E \equiv X$, for some variable $X$, then $F \equiv X$.
\item If $E \equiv \gamma. E'$ then $F \equiv \gamma . F'$, where
$F'$ is the term obtained from $E'$ via the induction hypothesis.
\item If $E \equiv E' + E''$, then $F \equiv F' + F''$, where 
$F'$ and $F''$ are the terms obtained from $E'$ and $E''$, respectively, via the 
induction hypothesis.
\item If $E \equiv \ms{rec}X . E'$, then $F$ is evaluated as follows. \\
First of all we derive from the term $F'$, obtained from $E'$ 
via the induction hypothesis, a term $F''$ such that:
\begin{itemize}
\item $F''$ satisfies the properties $1-5$ above;
\item $X$ is guarded in $F''$;
\item all variables occurring free in $F''$ occur free also in $E'$;
\item $\cala \vdash \ms{rec}X . E' = \ms{rec}X . F''$;
\end{itemize}
We start by eliminating all fully unguarded occurrences of $X$ 
in $F'$, via the axiom $(Ung1)$, so that we obtain a term $G$ such that
$X$ is weakly guarded in $G$ and $\ms{rec}X . E' = \ms{rec}X . G$. 
The axiom $(Ung1)$ is sufficient because $F'$ satisfies the property $2$ above,
so we do not need to deal with priority. Notice that $G$ still satisfies properties $1-5$ above. \\
Then we need to remove weakly guarded occurrences of $X$ in $G$ (if 
there are not, we just take $F'' \equiv G$). 
In order
to do this we begin by evaluating a term $G'$ such that:
\cws{11}{15}{G' \equiv \tau.X + G''} 
where $X$ is guarded in $G''$ and $\ms{rec}X . G = \ms{rec}X . G'$. \\
We do this by employing the following iterative procedure, where initially
we let $H \equiv G$.

\begin{itemize}

\item We start each iteration with $H$ such that $X$ is weakly 
guarded in $H$ and $H$ satisfies the properties $1$, $3$, $4$ and $5$ above. 
If there still exists $H' \not\equiv X$ with $X$ occurring unguarded in $H'$
such that $H$ has the form $\tau . H' + H''$ we continue with the
procedure. Otherwise $H$ can be turned into a term $G' \equiv \tau.X + G''$ 
with the property above, by simply 
using $\tau.X + \tau.X = \tau.X$ and we are finished. 

\item Since $H$ satisfies property $3$, we have the following cases for the 
structure of $H'$ (after applying the idempotency law $(A3)$ to $X$ or 
$\ms{pri}(X)$ if needed).
\begin{enumerate}

\item $H'$ has the form $X + H'''$, where $X$ is weakly guarded in $H'''$. 
In this case we consider the term $H'''' \equiv \tau . X + H''' + H''$ and
we have $\ms{rec}X . H = \ms{rec}X . H''''$ by applying the axiom $(Ung3)$.

\item $H'$ has the form $\ms{pri}(X) + H'''$, where $X$ is weakly guarded in $H'''$.
In this case we consider again the term $H'''' \equiv \tau . X + H''' + H''$ and
we have $\ms{rec}X . H = \ms{rec}X . H''''$ by applying the axiom $(Ung4)$.


\item $X$ is weakly guarded in $H'$ . We have two sub-cases.

\begin{itemize}
\item[(i)] If $\ms{pri}(X)$ occurs in $H'$, then we do the following. \\
We consider a new variable $Y'$ for each variable $Y$ such that 
$\ms{pri}(Y)$ appears in $H'$. Let $H'''$ be the unique expression
of $\cale$ such that if we replace $\ms{pri}(Y)$ for each new variable $Y'$ 
inside $H'''$, we obtain the term $H'$. \\
Since $X$ occurs unguarded in $H'$, then $X'$ occurs unguarded in $H'''$ and
from Lemma~\ref{makeung} we have that $H''' = X' + H'''$. 
Hence, since substitution preserves equality, we have $H' = \ms{pri}(X) + H'$
and we continue as in the case $2$.

\item[(ii)] Otherwise we do the following. \\
We consider a new variable $Y'$ for each variable $Y$ such that 
$\ms{pri}(Y)$ appears in $H'$. Let $H'''$ be the unique expression
of $\cale$ such that if we replace $\ms{pri}(Y)$ for each new variable $Y'$ 
inside $H'''$, we obtain the term $H'$. \\
Since $X$ occurs unguarded in $H'$, then $X$ occurs unguarded also in $H'''$ and
from
Lemma~\ref{makeung} we have that $H''' = X + H'''$. 
Hence, since substitution preserves equality, we have $H' = X + H'$
and we continue as in the case $1$.

\end{itemize}

\end{enumerate}

\item Now a new iteration is performed starting from the term $H''''$ that we have 
obtained. Since there is at least one unguarded
occurrence of $X$ such that the length of the $\tau$ path that leads to
that occurrence is shorter in $H''''$ than in $H$, we are guaranteed that the
iterative procedure will eventually terminate.

\end{itemize}
Now we are in a position to derive the term $F''$ with the properties we 
described above. From $G' \equiv \tau.X + G''$ we have $\ms{rec}X . G' = 
\ms{rec}X . \tau. \ms{pri}(G'')$ by applying the axiom $(Ung2)$. 
$X$ is guarded in the term $\tau. \ms{pri}(G'')$ and 
such term satisfies properties $2$, $4$ and $5$. 

Due to the property $4$, the term $G''$ has the following structure:
\cws{10}{5}{G'' \equiv \sum_i \alpha_i . H_i +
\sum_j \delta . H'_j + \sum_k Y_k + \sum_h \ms{pri}(Y'_h) .}
where the variables $Y_k$ and $Y'_h$ are free and do not coincide with $X$
because $X$ is guarded in $G''$.
By applying the axioms $(Pri1)$, $(Pri2)$, $(Pri3)$, $(Pri4)$
and $(Pri5)$, we obtain a term $F'' = \tau. \ms{pri}(G'')$,
where:
\cws{10}{5}{F'' \equiv \tau . ( \, \sum_i \alpha_i . H_i +
\sum_k \ms{pri}(Y_k) + \sum_h \ms{pri}(Y'_h) \,) .}
We have, thus, obtained an $F''$ that satisfies also properties $1$ and $3$.

Finally, once obtained (possibly by applying the procedure above for weak guarded recursion elimination, if needed) a term $F''$ with the properties listed above, we have to transform the term $\ms{rec}X . F''$ so to 
obtain a term $F$ that satisfies the seven properties of the induction statement. \\
$\ms{rec}X . F''$ already satisfies the properties $2$ and $3$
because $F''$ satisfies them. Moreover $\ms{rec}X . F''$ already satisfies
property $5$, i.e. it is guarded, because $X$ is guarded in $F''$ and 
$F''$ already satisfies that property. \\
In order to satisfy property $1$ we have to remove the occurrences
of $\ms{pri}(X)$ because now $X$ is no longer a free variable. \\
We unfold $\ms{rec}X . F''$ via the axiom $(Rec1)$ and we obtain 
$F''' \equiv F'' \{ \ms{rec}X . F'' 
/ X \} = \ms{rec}X . F''$.
Notice that $F'''$ may include subterms of the form $\ms{pri}(\ms{rec}X . $ 
$F'')$. \\
We deal with such subterms in the following way. 
We have $\ms{pri}(\ms{rec}X . F'') = \ms{pri}(F'' \{ \ms{rec}X . 
F'' \linebreak
/ X \})$,
by applying law $(Rec1)$. \\
Since $X$ is guarded in $F''$ and $F''$ satisfies the property $4$, 
the term $F'' \{ \ms{rec}X .$ 
$F'' / X \}$
has the following structure:
\cws{10}{5}{F'' \{ \ms{rec}X . F'' / X \} \equiv \sum_i \alpha_i . H_i +
\sum_j \delta . H'_j + \sum_k Y_k + \sum_h \ms{pri}(Y'_h) .}
where the variables $Y_k$ and $Y'_h$ are free and obviously do not coincide 
with $X$.
By applying the axioms $(Pri1)$, $(Pri2)$, $(Pri3)$, $(Pri4)$
and $(Pri5)$, we obtain a term $T = \ms{pri}(F'' \{ \ms{rec}X .
F'' / X \})$,
where:
\cws{10}{5}{T \equiv \sum_i \alpha_i . H_i +
\sum_k \ms{pri}(Y_k) + \sum_h \ms{pri}(Y'_h) .}
Now let us consider a new variable $X'$. 
Let $T'$ be the unique term, not having
$\ms{pri}(\ms{rec}X . F'')$ as a subterm, such that 
$T' \{ \ms{pri}(\ms{rec}X . F'') / X' \} \equiv T$.
Since $X'$ is guarded in $T'$ (because $X$ is guarded in $F''$) and 
serial in $T'$ (thanks to the transformation of 
$\ms{pri}(F'' \{ \ms{rec}X . F'' / X \})$ into $T$),
from $\ms{pri}(\ms{rec}X . F'') = T' \{ \ms{pri}(\ms{rec}X . F'') / X' \}$
we derive, by applying the law $(Rec2)$, $\ms{pri}(\ms{rec}X . F'') 
= \ms{rec}X' . T'$. \\
Therefore now we can replace all occurrences of $\ms{pri}(\ms{rec}X . F'')$
in $F'''$ with $\ms{rec}X' . T'$, and we obtain a term $F'''' = F'''$ 
not having
$\ms{pri}(\ms{rec}X . F'')$ as a subterm. In order to satisfy the
property $1$ we still have to remove the occurrences of $\ms{rec}X . F''$
in $F''''$, so that we get rid of the occurrences of 
$\ms{pri}(X)$ appearing inside $F''$. \\
In order to do this, we consider another new variable $X''$. 
Let $T''$ be the unique term, not having
$\ms{rec}X . F''$ as a subterm, such that 
$T'' \{ \ms{rec}X . F'' / X'' \} 
\equiv F''''$. 
Since $X''$ is guarded in $T''$ (because each occurrence of the expression 
$\ms{rec}X . F''$ is guarded both in $T'$ and in $F'''$,
since $X$ is guarded in $F''$) and 
serial in $T''$ (thanks to the substitution of
$\ms{pri}(\ms{rec}X . F'')$ with $\ms{rec}X' . T'$),
from $\ms{rec}X . F'' = F'''' \equiv T'' \{ \ms{rec}X . F'' / X'' \}$
we derive, by applying the law $(Rec2)$, $\ms{rec}X . F'' =
\ms{rec}X'' . T''$. \\
Now we have that $\ms{rec}X'' . T''$ satisfies the property $1$, because
such term no longer uses the variable $X$, and does not include 
occurrences of $\ms{pri}(X')$ or $\ms{pri}(X'')$. \\
Moreover, since $X''$ is guarded in $T''$ and each occurrence of $\ms{rec}X'. 
T'''$, for any $T'''$, is guarded in $T''$ (because each occurrence of
$\ms{pri}(X)$ is guarded in $F''$), in order to obtain property $4$ 
it is sufficient to unfold $\ms{rec}X'' . T''$, by applying the law $(Rec1)$. 
In this way we finally obtain $F \equiv T'' \{ \ms{rec}X'' . T'' 
/ X'' \} 
= \ms{rec}X'' . T''$ that satisfies the seven properties above.
\end{itemize}
\end{proof}

\begin{exa}
We here show an example of transformation of an unguarded process $P \in \calp$ into a guarded $P' \in \calp$, following the inductive procedure described in the proof of Theorem~\ref{unguard}. We take $P$ to be:
$$\ms{rec}Y. (\ms{rec}X. (\tau.(X+ b.X + \delta. \nil) + Y) + Y)$$
We start from subexpression $\tau.(X+ b.X + \delta. \nil) + Y$, which is not modified by the inductive transformation. 

When $\ms{rec}X. (\tau.(X + b.X + \delta. \nil)+Y)$ is considered,
we first remove fully unguarded occurrences of $X$ from $\tau.(X+ b.X + \delta. \nil) + Y$ (there are not). Then, we remove weakly unguarded occurrences of $X$: one iteration yielding $\tau.X + b.X + \delta. \nil +Y$  (case $(1)$ for the structure of $H'$, which in our example is $X + b.X + \delta. \nil$) is sufficient. Such an iteration is correct because, thanks to $(Ung3)$ we have  $\ms{rec}X. (\tau.(X + b.X + \delta. \nil)+Y) = \ms{rec}X. (\tau.X + b.X + \delta. \nil +Y)$. We can now apply $(Ung2)$ to get $\ms{rec}X. (\tau.\ms{pri}(b.X + \delta. \nil +Y))$
and $(Pri)$ axioms to get $\ms{rec}X. (\tau.(b.X +\ms{pri}(Y)))$. We terminate this level of the structural induction by 
unfolding recursion via axiom $(Rec1)$, thus getting $\tau.(b.\ms{rec}X.\tau.(b.X +\ms{pri}(Y)) +\ms{pri}(Y))$.\footnote{In the proof of Theorem~\ref{unguard} the actual term derived at the end of the induction step is $\tau.(b.\ms{rec}X''.\tau.(b.X'' +\ms{pri}(Y)) +\ms{pri}(Y))$ with variable $X''$ taking the place of $X$.}

Finally, we consider $\ms{rec}Y. (F'+Y)$ with $F' \equiv \tau.(b.\ms{rec}X. (\tau.(b.X +\ms{pri}(Y))) +\ms{pri}(Y))$.
We first remove fully unguarded occurrences of $Y$ in $\tau.(b.\ms{rec}X.\tau.(b.X +\ms{pri}(Y)) +\ms{pri}(Y))+Y$
obtaining $\tau.(b.\ms{rec}X.\tau.(b.X +\ms{pri}(Y)) +\ms{pri}(Y))$. Such a transformation is correct because, thanks to $(Ung1)$, the two expression are equated when used as arguments of $\ms{rec}Y.\_$.
Then, we remove weakly unguarded occurrences of $Y$: one iteration yielding $\tau.Y + b.\ms{rec}X. \tau.(b.X +\ms{pri}(Y))$  (case $(2)$ for the structure of $H'$, which in our example is $b.\ms{rec}X.\tau.(b.X +\ms{pri}(Y)) + \ms{pri}(Y)\!$ ) is sufficient. Such an iteration is correct because, thanks to $(Ung4)$, the two expression are equated when used as arguments of $\ms{rec}Y.\_$. 
We can now apply $(Ung2)$ to get $\ms{rec}Y.\tau.\ms{pri}(b.\ms{rec}X. \tau.(b.X +\ms{pri}(Y)))$
and $(Pri)$ axioms to get $\ms{rec}Y.F''$, with $F'' \equiv \tau.b.\ms{rec}X.\tau.(b.X +\ms{pri}(Y))$.
In order to get rid of $\ms{pri}(Y)$ subterms 
we now unfold recursion via axiom $(Rec1)$, thus getting 
$\tau.b.\ms{rec}X.\tau.(b.X +\ms{pri}(\ms{rec}Y.F''))$. We now aim at replacing $\ms{pri}(\ms{rec}Y.F'')$ subterms. This is done by 
using the fact that the same transformation via axiom $(Rec1)$ can be done inside a $\ms{pri}(\_)$ operator, i.e.\
$\ms{pri}(\ms{rec}Y.F'') = \ms{pri}(\tau.b.\ms{rec}X.\tau.(b.X +\ms{pri}(\ms{rec}Y.F'')))$. By using $(Pri)$ axioms on the latter term we obtain $\tau.b.\ms{rec}X.\tau.(b.X +\ms{pri}(\ms{rec}Y.F''))$. Considering a new variable $Y'$ such a term can be written as $\tau.b.\ms{rec}X.\tau.(b.X + Y') \{ \ms{pri}(\ms{rec}Y.F'') / Y' \}$.
We have therefore that $\ms{pri}(\ms{rec}Y.F'')=\tau.b.\ms{rec}X.\tau.(b.X + Y') \{ \ms{pri}(\ms{rec}Y.F'') / Y' \}$ with $Y'$ being guarded and serial in $\tau.b.\ms{rec}X.\tau.(b.X + Y')$. Hence, by applying axiom $(Rec2)$, we have $\ms{pri}(\ms{rec}Y.F'')=
\ms{rec}Y'.\tau.b.\ms{rec}X.\tau.(b.X + Y')$.
We can, thus, achieve our  goal of replacing such a term inside $\tau.b.\ms{rec}X.\tau.(b.X +\ms{pri}(\ms{rec}Y.F''))$,  obtaining $\tau.b.\ms{rec}X.\tau.(b.X +\ms{rec}Y'.\tau.b.\ms{rec}X.\tau.(b.X + Y'))$.
\end{exa}

From Theorem~\ref{compguarded} and Theorem~\ref{unguard} 
we derive the completeness of $\cala$ over processes of $\calp$.

\begin{thm}
Let $P, Q \in \calp$. If $P \simeq Q$ then 
$\cala \vdash P = Q$.
\end{thm}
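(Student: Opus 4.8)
The plan is to reduce the general case to the guarded case that has already been settled in Theorem~\ref{compguarded}, using Theorem~\ref{unguard} to eliminate unguardedness. Concretely, given $P, Q \in \calp$ with $P \simeq Q$, I would first invoke Theorem~\ref{unguard} twice to obtain guarded processes $P', Q' \in \calp$ with $\cala \vdash P = P'$ and $\cala \vdash Q = Q'$.

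Next I would transfer the semantic hypothesis across these provable equalities. By soundness (Theorem~\ref{soundness}), $\cala \vdash P = P'$ gives $P \simeq P'$, and likewise $Q \simeq Q'$. Since $\simeq$ is an equivalence relation (in particular symmetric and transitive), chaining $P' \simeq P \simeq Q \simeq Q'$ yields $P' \simeq Q'$.

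Now $P'$ and $Q'$ are \emph{guarded} processes of $\calp$ with $P' \simeq Q'$, so Theorem~\ref{compguarded} applies and gives $\cala \vdash P' = Q'$. Putting the three derivations together by transitivity of provable equality, $\cala \vdash P = P' = Q' = Q$, hence $\cala \vdash P = Q$, as required.

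There is essentially no obstacle at this final step: all the work has been done beforehand. The only points to check are that soundness is available to move $\simeq$ back and forth across $\cala$-provable equalities (Theorem~\ref{soundness}) and that $\simeq$ is genuinely transitive and symmetric on $\calp$, which was recorded earlier. The genuine difficulties — establishing mergeability of saturated prioritized standard guarded equation sets, and the inductive procedure of the proof of Theorem~\ref{unguard} that removes unguardedness while correctly managing the auxiliary $\ms{pri}(\cdot)$ operator via $(Ung2)$ and $(Ung4)$ — lie in those earlier results and are simply cited here.
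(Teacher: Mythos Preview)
Your proposal is correct and follows exactly the approach the paper takes: the paper simply states that the result follows from Theorem~\ref{compguarded} and Theorem~\ref{unguard}, and you have spelled out the routine chaining (via soundness and transitivity of $\simeq$) that makes this explicit.
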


Note that all the axioms of $\cala$ are actually used in the proof of completeness.
In particular in the proof of completeness over guarded expressions 
(Theorem~\ref{compguarded}) we employ
the standard axioms $(A1)-(A4)$, $(Tau1)-(Tau3)$ and $(Rec1),(Rec2)$ as in~\cite{Mil89},
plus the new axioms $(Pri3)$ and $(Pri6)$. All these axioms are necessary even if
we restrict ourselves to consider completeness over guarded processes only.
Moreover, proving that a process of $\calp$ can always be turned into
a guarded process (Theorem~\ref{unguard}) requires the use of the remaining 
axioms $(Pri1), (Pri2), (Pri4), (Pri5)$ and $(Ung1)-(Ung4)$. 
This supports the claim that our axiomatization is irredundant.

\section{Discrete Time}
\label{SectDRT}

       \begin{table}[t]

\centerline{\small
$\begin{array}{|clcl|}
\hline 
& & & \\[-.3cm]
\infr{P \arrow{\alpha} P'}
{P \pco{S} Q \arrow{\alpha} P' \pco{S} Q} \; \, \ms{\alpha \! \notin \! S} 
\negspace{\; \, \ms{\alpha \! \notin \! S}}
& \tabspace{\; \, \ms{\alpha \! \notin \! S}} \hspace{.4cm} & 
\infr{Q \arrow{\alpha} Q'} 
{P \pco{S} Q \arrow{\alpha} P \pco{S} Q'} 
\; \, \ms{\alpha \! \notin \! S} 
\negspace{\; \, \ms{\alpha \! \notin \! S}}
& \tabspace{\; \, \ms{\alpha \! \notin \! S}}
\\[0.7cm]
\multicolumn{3}{|c}{
\infr{P \arrow{a} P' \, \hspace{.5cm} \, Q \arrow{a} Q'}
{P \pco{S} Q \arrow{a} P' \pco{S} Q'}
\; \, a \in S
\negspace{\; \, a \in S}
} & \\[0.7cm]
\infr{P \arrow{a} P'}{P / L \arrow{\tau} P' / L} 
\hspace{.2cm} a \in L \hspace{.8cm} 
\negspace{\hspace{.2cm} a \in L \hspace{.8cm}}
& & \infr{P \arrow{\alpha} P'}{P / L \arrow{\alpha} P' / L} 
\hspace{.2cm} \alpha \notin L 
\negspace{\hspace{.2cm} a \notin L} 
& 
\\[0.5cm]
\hline
\end{array}$}

\caption{Standard Rules for Parallel Composition and Hiding}\label{TabRulesStandStatic}

    \end{table}

        \begin{table}[t]

\[\begin{array}{|clc|} 
\hline 
&& \\[-.3cm]
\infr{P \arrow{\delta} P' \, \hspace{.5cm} \, Q \arrow{\delta} Q'} 
{P \pco{S} Q \arrow{\delta} P' \pco{S} Q'} 
& \hspace{.2cm} &\infr{P \arrow{\delta} P' \, \hspace{.5cm} \, \notexists a \in L \ldotp
P \arrow{a}}{P / L \arrow{\delta} P' / L}
\\[0.6cm]
\hline 
\end{array}
\]

\caption{Rules for Discrete Time Transitions}\label{TabRulesDRT}

        \end{table}


We now 
interpret unprioritized $\delta$ actions of the basic calculus as representing time delays in the context of {\it discrete time}, see~\cite{HenR}. We first show that we can extend the basic calculus of Section~\ref{SectAPFSB} with {\it static} operators, like CSP~\cite{Hoa} parallel composition and hiding, preserving the congruence property of standard observational congruence.
We then consider a full discrete time calculus and provide a complete axiomatization.


In the following
we will make use of the CSP~\cite{Hoa} parallel composition
operator ``$P \pco{S} Q$'', where standard actions with type in $S$ are required
to synchronize, while the other standard actions are executed independently from $P$ 
and $Q$. Such an operator will be used in combination with a hiding
operator ``$P / L$'', which turns all the standard actions of $P$ whose
type is in $L$ into $\tau$ and does not affect the other standard actions.
See Table~\ref{TabRulesStandStatic} for the standard operational rules of such operators. 
Along the lines of~\cite{HenR},
the behavior of parallel composition concerning ``tick'' transitions
is defined by the operational rule in Table \ref{TabRulesDRT},
which states that time is allowed to elapse in $P \pco{S} Q$ only if {\it both}
processes $P$ and $Q$ may explicitely make it pass via $\delta$
transitions. Moreover, due to the maximal progress assumption,
the generation of $\tau$ actions enacted by the hiding operator
must cause all alternative $\delta$ actions to be pre-empted, as
formalized by the rule in Table \ref{TabRulesDRT}.

\begin{thm}\label{EDRTcongruence1}
$\approx$ and $\simeq$ are congruences with respect to both $\_ \pco{S} \_$ and $\_ /L$ operators.
\end{thm}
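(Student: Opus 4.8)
\ The plan is to settle the statement for $\approx$ first, by exhibiting an explicit weak bisimulation for each of the two operators (in each argument), and then to lift it to $\simeq$ by a root--condition argument based on Proposition~\ref{refcongr}. As a preliminary step I would check that maximal progress (Proposition~\ref{maxprog}) extends to the calculus enriched with $\_\pco{S}\_$ and $\_/L$: an easy induction on the term structure shows that a term performing $\tau$ cannot perform $\delta$, the only new cases being $P\pco{S}Q\arrow{\tau}$, which forces $P\arrow{\tau}$ or $Q\arrow{\tau}$ and hence (by the inductive hypothesis) $P\notarrow{\delta}$ or $Q\notarrow{\delta}$, so $P\pco{S}Q\notarrow{\delta}$; and $P/L\arrow{\tau}$, which forces either $P\arrow{\tau}$ (hence $P\notarrow{\delta}$) or $P\arrow{a}$ for some $a\in L$, in which case the negative premise of the $\delta$-rule for hiding in Table~\ref{TabRulesDRT} is violated. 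Consequently Propositions~\ref{reformulation}, \ref{necessary} and~\ref{refcongr}, whose proofs rely only on maximal progress, remain valid over the processes of the extended calculus, and I will use them freely.

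For $\approx$ and $\_\pco{S}\_$ I would show that $\beta=\{\,(P\pco{S}R,\,Q\pco{S}R)\mid P\approx Q,\ R\in\calp\,\}$ is a weak bisimulation (the claim in the second argument following by the symmetric argument together with transitivity of $\approx$). A move of $P\pco{S}R$ not involving $P$ (i.e.\ $R\arrow{\alpha}R'$ with $\alpha\notin S$) is matched by the identical move of $Q\pco{S}R$; a move coming from $P$ alone ($P\arrow{\alpha}P'$, $\alpha\notin S$) or from a synchronisation ($P\arrow{a}P'$, $R\arrow{a}R'$, $a\in S$) is matched by lifting the weak transition $Q\warrow{\hat{\alpha}}Q'$ (resp.\ $Q\warrow{a}Q'$, with $P'\approx Q'$) through the parallel operator, which is legitimate since $\tau\notin S$ leaves the internal steps of $Q$ untouched. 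The delicate case is $P\pco{S}R\arrow{\delta}P'\pco{S}R'$, inferred from $P\arrow{\delta}P'$ and $R\arrow{\delta}R'$: by Proposition~\ref{reformulation} there is $Q''$ with $Q\arrow{\tau}\!\!^*\;Q''$ such that $Q''\arrow{\delta}\arrow{\tau}\!\!^*\;Q'$ for some $Q'$ with $P'\approx Q'$; lifting $Q\arrow{\tau}\!\!^*\;Q''$ through $\pco{S}R$, then firing the joint $\delta$-move of $Q''$ and $R$ by the rule in Table~\ref{TabRulesDRT}, then lifting the trailing $\tau$-steps through $\pco{S}R'$, gives $Q\pco{S}R\warrow{\delta}Q'\pco{S}R'$ with $(P'\pco{S}R',\,Q'\pco{S}R')\in\beta$.

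For $\approx$ and $\_/L$ I would argue analogously with $\beta'=\{\,(P/L,\,Q/L)\mid P\approx Q\,\}$. Visible and silent moves of $P/L$ lift routinely: a move from $P\arrow{a}P'$ with $a\in L$ becomes a $\tau$-move, matched via $Q\warrow{a}Q'$ (whose $a$-step is hidden into $\tau$), and a move from $P\arrow{\alpha}P'$ with $\alpha\notin L$ is matched via $Q\warrow{\hat{\alpha}}Q'$ (lifted since that step is not hidden). The key case is $P/L\arrow{\delta}P'/L$, which requires $P\arrow{\delta}P'$ and $P\notarrow{a}$ for all $a\in L$. Taking $Q''$ as given by Proposition~\ref{reformulation}: since $P\notarrow{\tau}$ by maximal progress, $Q\arrow{\tau}\!\!^*\;Q''$ can only be matched by $P$ standing still, so $P\approx Q''$; were $Q''\arrow{a}$ for some $a\in L$, then $P\warrow{a}$, hence (again as $P\notarrow{\tau}$) $P\arrow{a}$, contradicting the hypothesis --- so $Q''\notarrow{a}$ for all $a\in L$ and the $\delta$-rule for hiding allows $Q''/L\arrow{\delta}$. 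Lifting the surrounding $\tau$-steps through $/L$ then yields $Q/L\warrow{\delta}Q'/L$ with $P'\approx Q'$, closing the case.

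Finally, for $\simeq$: given $P\simeq Q$, to prove $P\pco{S}R\simeq Q\pco{S}R$ and $P/L\simeq Q/L$ I would verify the reformulated root condition of Proposition~\ref{refcongr}. A first move of $P\pco{S}R$ (or $P/L$) not involving $P$ is matched by the identical move, whose residuals are $\approx$-related by the congruence of $\approx$ just established; a first move involving $P$ is matched, via the root condition of $P\simeq Q$, by a genuine strong-then-weak move of $Q$ --- and, for $\delta$, by a $\delta$-then-$\tau$-steps move --- lifted through the static operator exactly as above, again landing in $\approx$. The \emph{main obstacle} throughout is the $\delta$-case: one must use the priority-aware reformulation of Proposition~\ref{reformulation} to locate an internal derivative $Q''$ of $Q$ that is $\tau$-stable (and, for hiding, also stable under $L$-actions, which is where the interaction with the negative premise is subtle) before performing the matching $\delta$-step; once this is in place, every remaining case is a routine transfer of component transitions through the static operator.
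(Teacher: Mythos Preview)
Your proposal is correct and follows essentially the same route as the paper: you build the same candidate bisimulations, invoke Proposition~\ref{reformulation} to handle the $\delta$-case (locating the $\tau$-stable intermediate $Q''$, and for hiding arguing that $Q''$ is also $L$-stable because $P$ is $\tau$-stable), and then lift to $\simeq$ via Proposition~\ref{refcongr} with $Q''\equiv Q$. Your explicit preliminary check that maximal progress extends to the static operators is a welcome addition that the paper leaves implicit; the only cosmetic slip is quantifying $R$ over $\calp$ rather than over the processes of the extended calculus.
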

\begin{proof}
We will show congruence of $\approx$ for both $\_ \pco{S} \_$ and $\_ /L$
operators and then congruence of $\simeq$.
In the proof we will often exploit the fact that standard transitions
$\alpha$ are inferred by standard transitions only.

Let $P \approx Q$. 
In the following we will use the reformulation of
weak bisimulation introduced in 
Proposition \ref{reformulation}.

$P \pco{S} R \approx Q \pco{S} R$ is shown by considering the weak bisimulation
$$\beta = \{ (P \pco{S} R, Q \pco{S} R) \mid P \approx Q 
\wedge R \in \calp_{static}
\}$$
with $\calp_{static}$ being the set of basic processes extended with $\_ \pco{S} \_$ and $\_ /L$
operators.
The proof for $\alpha$ transitions is standard. Concerning $\delta$ transitions, since $P \approx Q$, we have 
that there exists $Q''$ such that $Q \arrow{\tau}\!\!^* \; Q''$
and for all $P'$ we have: if $P \arrow{\delta} P'$ then, for some $Q'$, 
$Q'' \arrow{\delta} \arrow{\tau}\!\!^* \; Q'$ and $P' \approx Q'$. 
Therefore $Q \pco{S} R \arrow{\tau}\!\!^* \; Q'' \pco{S} R$
and for all $P'$, $R'$
we have: if $P \pco{S} R \arrow{\delta} P' \pco{S} R'$ then (from $R \arrow{\delta} R'$), for the above $Q'$, 
$Q'' \pco{S} R \arrow{\delta} \arrow{\tau}\!\!^* \; Q' \pco{S} R'$ and $(P' \pco{S} R',Q' \pco{S} R') \in \beta$.
%

$P /L \approx Q /L$ is shown by, similarly, considering the weak bisimulation
$$\beta = \{ (P /L, Q /L) \mid P \approx Q \}$$
Under the condition that $P$ does not perform any $a$ transition with $a \in L$ and $P$ performs some $\delta$ transitions (otherwise there is nothing to prove concerning $\delta$ transitions of $P/L$), the proof is performed similar to the case of $\pco{S}$.
We have in addition to rely on the fact that $Q''$ cannot do $a$ transitions
with $a \in L$ because otherwise $P$ (that cannot perform $\tau$ transitions) should perform a corresponding $a$ transition.

Concerning congruence of $\simeq$, we show that $P \simeq Q$ implies $P \pco{S} R \simeq Q \pco{S} R$ by resorting
to the reformulation of Proposition \ref{refcongr}. The proof is as that done for the pair $(P \pco{S} R, Q \pco{S} R)$ in the proof of congruence of $\approx$, with $Q''$ being just $Q$, and by reaching the pair $P' \pco{S} R' \approx Q' \pco{S} R'$.
A similar reasoning is done for $P /L \simeq Q /L$.
\end{proof}


\subsection{Discrete Time Calculus}
\label{SSectDRTC}

        \begin{table}[t]

 \[
\begin{array}{|c|} 
\hline \\[-.2cm]
\begin{array}{l}
\alpha^t. P \arrow{\alpha} P \\[0.3cm]
\end{array} \\
\begin{array}{cccc}
\infr{P \arrow{\alpha} P'}{P +^t Q \arrow{\alpha} P'} & \hspace{1cm} & 
\infr{Q \arrow{\alpha} Q'}{P +^t Q \arrow{\alpha} Q'} & \\[0.6cm]
\end{array} \\
\hline
\end{array}\]

\caption{Standard Rules for Prioritized Action Transitions}\label{TabRulesStand}

        \end{table}

%
%
%
%
%

        \begin{table}[t]

\[\begin{array}{|c|} 
\hline \\[-.3cm]
\snil \arrow{\delta} \snil \hspace{1cm}
\delta^t . P \arrow{\delta} P \hspace{1cm} 
a^t . P \arrow{\delta} a^t . P
\\[0.2cm]
\infr{P \arrow{\delta} P' \, \hspace{.5cm} \, Q \arrow{\delta} Q'}
{P \tch Q \arrow{\delta} P' \tch Q'}
\\[0.5cm]
\hline 
\end{array}
\]

\caption{Rules for Discrete Time Transitions}\label{TabRulesDRT2}

        \end{table}


As we mentioned in the introduction, in the discrete time setting, besides the static operators
$\_ \pco{S} \_$ and $\_ /L$ , it is
convenient (from a
modeling viewpoint, as we detail below) to adopt the specialized ``timed'' 
action prefix $\gamma^t.P$
and ``timed'' choice operator 
$P +^t Q$
used in~\cite{HenR}. Such operators are different from the prefix $\gamma.P$
and choice $P + Q$ that we considered in the basic calculus of Section \ref{SectAPFSB}
(here we use the ``$t$'' superscript to distinguish them)
in that they allow time to evolve via explicit execution of ``$\delta$'' 
transitions as for the semantics of parallel composition:
\begin{itemize}
\item
The new prefix operator $\gamma^t . P$ behaves like $\gamma . P$
plus the additional behaviour defined by 
``$a^t . P \arrow{\delta} a^t . P$''. That is, if $\gamma$ is not $\delta$ or $\tau$, it allows time to elapse via 
the explicit execution of a ``tick'' transition, see Tables \ref {TabRulesStand} and \ref {TabRulesDRT2}.
When specifying systems it is convenient to adopt such a prefix operator,
because it allows visible actions to be arbitrarily 
delayed, so that, e.g., in $a^t . P \pco{\{a\}} Q$, the action $a$ does 
not cause a time deadlock in the case it is not
immediately executable.

\item
The new choice operator $P \tch Q$ behaves like $P + Q$ as far as standard $\alpha$ transitions are concerned, see 
Table \ref {TabRulesStand}. Concering ``tick'' transitions it behaves, instead, 
according to the rule in Table \ref {TabRulesDRT2}. That is, similarly as for parallel composition operator, it allows one of $P$ and $Q$ to 
let 
time pass only if the other one may let time pass and is defined in such a way
that time passage does not resolve the choice (yielding time determinism, see below). 
When specifying systems it is convenient to adopt such a choice operator
because it allows new prefixes $a^t . P$ 
to be 
used without causing the delays $\delta$ preceding the execution of the $a$
to solve the choice.
\end{itemize}
We also use a terminated $\snil$ process that allows time to to elapse ($\nil$ instead deadlocks time).

As explained in the introduction, the idea is that such specialized ``timed'' prefix $\gamma^t.P$
and choice $P +^t Q$ together with $\nil$, $\snil$, 
recursion, parallel composition and hiding form a {\it specification level} calculus (like that of~\cite{HenR} apart from the action synchronization model) that is 
used to specify timed systems. On the contrary,
$\gamma.P$ and $P + Q$ are just used, as auxiliary operators,
to express {\it normal forms} for them, and to produce an axiomatization
by using (a variant of) the machinery presented in Section~\ref{SectAPFSB}.

The set $\cale_{DT}$ of Discrete Time Calculus expressions, ranged over by $E, F, G, \dots$, is 
defined by the following syntax:
\cws{11}{15}{E ::= \nil \mid \snil \mid X \mid \gamma^t. E \mid E \tch E
\mid E / L 
\mid E \pco{S} E \mid \ms{rec}X . E}
where $L, S \subseteq \ms{PAct} - \{ \tau \}$.
The corresponding set of processes, i.e. closed terms, is denoted by $\calp_{DT}$,
ranged over by $P, Q, R, \dots$. 
The operational semantics of the calculus is defined as the least subset of
$\calp_{DT} \times \ms{Act} \times \calp_{DT}$ 
satisfying the standard
operational rules of Tables~\ref{TabRulesStand} and ~\ref{TabRulesStandStatic},
and the timed ones of 
Tables~\ref{TabRulesDRT2} and~\ref{TabRulesDRT}.
In the following we will assume $\gamma^t.E$ prefixes to guard variables as for $\gamma.E$ prefixes in the basic calculus, i.e.\
we say that a variable is guarded by $\gamma$ if it occurs in the scope of a $\gamma^t.E$ prefix.

As far as properties of the generated labeled transition systems are concerned, we first notice that, as for the basic calculus, {\it maximal progress} (Proposition \ref{maxprog}) holds also for $\calp_{DT}$ processes (with a similar inductive proof on the inference tree of $\tau$ transitions, with base cases: $\tau$ transitions inferred by hiding of a visible transition and $\tau$ transitions obtained directly by a $\tau^t$ prefix). Moreover the effect of
adopting such a specification level time calculus, is, like in~\cite{HenR}, 
{\it time determinism}.

\begin{defi}\label{deftimedet}
A process $P$ is called {\it time-deterministic}, if, for all the states in the semantics of $P$, i.e.\ terms reachable by $P$, at most one outgoing $\delta$ transition can be inferred by the operational rules.
\end{defi}

\begin{prop}\label{proptimedeterministic}
Let $P \in \calp_{DT}$. Then $P$ is {\it time-deterministic}. 
\end{prop}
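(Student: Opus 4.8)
The plan is to prove time-determinism by structural induction on $P \in \calp_{DT}$, showing that any process has \emph{at most one} outgoing $\delta$ transition, and simultaneously keeping track, for the inductive hypothesis to go through, of \emph{which} processes have exactly one $\delta$ transition and what the unique $\delta$-derivative looks like. Actually, since Definition~\ref{deftimedet} quantifies over all reachable states, it is enough to show that (i) every $P \in \calp_{DT}$ has at most one outgoing $\delta$ transition, and (ii) the set $\calp_{DT}$ is closed under the transition relation (i.e.\ derivatives of $\calp_{DT}$ processes are again $\calp_{DT}$ processes), which is routine by inspection of the operational rules. So the crux is claim (i).

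First I would establish claim (i) by induction on the structure of $P$. For $P \equiv \nil$ there is no $\delta$ transition; for $P \equiv \snil$ the only rule applicable is $\snil \arrow{\delta} \snil$, giving exactly one. For $P \equiv \gamma^t. P'$: if $\gamma = \delta$ the only $\delta$ move is $\delta^t.P' \arrow{\delta} P'$; if $\gamma = a$ a visible action, the only $\delta$ move is $a^t.P' \arrow{\delta} a^t.P'$; if $\gamma = \tau$, then by maximal progress (which, as noted in the excerpt, holds for $\calp_{DT}$) there is no $\delta$ move at all. For $P \equiv P' \tch P''$: the only rule producing a $\delta$ transition is the one in Table~\ref{TabRulesDRT2} requiring $P' \arrow{\delta} P'_1$ and $P'' \arrow{\delta} P''_1$; by induction hypothesis each of $P'_1$, $P''_1$ is uniquely determined, hence $P' \tch P'' \arrow{\delta} P'_1 \tch P''_1$ is the unique $\delta$ move. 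The case $P \equiv P' \pco{S} P''$ is entirely analogous, using the rule in Table~\ref{TabRulesDRT}. For $P \equiv \ms{rec}X. E$: the only rule is the unfolding rule, so the $\delta$ transitions of $\ms{rec}X.E$ are exactly those of $E\{\ms{rec}X.E/X\}$ --- but here the naive structural induction does not directly apply since the unfolded term is not syntactically smaller; I would instead observe that $E\{\ms{rec}X.E/X\}$ is of one of the other syntactic forms (not a variable, since we identify the body modulo unfolding; and if $E \equiv X$ the recursion is degenerate/not a legal guarded process, or one reasons that no $\delta$ transition is inferable in finite depth) and appeal to an induction on the depth of the inference tree of the $\delta$ transition rather than on term structure.

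The cleanest formulation, therefore, is to do the induction on the \emph{height of the inference tree} witnessing a $\delta$ transition $P \arrow{\delta} P_1$, and prove: whenever $P \arrow{\delta} P_1$ and $P \arrow{\delta} P_2$ are both derivable, then $P_1 \equiv P_2$. One proceeds by cases on the last rule applied in the derivation of the first transition; in each case the shape of $P$ determines which rules could possibly have produced the second transition (the operational rules are syntax-directed), and the premises are matched up, invoking the inductive hypothesis on the strictly smaller subderivations for $\tch$, $\pco{S}$, and (for the hiding operator $P'/L$) likewise, and invoking maximal progress to rule out the $\tau$-prefix and the hiding-to-$\tau$ situations. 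The hiding case $P \equiv P'/L$ deserves a word: its $\delta$ rule (Table~\ref{TabRulesDRT}) fires only when $P' \arrow{\delta} P'_1$ and $P'$ has no $a$-transition for $a \in L$; uniqueness of $P'_1$ by induction gives uniqueness of $P'_1/L$.

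The main obstacle I anticipate is purely bookkeeping around recursion: making sure the induction is set up on inference-tree height (or an equivalent well-founded measure) rather than term size, so that the unfolding rule for $\ms{rec}X.E$ is handled correctly, and confirming that no $\delta$ transition of a $\calp_{DT}$ process can arise ``out of nowhere'' through an unguarded recursion --- which is fine precisely because the only way a $\delta$ transition of $\ms{rec}X.E$ is derived is via a strictly shorter derivation of a $\delta$ transition of $E\{\ms{rec}X.E/X\}$, so the height measure strictly decreases. All remaining cases are direct consequences of the fact that the rules in Tables~\ref{TabRulesStand}, \ref{TabRulesStandStatic}, \ref{TabRulesDRT2} and~\ref{TabRulesDRT} are syntax-directed and, for $\delta$, leave no genuine choice (the $\tch$ and $\pco{S}$ rules force synchronization of the unique moves of both operands, and the prefix/constant rules are deterministic), together with maximal progress eliminating the only potential source of a competing behaviour.
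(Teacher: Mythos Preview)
Your proposal is correct and follows essentially the same approach as the paper: an induction on the height of the inference tree of the $\delta$ transition(s), with a case analysis on the top-level operator of $P$, exploiting that the $\delta$-rules for $\tch$, $\pco{S}$ and $/L$ are syntax-directed and force synchronisation of uniquely determined sub-moves, while the $\ms{rec}$ case reduces to the unfolded body with strictly smaller derivation height. Two minor remarks: the paper inducts on the \emph{maximum} of the two derivation heights (your choice of inducting on the height of the first derivation also works, since the inductive hypothesis then quantifies over arbitrary second derivations), and your appeals to maximal progress are not actually needed --- for $\tau^t.P'$ there is simply no $\delta$-rule, and in the remaining cases uniqueness follows purely from the syntax-directedness of the rules.
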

\begin{proof}
We show that, for any  $P \in \calp_{DT}$, we have:
$P \arrow{\delta} P_1 \wedge P \arrow{\delta} P_2$ implies $P_1=P_2$ and such two transitions are obtained with exactly the same inference (tree). 
This is proven
by induction on the maximum of the
inference depth of the two transitions.

We have the following cases depending on
the structure of $P$.

\begin{itemize}
\item If $P \equiv \nil$, $P \equiv \snil$ or $P \equiv\gamma^t. P'$, the condition above obviously holds.

\item If $P \equiv P' / L$, we have that $P' / L \arrow{\delta} P_1 \wedge P' / L \arrow{\delta} P_2$ implies: 
$P_1 = P'_1 / L$, $P_2 = P'_2 / L$ and $P'  \arrow{\delta} P'_1 \wedge P'  \arrow{\delta} P'_2$.  
Therefore, by applying the induction hypothesis to such a pair of transitions (having a strictly smaller maximum inference depth w.r.t.\ that of the pair of transitions considered for $P' / L$), we have: $P'_1 =  P'_2$ and the two transitions are obtained with exactly the same inference. Hence we have $P_1 =  P_2$ and that the inference of the considered $P' / L$ transitions, extending the inference determined by induction, is the same.

\item If $P \equiv \ms{rec}X . E$, we have that $\ms{rec}X . E \arrow{\delta} P_1 \wedge \ms{rec}X . E \arrow{\delta} P_2$ implies: $E \{ \ms{rec}X . E / X \}$ $\arrow{\delta} P_1 \wedge E \{ \ms{rec}X . E / X \} \arrow{\delta} P_2$.
Therefore, by applying the induction hypothesis to such a pair of transitions (having a strictly smaller maximum inference depth w.r.t.\ that of the pair of transitions considered for $\ms{rec}X . E$), we directly have $P_1 =  P_2$ and that the two transitions are obtained with exactly the same inference.

\item If $P \equiv P' \tch P''$, we have that $P' \tch P'' \arrow{\delta} P_1 \wedge P' \tch P'' \arrow{\delta} P_2$ implies:
$P_1 = P'_1 \tch P''_1$, $P_2 = P'_2 \tch P''_2$, 
$P'  \arrow{\delta} P'_1 \wedge P'  \arrow{\delta} P'_2$ and 
$P''  \arrow{\delta} P''_1 \wedge P''  \arrow{\delta} P''_2$.
Therefore, by applying the induction hypothesis to both such pairs of transitions (both having a strictly smaller maximum inference depth w.r.t.\ that of the pair of transitions considered for $P' \tch P''$), we have: $P'_1 =  P'_2$, $P''_1 =  P''_2$ and, 
for each pair, 
it holds that the two transitions are obtained with exactly the same inference. 
Hence we have $P_1 =  P_2$ and that the inference of the considered $P' \tch P''$ transitions, extending the inferences (of
the $P'$ and $P''$ transitions) determined by induction, is the same.

\item If $P \equiv P' \pco{S} P''$, we have that $P' \pco{S} P'' \arrow{\delta} P_1 \wedge P' \pco{S} P'' \arrow{\delta} P_2$ implies
(with exactly the same reasoning, with $\pco{S}$ replacing $\tch$, as that of the previous item): $P_1 =  P_2$ and such two transitions are obtained with the same inference . 
\qedhere
\end{itemize}
\end{proof}

Finally, we notice that, for the specification discrete time calculus, 
fully unguarded recursions cannot perform $\delta$ transitions.

\begin{prop}\label{nosigma}
Let $\ms{rec}X.E \in \calp_{DT}$. If $X$ occurs fully unguarded in $E$ 
then $\ms{rec}X.E \narrow{\delta}$.

\end{prop}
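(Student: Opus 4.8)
The plan is to prove a mild strengthening that makes the induction go through: for every $R \in \calp_{DT}$ that contains, \emph{not within the scope of any prefix} $\gamma^t.\_$, a subterm of the form $\ms{rec}X.E$ with $X$ fully unguarded in $E$, we have $R \narrow{\delta}$. Taking $R \equiv \ms{rec}X.E$ (which contains itself, and not under any prefix) then gives the proposition. Let me call a subterm occurrence \emph{exposed} in $R$ if it does not lie within any subexpression of $R$ of the form $\gamma^t.F$, and call a recursion $\ms{rec}X.E$ \emph{bad} when $X$ is fully unguarded in $E$; so the hypothesis on $R$ is exactly that it has an exposed occurrence of a bad recursion.

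The argument is by induction on the height of the inference tree of a purported transition $R \arrow{\delta} R'$, inspecting the outermost operator of $R$. If $R$ were $\nil$, $\snil$, or a prefix $\gamma^t.R''$, then $R$ would contain no exposed bad subterm at all (in the first two there is no recursion to speak of, and in $\gamma^t.R''$ every subterm, $R$ included, lies within the subexpression $\gamma^t.R''$), so these cases are ruled out by the hypothesis. If $R \equiv R_1 \tch R_2$ or $R \equiv R_1 \pco{S} R_2$, the relevant rule in Table~\ref{TabRulesDRT2} or Table~\ref{TabRulesDRT} forces $\delta$-transitions of \emph{both} $R_1$ and $R_2$, each derived by a strictly shorter inference; the exposed bad subterm sits exposed in whichever of $R_1,R_2$ contains it, so the induction hypothesis is contradicted. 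If $R \equiv R_1/L$, the only rule deriving $R_1/L \arrow{\delta}$ (Table~\ref{TabRulesDRT}) needs $R_1 \arrow{\delta}$ by a strictly shorter inference, and the bad subterm remains exposed in $R_1$; contradiction again.

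The one substantive case is $R \equiv \ms{rec}Y.F$: the recursion rule reduces $R \arrow{\delta}$ to $F\{\ms{rec}Y.F/Y\} \arrow{\delta}$ via a strictly shorter inference, so it suffices to exhibit a bad subterm exposed in $F\{\ms{rec}Y.F/Y\}$. If $\ms{rec}Y.F$ is itself the bad subterm (so $Y \equiv X$, $F \equiv E$, with $X$ fully unguarded in $E$), then $F\{\ms{rec}Y.F/Y\} \equiv E\{\ms{rec}X.E/X\}$, and the copy of $\ms{rec}X.E$ substituted at the fully unguarded—hence exposed—occurrence of $X$ in $E$ is the required exposed bad subterm. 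Otherwise the bad subterm lies strictly below the $\ms{rec}Y$, hence exposed in $F$; since it is a \emph{closed} term, the substitution $\{\ms{rec}Y.F/Y\}$ leaves it untouched and (acting only at leaf occurrences of $Y$, which are never ancestors of it) does not move it under a prefix, so it stays exposed in $F\{\ms{rec}Y.F/Y\}$. Either way the induction hypothesis applies, yielding a contradiction and hence $R\narrow{\delta}$.

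I expect the main obstacle to be exactly this recursion case: pinning down that a single unfolding step neither deletes the distinguished exposed occurrence of the bad subterm nor buries it under a prefix, and that when the outermost recursion \emph{is} the bad one, unfolding genuinely re-exposes a fresh bad subterm — which is precisely what ``$X$ fully unguarded in $E$'' buys us (some occurrence of $X$ in $E$ is exposed, so after substitution some copy of $\ms{rec}X.E$ is exposed). The rest is routine case analysis on the operational rules.
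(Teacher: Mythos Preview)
Your proof is correct and takes a genuinely different route from the paper's. The paper argues by minimality of the transition relation: assuming $\ms{rec}X.E \arrow{\delta}$, it observes that (since every operator of $\cale_{DT}$ other than $\snil$ and prefix propagates a $\delta$-move only if \emph{all} its arguments make one, and since $X$ has a fully unguarded occurrence in $E$) any derivation of this transition must contain the very same transition as a premise; hence one may delete it and still satisfy the rules, contradicting leastness. Your approach instead strengthens the statement to all processes $R$ containing an exposed bad recursion, and then runs a direct induction on derivation height. Both rest on the same structural insight about $\delta$-rules in $\cale_{DT}$; your formulation is more explicit and self-contained, while the paper's is shorter but leans on the reader unpacking why the derivation must be self-referential.

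One small inaccuracy to fix: in the recursion case $R \equiv \ms{rec}Y.F$ with the bad subterm $\ms{rec}X.E$ lying strictly inside $F$, you assert that $\ms{rec}X.E$ is closed. It need not be (e.g.\ $Y$ may be free in $E$). The conclusion still holds, though: under the substitution $\{\ms{rec}Y.F/Y\}$ the subterm becomes $\ms{rec}X.(E\{\ms{rec}Y.F/Y\})$ (up to $\alpha$-conversion), and since the fully unguarded free occurrence of $X$ in $E$ is untouched by substituting for $Y \not\equiv X$ at leaf positions, the result is again a bad recursion, still exposed at the same position in $F\{\ms{rec}Y.F/Y\}$. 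With that adjustment your argument goes through cleanly.
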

\begin{proof}
Let us suppose, by contradiction, that $\ms{rec}X.E \arrow{\delta}$. 
We now show that, if we remove such a transition from the (timed) transition relation
obtained as the semantics of terms, then the obtained transition relation
still satisfies the operational rules, thus violating minimality.
We just need to observe that, since $X$ occurs fully unguarded in $E \in \cale_{DT}$, any possible inference tree of the $\ms{rec}X.E \arrow{\delta}$ transition necessarily includes 
the transition $\ms{rec}X.E \arrow{\delta}$ itself as a premise. This is because: $(i)$ every operator, apart from $\snil$ 
and prefix, generates a $\delta$ move only if {\it all} of its arguments perform a $\delta$ move, and $(ii)$ we know that $X$ indeed occurs in $E$ and it occurs, 
being fully unguarded, not in the scope of prefix. 
Therefore, if we remove $\ms{rec}X.E \arrow{\delta}$ from the transition relation 
the operational inference rules are still satisfied. 
%
%
%
%
%
\end{proof}

Regarding equivalence, unfortunately,
the introduction of the new choice operator $P \tch Q$
makes standard observational congruence 
(with $\delta$ being considered as a standard visible action)
no longer
a congruence: e.g., $\delta . \tau . a.\nil \simeq \delta. a.\nil$ but
$\delta.b.\nil \; \tch \; \delta . \tau . a.\nil \not\simeq \delta.b.\nil \; \tch \; \delta . a.\nil$. This is because 
$\delta. b. \nil \; \tch \; \delta . \tau . a . \nil$ is isomorphic to $\delta . (b.\nil + \tau .a.\nil)$, while
$\delta. b. \nil \; \tch \; \delta . a . \nil$ is isomorphic to $\delta . (b.\nil + a.\nil)$. The problem is that,
since in $P \tch Q$ the execution of $\delta$ transitions does
not cause the choice to be resolved, in $P$ (as well as in $Q$) it is 
incorrect to abstract from $\tau$ transitions that occur before the 
execution of a standard $a$ action (that cause the choice to be resolved).
In other words, a finer notion of observational congruence, called {\it discrete time observational congruence}, must be considered that is defined similarly as in~\cite{CLM}:
 the ``root'' condition of the
equivalence (where standard transitions are matched as in standard observational congruence) 
can be ``left'' only by executing standard transitions (and not by executing $\delta$ transitions).

\begin{defi}\label{newequiv}
A relation $\beta \subseteq \calp_{DT} \times \calp_{DT}$ is a rooted time weak bisimulation if,
whenever $(P,Q) \in \beta$:

\begin{itemize}
\item If $P \arrow{\alpha} P'$  then, for some $Q'$, $Q \warrow{\alpha} Q'$
and $P' \approx Q'$.
\item If $P \arrow{\delta} P'$ then, for some $Q'$, $Q \arrow{\delta} Q'$
and $(P',Q') \in \beta$.
\item If $Q \arrow{\alpha} Q'$ then, for some $P'$, $P \warrow{\alpha} P'$
and $P' \approx Q'$.
\item If $Q \arrow{\delta} Q'$ then, for some $P'$, $P \arrow{\delta} P'$
and $(P',Q') \in \beta$.
\end{itemize}
Two processes $P$, $Q$ are time observationally congruent,
written $P \simeq_{T} Q$, iff $(P,Q)$ is included in some rooted time
weak bisimulation.
We consider $\simeq_{T}$ as being defined also on open terms by means of free variable substitution, as in Definition \ref{obscongruence}.

\end{defi}


Notice that, even if finer than the observational congruence defined
in 
Definition \ref{basicobscongr}
(where we just considered time delays as standard
actions), time observational congruence is still a {\it conservative
extension} of Milner's observational congruence: for transition systems
without timed transitions it reduces to this equivalence. In particular 
only the root condition is more restrictive: 
the notion of weak bisimulation $\approx$ remains unchanged. 


Due to the time determinism of the 
generated labeled transition systems, it is also possible to simplify the form of weak bisimulation definition:
we can provide reformulations as those presented in Proposition \ref{reformulation} and \ref{necessary}
where 
{\it tails of $\tau$ transitions can be disregarded}
when matching weak timed moves.

\begin{prop}\label{propreftimedet}
In the case a relation $\beta$ over time-deterministic processes is considered, a simplified form of the reformulations of weak bisimulation in Propositions \ref{reformulation} and \ref{necessary} holds true, where 
matching $\sarrow{\delta} \sarrow{\tau}\!\!^*$ moves are replaced by matching $\sarrow{\delta}$ moves.
\end{prop}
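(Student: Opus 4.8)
The plan is to revisit the proofs of Propositions~\ref{reformulation} and~\ref{necessary} and observe that, once we restrict to time-deterministic processes, the trailing $\sarrow{\tau}\!\!^*$ following a $\delta$ match can always be discarded; I will prove the two implications of the resulting characterizations separately. Recall first that, by Proposition~\ref{proptimedeterministic} (read together with its statement about all reachable terms), every derivative of a time-deterministic process is again time-deterministic, so we never leave the class.

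\emph{The easy direction.} A single move $Q'' \arrow{\delta} Q'$ is the special case of $Q'' \arrow{\delta}\arrow{\tau}\!\!^* Q'$ with an empty $\tau$-tail, and similarly for the $Q \arrow{\delta} Q'$ clauses of Proposition~\ref{necessary} and of the rooted notion of Definition~\ref{newequiv}. Hence any relation over time-deterministic processes satisfying the \emph{simplified} version of the condition of Proposition~\ref{reformulation} already satisfies the original one, so it is a weak bisimulation; the same remark handles the simplified form of Proposition~\ref{necessary} and, combined with the (unchanged) root condition, the reformulation of $\simeq_T$ in the style of Proposition~\ref{refcongr}. This direction uses neither time determinism nor guardedness, so it supplies the ``proof technique'' use of the statement (to establish $P \approx Q$ or $P \simeq_T Q$ one exhibits a relation meeting the weaker, simplified clauses).

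\emph{The direction that needs time determinism.} Suppose $P \approx Q$ with $P,Q$ time-deterministic and $P \arrow{\delta} P'$; by time determinism $P'$ is the unique $\delta$-derivative of $P$, so a single witness on the $Q$ side suffices. By Proposition~\ref{reformulation} there is $Q''$ with $Q \arrow{\tau}\!\!^* Q''$ and $Q'' \arrow{\delta} Q_0 \arrow{\tau}\!\!^* \hat Q$ with $P' \approx \hat Q$; time determinism makes $Q_0$ the unique $\delta$-derivative of $Q''$, and from $Q_0 \arrow{\tau}\!\!^* \hat Q$ we get $Q_0 \approx \hat Q$, hence $P' \approx Q_0$ by transitivity of $\approx$. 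Thus $Q''$ and $Q_0$ witness the simplified clause, so $\approx$ itself satisfies it. The shape of Proposition~\ref{necessary} is recovered exactly as there: when $Q \notarrow{\tau}$ one takes $Q'' \equiv Q$ (the maximal-progress/priority argument of the original proof is reused verbatim), and the ``iff'' under the additional guardedness hypothesis is unaffected. For $\simeq_T$ one then just layers this on top of the root condition of Definition~\ref{newequiv}: that condition already matches a $\delta$ move by a single $\delta$ move with no surrounding $\tau$'s, so nothing is simplified at the root, and past the root one is reasoning about $\approx$, to which the above applies.

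The delicate point is precisely the $\tau$-tail absorption $P' \approx \hat Q \wedge Q_0 \arrow{\tau}\!\!^* \hat Q \Rightarrow P' \approx Q_0$: it relies on transitivity of $\approx$ (and, for $\simeq_T$, on the fact that only $\approx$ is involved below the root) together with time determinism pinning $Q_0$ down as the unique $\delta$-successor; it would not go through for an arbitrary weak bisimulation that is not $\tau$-saturated, which is why the simplified conditions are phrased as characterizations of $\approx$ and $\simeq_T$ rather than as a clause holding of every weak bisimulation witnessing them.
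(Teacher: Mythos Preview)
Your argument is correct, and your closing remark is precisely the subtle point. The paper's own proof takes a different route: it tries to show that \emph{any} weak bisimulation $\beta$ over time-deterministic processes already satisfies the simplified clauses, arguing that from $(P',Q')\in\beta$ with $Q'''\arrow{\tau}\!\!^*\,Q'$ together with $(P'',Q''')\in\beta$ and $P'\arrow{\tau}\!\!^*\,P''$ one can ``conclude that $(P',Q''')\in\beta$''. But that last step only yields $P'\approx Q'''$, not membership in the particular $\beta$; for instance, with $P=Q=\delta.\tau.a.\nil$ the relation
\[
\beta=\{(P,P),\,(\tau.a.\nil,a.\nil),\,(a.\nil,\tau.a.\nil),\,(a.\nil,a.\nil),\,(\nil,\nil)\}
\]
is a weak bisimulation over time-deterministic processes, yet the simplified clause fails at $(P,P)$ since $(\tau.a.\nil,\tau.a.\nil)\notin\beta$. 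Your choice to establish the simplified characterization for $\approx$ itself---using transitivity and the observation $Q_0\arrow{\tau}\!\!^*\,\hat Q\Rightarrow Q_0\approx\hat Q$---sidesteps this gap entirely; and since the proposition is only ever invoked for $\approx$ (in the mergeability Theorem~\ref{merge}), your reading is exactly what the paper needs.
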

\begin{proof}

We now show that $\beta$ satisfying the reformulations in Propositions \ref{reformulation} and \ref{necessary} also satisfies
the simplified form (the converse is obvious). Assumed $(P,Q) \in \beta$, from $P \sarrow{\delta} P'$ and a matching $Q \sarrow{\tau}\!\!^* \, Q''$ such that $Q'' \sarrow{\delta} Q''' \sarrow{\tau}\!\!^* \, Q'$ with $(P',Q') \in \beta$, we show that we also have $(P',Q''') \in \beta$.
Since $Q'' \sarrow{\delta} Q'''$ we have that, due to time determinism, $P \sarrow{\delta} P'  \sarrow{\tau}\!\!^* \, P''$ with 
$(P'',Q''') \in \beta$. Therefore, from $Q''' \sarrow{\tau}\!\!^* \, Q'$ with $(P',Q') \in \beta$ (showing that $Q'''$ can match moves made by $P'$) and $P'  \sarrow{\tau}\!\!^* \, P''$ 
with $(P'',Q''') \in \beta$ (showing that $P'$ can match moves made by $Q'''$), we can conclude that $(P',Q''') \in \beta$.
\end{proof}

We will show in Theorem \ref{EDRTcongruence2} in the context of a larger signature that $\simeq_{T}$ is a congruence with respect to all calculus operators, including recursion.

\subsection{Axiomatizing the Discrete Time Calculus}\label{DRTAxiomat}

Producing a complete axiomatization for the discrete time calculus requires it to be 
extended so to be able to 
express normal forms of processes (terms 
of the basic calculus) and manage and derive them: e.g.\ the
$pri(P)$ operator and left and synchronization merge operators needed to axiomatize parallel composition, along the lines of~\cite{Ac}.
More precisely, we use as normal forms the terms of the basic calculus that are 
time deterministic (Definition \ref{deftimedet}) and get a complete axiomatization via a variant of the machinery presented  in Section \ref{SectAPFSB}.


\subsubsection{Extending the Discrete Time Calculus}\label{MSTExt}

        \begin{table}[t]

\centerline{\small
$\begin{array}{|clcl|} 
\hline 
& & & \\[-.3cm]
\infr{P \arrow{a} P'}{\ms{vis}(P) \arrow{a} P'} 
& &
\infr{P \arrow{\tau} P'' \, \wedge \, vis(P'') \arrow{a} P'}{\ms{vis}(P) \arrow{a} P'} & 
\\[0.7cm]
\infr{P \arrow{\alpha} P'}
{P \lme{S} Q \arrow{\alpha} P' \pco{S} Q} \; \, \ms{\alpha \! \notin \! S} 
\negspace{\; \, \ms{\alpha \! \notin \! S}}
&  \tabspace{\; \, \ms{\alpha \! \notin \! S}} \hspace{.4cm} &
\infr{vis(P) \arrow{a} P' \, \wedge \, vis(Q) \arrow{a} Q'}
{P \sme{S} Q \arrow{a} P' \pco{S} Q'}
\; \, a \in S
\negspace{\; \, a \in S} 
& \hspace{1cm}
\\[0.7cm]
\multicolumn{4}{|c|}{
\infr{P \arrow{\delta} P' \, \wedge \, Q \arrow{\delta} Q'}
{P \sme{S} Q \arrow{\delta} P' \pco{S} Q'}
}
\\[0.5cm]
\hline 
\end{array}$}

\caption{Rules for Auxiliary Operators}\label{TabRulesAux}

        \end{table}

We now formally introduce the auxiliary operators needed to build
the axiomatization, whose semantics is presented in Table~\ref{TabRulesAux}.
The operators 
``$P \lme{S} Q$'' and ``$P \sme{S} Q$'' are timed extensions  
of the left merge and synchronization merge operators of~\cite{Ac}, where 
the definition of the operational rule 
for ``$P \sme{S} Q$'' allows actions $\tau$ to be skipped
so to get a congruence. In particular, here this is expressed by using the novel
operator ``$\ms{vis}(P)$'' that we introduce in this paper: it 
turns $\sarrow{\tau}\!\!^* \sarrow{a}$  weak transitions of $P$ labeled by visible 
(non$-\tau$) standard actions $a$ into strong transitions.
Similarly to the standard setting of~\cite{Ac}, the above auxiliary operators will be used
for axiomatizing parallel composition. In doing this, ``$\ms{vis}(P)$''
will play an important role in that it allows us to check for the absence
of executable delays and/or $\tau$ actions
(see axiom $(SM7)$ of Table~\ref{TimedAxioms}). 
As we will see, differently from the standard setting of~\cite{Ac}, we need here to perform such a 
check because of priority of $\tau$ over $\delta$.

We define the Extended Discrete Time calculus to be the process algebra
obtained by extending the Discrete Time calculus with the auxiliary 
operators above, the operators of the basic calculus of Section \ref{SectAPFSB} and the $pri(\_)$ operator.
The set $\cale_{EDT}$ of behavior expressions,
ranged over by $E, F, G, \dots$, is 
defined by the following syntax:\\[.2cm]
$E ::= \nil \mid \snil \mid X \mid \gamma^t. E \mid E \tch E
\mid E / L 
\mid E \pco{S} E \mid \ms{rec}X . E \mid \\[.1cm]
\tabspace{E ::=} 
\gamma. E \mid E + E \mid pri(E) \mid \\[.1cm]
\tabspace{E ::=} 
\ms{vis}(E) \mid E \lme{S} E \mid E \sme{S} E \\[.2cm]$
where $L, S \subseteq \ms{PAct} - \{ \tau \}$.
The set of processes, i.e.\ closed terms, is 
denoted by $\calp_{EDT}$,
ranged over by $P, Q, R, \dots$. 
The operational semantics of processes is defined as the least subset of
$\calp_{EDT} \times \ms{Act} \times \calp_{EDT}$ 
satisfying all operational rules already presented.

As far as properties of the generated labeled transition system are concerned, we obviously still have that {\it maximal progress} (Proposition \ref{maxprog}) holds. On the contrary, differently from the Discrete Time calculus specification calculus, in general the {\it time-deterministic} property does not hold (as, e.g., in $\delta.P+\delta.Q$).


%
%
%


\begin{thm}\label{EDRTcongruence2}
$\simeq_{T}$ is a congruence for the extended discrete time 
calculus with respect to all of its operators, including recursion.

\begin{proof}
We show congruence of $\simeq_{T}$ with respect to all
the operators of the extended discrete time 
calculus, considering
the recursion operator at the end. 
In the proof we will often exploit the fact that standard transitions
$\alpha$ are inferred by standard transitions only.
Let $P \simeq_{T} Q$. 

$\gamma.P \simeq_{T} \gamma.Q$, in the case $\gamma = \delta$
(otherwise, the proof is trivial), is shown by considering the 
rooted time weak bisimulation 
$\beta=  \{(\delta.P, \delta.Q)\} \, \cup \simeq_{T}$. Similarly, $P + R \simeq_{T} Q + R$ is shown by considering the rooted  time weak bisimulation
$\beta= \{(P+R,Q+R)\} \, \cup \simeq_{T}$.

$pri(P) \simeq_{T} pri(Q)$ is immediate by just considering the
rooted  time weak bisimulation $\{(pri(P),pri(Q))\}$: $pri(P)$ cannot perform $\delta$ transitions and standard $\alpha$ transitions are matched by non-zero length standard weak transitions.
$vis(P) \simeq_{T} vis(Q)$ is similarly shown by considering the 
rooted  time weak bisimulation $\{(vis(P),vis(Q))\}$:
$vis(P)$ can perform neither $\delta$ transitions nor $\tau$ transitions
and each standard $a$ transition is matched by an $a$ transition (the 
transition leading directly to the term $Q''$ such that $Q \arrow{\tau}^*
\arrow{a} Q'' \arrow{\tau}^* Q'$ is the weak transition matching the $\arrow{\tau}^*
\arrow{a}$ transition of $P$) possibly
followed by a sequence of $\tau$ transitions (the sequence $Q'' \arrow{\tau}^* Q'$).

$\gamma^t.P \simeq_{T} \gamma^t.Q$, in the case $\gamma = a$ for some $a \in \ms{PAct}-\{\tau\}$ (otherwise, this case reduces to the previous case), is shown by considering the rooted  time weak bisimulation 
$\{(a^t.P, a^t.Q)\}$.

$P +^t R \simeq_{T} Q +^t R$ is shown by considering the rooted  time weak bisimulation
$$\beta = \{ (P +^t R, Q +^t R) \mid P \simeq_{T} Q 
\wedge R \in \calp_{EDT}
\}$$
The proof for $\alpha$ transitions is trivial. 
Concerning $\delta$ transitions, since $P \simeq_{T} Q$, we have 
that for all $P'$: if $P \arrow{\delta} P'$ then, for some $Q'$, 
$Q \arrow{\delta} Q'$ and $P' \simeq_{T} Q'$. 
Therefore for all $P'$, $R'$
we have: if $P +^t R \arrow{\delta} P' +^t R'$ then (from $R \arrow{\delta} R'$), for the above $Q'$, 
$Q +^t R \arrow{\delta} Q' +^t R'$ and $(P' +^t R',Q' +^t R') \in \beta$.


$P \pco{S} R \simeq_{T} Q \pco{S} R$ and $P /L \simeq_{T} Q /L$
are shown as for the proof of congruence of $\approx$ with the
simplification that (as for ``$+^t$'' above) the possibility of silently
reaching a $Q''$ state is not considered and by resorting to 
congruence of $\approx$ in the case of standard $\alpha$ moves.

%
%

$P \lme{S} R \simeq_{T} Q \lme{S} R$ is immediate by just considering the
rooted  time weak bisimulation $\{(P \lme{S} R,Q \lme{S} R)\}$: $P \lme{S} R$ cannot perform $\delta$ transitions and standard $\alpha$ transitions not in $S$ (that must originate in $P$) are matched by non-zero length standard weak transitions not in $S$ (congruence of $\approx$ with respect to $\pco{S}$ is then exploited).

%
%
$P \sme{S} R \simeq_{T} Q \sme{S} R$ is shown by considering the
rooted  time weak bisimulation 
$\beta= \{(P \sme{S} R,Q \sme{S} R)\} \, \cup \simeq_{T}$.
Concerning standard transitions, $P \sme{S} R$ can just perform 
(non-$\tau$) $a$ transitions with $a \in S$: each $a$ transition originated
from $P$ is matched (as for the $vis()$ operator) by an $a$ transition (the 
transition leading directly to the term $Q''$ such that $Q \arrow{\tau}^*
\arrow{a} Q'' \arrow{\tau}^* Q'$ is the weak transition matching the $\arrow{\tau}^* \arrow{a}$ transition of $P$) possibly
followed by a sequence of $\tau$ transitions (the sequence $Q'' \arrow{\tau}^* Q'$); congruence of $\approx$ with respect to ``$\pco{S}$'' is then exploited on reached terms.
Concerning $\delta$ transitions, since $P \simeq_{T} Q$, we have that,
for all $P'$, $R'$: if $P \sme{S} R \arrow{\delta} P' \pco{S} R'$ then (from $R \arrow{\delta} R'$), there exists $Q'$ such that
$Q \sme{S} R \arrow{\delta} Q' \pco{S} R'$ and $P' \pco{S} R' \simeq_{T} Q' \pco{S} R'$ because of congruence of $\simeq_{T}$ with respect to ``$\pco{S}$''.

Concerning the recursion operator, from $E \simeq_{T} F$ we derive 
$\ms{rec}X.E \simeq_{T} \ms{rec}X.F$ as follows. We show that
\cws{13}{13}{
\beta = \{ (G \{ \ms{rec}X.E / X \}, G \{ \ms{rec}X.F / X \}) \mid 
G \; {\rm contains \; at \; most} \; X \; {\rm free} \} }
satisfies the conditions: 
\begin{itemize}
\item
 If $G \{ \ms{rec}X.E / X \} \arrow{\alpha} H$ then, 
for some $H',H''$, \\[.1cm]
\hspace*{1.5cm}
$G \{ \ms{rec}X.F / X \} \warrow{\alpha} H''$ with $H'' \approx H'$ such that $(H,H') \in \beta$, \\[.1cm]
and symmetrically for a move of $G \{ \ms{rec}X.F / X \}$.
\item If $G \{ \ms{rec}X.E / X \} \arrow{\delta} H$ then, 
for some $H',H''$, \\[.1cm]
\hspace*{1.5cm}
$G \{ \ms{rec}X.F / X \} \arrow{\delta} H''$ with $H'' \simeq_{T} H'$ such that $(H,H') \in \beta$, \\[.1cm]
and symmetrically for a move of $G \{ \ms{rec}X.F / X \}$. 
\end{itemize}
This implies that $\beta$ is a weak bisimulation up to $\approx$, see \cite{SM}. 
As a consequence (by using $\beta \subseteq \; \approx$ in the first item)
the relation $\beta \simeq_{T}$ is a 
rooted discrete time weak bisimulation, hence $\beta \subseteq
\; \simeq_{T}$. Thus,
by taking $G \equiv X$ we may conclude that $\ms{rec}X.E \simeq_{T} \ms{rec}X.F$.

The proof that $(G \{ \ms{rec}X.E / X \}, 
G \{ \ms{rec}X.F / X \})
\in \beta$ satisfies the conditions above is by induction on
the height of the inference tree
by which transitions of $G \{ \ms{rec}X.E / X \} \arrow{\gamma} H$ 
are inferred, first for standard $\alpha$ transitions (which are inferred from
standard transitions only) and then for $\delta$ transitions.
In both cases the induction proof is performed by considering several cases depending on the structure of $G$: its topmost operator.
For $\alpha$ transitions 
the induction hypothesis is exploited in a similar
way as we did for the proof of congruence of $\simeq_{T}$ for that operator.
Thus, here we just detail the cases for $\delta$ transitions.

$G$ cannot be of the form $pri(G')$, $vis(G')$ or $G' \lme{S} G''$ because
in these cases it would not be able to perform $\delta$ transitions.
If $G$ is of the form $\gamma. G'$, $\gamma^t. G'$, $G' + G''$, then the proof is trivial.  

If $G \equiv G' \pco{S} G''$ then, from $(G' \pco{S} G'')\{ \ms{rec}X.E / X \} \arrow{\delta} H_1 \pco{S} H_2$, we have, from the induction
hypothesis on $G'$ and $G''$, that $(G' \pco{S} G'')\{ \ms{rec}X.F / X \} \arrow{\delta} H''_1\pco{S} H''_2$
with $H''_1\pco{S} H''_2 \linebreak
\simeq_{T} H'_1\pco{S} H'_2$ (due to congruence of $\simeq_{T}$ with respect to ``$\pco{S}$'') such that $(H_1 \pco{S} H_2,H'_1 \pco{S} H'_2) \in \beta$.

If $G \equiv G' / L$ then we must have that $G'\{ \ms{rec}X.E / X \}$ does not perform any $a$ transition with $a \in L$, otherwise it would not be
possible for $(G'/L)\{ \ms{rec}X.E / X \}$ to perform a $\delta$ transition to $H/L$.
Since $G'\{ \ms{rec}X.F / X \}$ cannot do $a$ transitions
with $a \in L$ because otherwise $G'\{ \ms{rec}X.E / X \}$ (that cannot perform $\tau$ transitions) should perform a corresponding $a$ transition, we have, from the induction hypothesis on $G'$, that $(G' / L)\{ \ms{rec}X.F / X \} \arrow{\delta} H''/L$ with $H''/L \simeq_{T} H'/L$ (due to congruence of $\simeq_{T}$ with respect to ``$/ L$'') such that $(H/L,H'/L) \in \beta$.

If $G$ is of the form $G' \tch G''$ or $G' \sme{S} G''$ then the proof is
carried out like in the case of $\pco{S}$. The cases 
$G \equiv \ms{rec}Y.G'$ and $G \equiv X$ 
are dealt with as in the standard way.
\end{proof}
\end{thm}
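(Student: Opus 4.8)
The plan is to prove congruence operator by operator, treating the dynamic and static operators first and recursion last, since the recursion case reuses all the others. For each non-recursive operator one assumes $P \simeq_{T} Q$ (together with a fixed argument $R \in \calp_{EDT}$ in the binary cases) and exhibits an explicit rooted time weak bisimulation containing the pair in question. For the operators that can never perform a $\delta$ transition — namely $pri(\cdot)$, $\ms{vis}(\cdot)$ and $\cdot \lme{S} \cdot$ — a singleton relation, or the singleton together with ${\simeq_{T}}$, already works: the root clause for standard actions follows from the way weak $\warrow{\alpha}$ moves are matched, and for $\ms{vis}$ one additionally observes that a weak transition $\arrow{\tau}^*\arrow{a}$ is matched by the strong $a$-transition reaching the relevant intermediate state, followed by the trailing $\tau$'s. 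For $\gamma.P$, $\gamma^t.P$, $P + R$ and $P \tch R$ the relations $\{(\gamma.P,\gamma.Q)\} \cup {\simeq_{T}}$ and the obvious analogues suffice; the only delicate point is that in $P \tch R$ a matched $\delta$ move must keep the pair inside $\beta$ and not merely inside $\approx$, which is exactly why the relation retains all such pairs and why the root condition of Definition~\ref{newequiv} is designed not to be left by $\delta$ moves.

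For the static operators $\cdot \pco{S} \cdot$, $\cdot /L$ and $\cdot \sme{S} \cdot$ I would piggy-back on the congruence of $\approx$ already established in Theorem~\ref{EDRTcongruence1}: on standard actions the root clause reduces to that congruence property, and on $\delta$ actions $P \simeq_{T} Q$ gives a matching $Q \arrow{\delta} Q'$ with $P' \simeq_{T} Q'$ \emph{directly}, with no leading or trailing $\tau$-sweep needed — this is precisely where the tighter root condition pays off — after which the residuals $P' \pco{S} R'$ and $Q' \pco{S} R'$ again lie in the relation. For hiding, the extra ingredient is that $P/L \arrow{\delta}$ forces $P \notarrow{a}$ for every $a \in L$, and $Q$ inherits this because, being a one-step residual of the root, $Q$ cannot perform $\tau$ and so cannot match such an $a$ even by a weak transition; this mirrors the corresponding step in Theorem~\ref{EDRTcongruence1}. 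For $\sme{S}$ one simply combines the $\delta$-synchronisation argument used for $\pco{S}$ with the $\ms{vis}$-style matching of standard actions.

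For recursion I would follow Milner's scheme. From $E \simeq_{T} F$, take $\beta = \{\, (G \{ \ms{rec}X.E / X \},\, G \{ \ms{rec}X.F / X \}) \mid G \ {\rm contains\ at\ most}\ X\ {\rm free} \,\}$ and prove, by induction on the height of the inference tree of a transition $G\{\ms{rec}X.E/X\} \arrow{\gamma} H$, that standard $\alpha$-moves are matched by weak $\warrow{\alpha}$-moves up to $\approx$, and $\delta$-moves by single $\arrow{\delta}$-moves up to ${\simeq_{T}}$, with residuals in $\beta$ in both cases, and symmetrically. This exhibits $\beta$ as a weak bisimulation up to $\approx$, so $\beta \subseteq {\approx}$; hence the relational composition $\beta \, {\simeq_{T}}$ is a rooted time weak bisimulation, giving $\beta \subseteq {\simeq_{T}}$, and taking $G \equiv X$ yields $\ms{rec}X.E \simeq_{T} \ms{rec}X.F$. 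The case split on the topmost operator of $G$ mirrors the congruence proofs above; the cases $G \equiv \ms{rec}Y.G'$ and $G \equiv X$ are handled as in the untimed setting. The main obstacle throughout is the $\delta$-transition bookkeeping: one must constantly respect maximal progress — never matching a $\delta$ move by anything that enables $\tau$ — and must show that the root condition really survives a matched $\delta$ step, which forces every bisimulation here to be ``tight'' on $\delta$, with no $\tau$-closure on either side; the genuinely delicate cases are $\sme{S}$, because of its embedded $\ms{vis}(\cdot)$, and hiding, because of its pre-emption side condition. Proposition~\ref{propreftimedet} would streamline matters were the calculus time-deterministic, but in the extended calculus it is not (witness $\delta.P + \delta.Q$), so one argues directly.
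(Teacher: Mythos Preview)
Your proposal is correct and follows essentially the same approach as the paper: operator-by-operator congruence proofs via explicit rooted time weak bisimulations, then Milner's scheme for recursion with the same relation $\beta$ and the same two-clause condition ($\alpha$-moves matched weakly up to $\approx$, $\delta$-moves matched strongly up to $\simeq_T$), concluding via weak bisimulation up to $\approx$ and then $\beta\,{\simeq_T}$ being a rooted time weak bisimulation. The only place where your phrasing is slightly loose is $+^t$: the ``obvious analogue'' $\{(P+^t R,Q+^t R)\}\cup{\simeq_T}$ does not literally suffice, since after a $\delta$-step the residual pair $(P'+^t R',Q'+^t R')$ is not yet known to lie in $\simeq_T$; the paper (and, by your own parenthetical remark, you too) takes the parametrised relation $\{(P+^t R,Q+^t R)\mid P\simeq_T Q,\ R\in\calp_{EDT}\}$ instead, which is what ``retains all such pairs'' must mean.
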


\subsubsection{Axiom System}\label{DRTAxioms}

        { \begin{table}[t]

{\[\begin{array}{|l|}
\hline
\begin{array}{lrcll}
(Tau1') \; & \alpha . \tau . E & = & \alpha . E &\\[0cm]
(Tau3') & \alpha . ( E + \tau . F ) + \alpha . F & = & \alpha . ( E + \tau . F ) &\\[0cm]
(Tau4) & \alpha . F \{ \delta. \tau . E / X \} & = & \alpha . F \{ \delta . E / X \} &
\; \rm{provided\;that}\;X\;\rm{is\;serial\;in}\; F\\[.0cm]
\end{array} 
\\ 
\hline
\end{array}\]}

\caption{Axioms of $\cala_{DT}$ for unguarded basic terms}\label{BasicAxioms}

        \end{table}}

        { \begin{table}[t]

{\[\begin{array}{|l|}
\hline
\begin{array}{lrcl}
(Ter) \;\;\, & \snil & = & \ms{rec}X . \delta. X  \\
\end{array} \\ 
\hline
\begin{array}{lrcll}
(TPre1) \;\;\, & \alpha^t . E & = & \ms{rec}X . (\delta . X +  \alpha . E) &
\hspace{.3cm}
\; \rm{provided\;that}\;X\;\rm{is\;not\;free\;in}\; E
\\[0cm]
(TPre2) & \delta^t . E & = & \delta.E \\
\end{array} \\ 
\hline 
\begin{array}{lrcll}
(TCh1) \; & E \tch F & = & F \tch E & \\[0cm]
(TCh2) & (E +^t F) +^t G & = & E +^t (F +^t G) \\[0cm]
(TCh3) \; & E \tch \snil & = & E & \\[0cm]
(TCh4) & pri(E \tch F) & = & pri(E) + pri(F) & \\[0cm]
(TCh5) & pri(E) \tch F & = & pri(E) + pri(F) & \\[0cm]
(TCh6) & (\delta . E) \tch (\delta . F) & = & \delta . ( E \tch
F) &  \\[0cm]
(TCh7) & (E \!+ \!F) \tch G & = & (E \tch G) + (F \tch G)  
& \\
\end{array} \\ 
\hline 
\end{array}\]}
\caption{Axioms of $\cala_{DT}$ for timed operators}\label{TimedAxioms}

        \end{table}}

        { \begin{table}[t]

{\[\begin{array}{|l|}
\hline
\begin{array}{lrcll}
(Hi1) \;\;\, & \nil / L & = & \nil & \\[0cm]
(Hi2) & (\gamma . E) / L & = & \gamma . (E / L) & \gamma \notin L   \\[0cm]
(Hi3) & (a . E) / L & = & \tau . (E / L) \; \; \; & a \in L  \\[0cm]
(Hi4) & (E + F) / L & = & E / L + F / L \; & \\
\end{array} \\ 
\hline
\begin{array}{lrcll}
(RecHi) \; \; \, & (\ms{rec}X . E) / L & = & \ms{rec}X . (E/L) \hspace{.3cm}& 
\; \rm{provided\;that}\;X\;\rm{is\;serial\;in}\; E
\\[.1cm]
\end{array} \\ 
\hline 
\begin{array}{lrcl}
(Vis1) \;\;\;\, & \ms{vis}(\nil) & = & \nil \\[0cm]
(Vis2) \;\;\;\, & \ms{vis}(a.E) & = & a.E \\[0cm]
(Vis3) \;\;\;\, & \ms{vis}(E+F) & = & \ms{vis}(E) + \ms{vis}(F) \\
\end{array} \\ 
\hline 
\begin{array}{lrcl}
(Par) \;\;\, & E \pco{S} F & = & E \lme{S} F + 
F \lme{S} E + E \sme{S} F \\
\end{array} \\ 
\hline 
\begin{array}{lrcll}
(LM1) \; & \nil \lme{S} E & = & \nil & \\[0cm]
(LM2) & (\gamma . E) \lme{S} F & = & \nil & \gamma \in S \cup \{ \delta \}
\\[0cm]
(LM3) & (\alpha . E) \lme{S} F & = & \alpha . ( E \pco{S} F) & \alpha \notin S \\[0cm]
(LM4) & (E + F) \lme{S} G & = & E \lme{S} G + F \lme{S} G & \\
\end{array} \\ 
\hline 
\begin{array}{lrcll}
(SM1) \; & E \sme{S} F & = & F \sme{S} E & \\[0cm]
(SM2) & \nil \sme{S} E & = & \nil & \\[0cm]
(SM3) & (\gamma . E) \sme{S} (\gamma' . F) & = & \nil & \hspace{-1.5cm}
(\gamma \notin S \cup \{ \delta \} \; \vee \; \gamma \neq \gamma') \wedge \tau \notin \{
\gamma, \gamma' \} \\[0cm]
(SM4) & (\tau . E) \sme{S} F & = & pri(E \sme{S} F) \\[0cm]
(SM5) & (\gamma . E) \sme{S} (\gamma . F) & = & \gamma . ( E \pco{S}
F) \hspace{.3cm} & \gamma \in S \cup \{ \delta \} \\[0cm]
(SM6) & (pri(E) \! + \! pri(F)) \! \sme{S} G  & = & pri(E) \sme{S} G + pri(F) \sme{S} G  \hspace{-5cm} & \\[0cm]
(SM7) & (\delta.E \! + \! vis(F)) \! \sme{S} G  & = & \delta.E \sme{S} G + vis(F) \sme{S} G  
\hspace{-5cm} & \\
\end{array} \\
\hline
\end{array}\]}

\caption{Axioms of $\cala_{DT}$ for parallel composition, hiding and related auxiliary operators}\label{StaticAxioms}

        \end{table}}


We now present the axiom system $\cala_{DT}$. The idea is that the axiom system
must be able: $(i)$ to turn $\calp_{DT}$ terms of the specification calculus into normal form, i.e.\ basic processes in $\calp$ that are 
time-deterministic, and
$(ii)$ to equate normal forms when they are equivalent according to $\simeq_{T}$.
 
$\cala_{DT}$ is composed of: 
\begin{itemize}
\item the axioms 
in Tables~\ref{Axioms} and~\ref{BasicAxioms}, related to axiomatizing 
basic processes, 
where the axioms in Table~\ref{BasicAxioms} that are primed
replace the corresponding axioms in Table~\ref{Axioms} (reflecting the modification in the equivalence, i.e.\ we are axiomatizing $\simeq_{T}$ instead of $\simeq$);
\item the axioms in Table~\ref{TimedAxioms} related to axiomatizing timed operators (make it possible to eliminate them, so to obtain normal forms); and 
\item the axioms in Table~\ref{StaticAxioms} related to axiomatizing parallel composition and hiding operators (eliminating such operators via auxiliary operators and dynamically generated unguardedness via the $(RecHi)$ axiom taken from~\cite{concur05,mscs08}, so to obtain normal forms).

\end{itemize}



  

Concerning Table~\ref{BasicAxioms}, due 
to the more restrictive root of equivalence,
axioms $(Tau1)$ and $(Tau3)$ are restricted to $\gamma = \alpha \in \ms{PAct}$, instead of a general $\gamma$ action including $\sigma$, and a specific $\tau$ elimination axiom $(Tau4)$ for time is added, 
i.e.\ a restricted version of old axiom $(Tau1)$ in the case $\gamma = \sigma$ that now we have excluded. Notice
that variable replacement in axiom $(Tau4)$ is needed in order to deal with the case that $\delta.\tau.E$ occurs inside a recursion
in $F$. 
Moreover the ``serial'' condition in $(Tau4)$ is needed because, e.g., $\alpha. (\delta.\tau.E +^t G)$ is not equivalent to $\alpha. (\delta.E +^t G)$.

Concerning Table~\ref{TimedAxioms}, we just include axioms for timed operators 
that are actually needed to prove our completeness result for the discrete time calculus. As a matter of fact, we 
could have considered other axioms as, e.g., $\ms{rec}X . (X +^t E)  = \ms{rec}X .pri(E)$, which allows 
fully unguarded recursion to be removed in the case of timed choice (such an axiom is sound in that $\ms{rec}X . (X +^t E)$
cannot perform $\delta$ transitions, as it can be shown with a proof similar to that of Proposition \ref{nosigma}).

Concerning Table~\ref{StaticAxioms}, the axioms are standard, apart from the usage of the auxiliary 
operator $\ms{vis}(\_)$ to deal with distributivity of synchronization merge:  we cannot just distribute with 
the standard axiom 
$(E + F) \sme{S} G = E \sme{S} G + F \sme{S} G$ 
because, e.g., $(\tau.E + \delta.F) \sme{S} G$ is not equivalent to $\tau.E \sme{S} G + \delta.F \sme{S} G$, due
to priority of $\tau$ over $\delta$.
%
Moreover, as explained in the introduction, we make use of the axiom $(RecHi)$ 
taken from~\cite{concur05,mscs08} for eliminating unguardedness dynamically generated by the hiding operator. Notice that hiding here also has a prioritization effect: this makes it essential to resort to the axiomatization presented in Section 
\ref{SectAPFSB} that enacts priority of $\tau$ over $\delta$. Also in the case of 
Table~\ref{StaticAxioms} we just include axioms that we need in order to prove our completeness result. For this reason auxiliary operator axioms such as, e.g.,  $\ms{vis}(\tau.E)=\ms{vis}(E)$ and $\ms{vis}(\delta.E)=\nil$, are omitted.


The axiom system is sound with respect to $\simeq_{T}$ extended to open terms.

\begin{thm}~\label{soundness2}
Given $E, F \in \cale_{EDT}$, if $\cala_{DT} \vdash E = F$ then $E \simeq_{T} F$.
\end{thm}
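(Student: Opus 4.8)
The plan is to prove soundness of $\cala_{DT}$ by the standard route: show each axiom is sound (i.e.\ its two sides are related by $\simeq_{T}$), then invoke congruence of $\simeq_{T}$ with respect to all operators of the extended calculus (Theorem~\ref{EDRTcongruence2}) together with reflexivity, symmetry and transitivity of $\simeq_{T}$ to conclude that provable equality implies $\simeq_{T}$. Since many of the axioms are inherited or lightly modified from $\cala$, I would first dispatch those that need no new argument: the axioms of Table~\ref{Axioms} that are carried over unchanged are sound for $\simeq_{T}$ because $\simeq_{T}$ is a conservative extension of $\simeq$ and these axioms involve no $\delta$-prefixes that would interact with the strengthened root condition (the $Ung$ axioms in particular are untouched, as remarked in the introduction). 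For $(Tau1')$, $(Tau3')$, $(Tau4)$ of Table~\ref{BasicAxioms} I would note that restricting to $\alpha \in \ms{PAct}$ exactly removes the unsound $\gamma=\delta$ instances, and $(Tau4)$ is sound because the only problematic context — a $\delta.\tau.E$ directly under a timed choice $+^t$ — is excluded by the ``serial in $F$'' side condition (the paper's own remark gives the counterexample $\alpha.(\delta.\tau.E +^t G)$); for the substitution-under-recursion subtlety I would argue that $\delta.\tau.E \simeq_{T} \delta.E$ holds as a leaf equivalence and congruence propagates it through $F$.

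Next I would handle the genuinely new axioms in Tables~\ref{TimedAxioms} and~\ref{StaticAxioms}. For the ``definitional'' axioms $(Ter)$, $(TPre1)$, $(TPre2)$, $(Hi1)$--$(Hi4)$, $(RecHi)$, $(Par)$, $(LM1)$--$(LM4)$, $(SM1)$--$(SM5)$, $(Vis1)$--$(Vis3)$, the semantic models of the two sides are isomorphic (or related by a trivial rooted time weak bisimulation), which one checks directly from the operational rules — e.g.\ $\alpha^t.E$ and $\ms{rec}X.(\delta.X + \alpha.E)$ generate the same transitions by comparing Tables~\ref{TabRulesStand},~\ref{TabRulesDRT2} with the unfolding rule, noting $X$ not free in $E$ guarantees the recursion does not capture anything; $(RecHi)$ is the axiom imported from~\cite{concur05,mscs08} and I would cite its soundness there, observing that the ``serial'' side condition makes the hiding distribute correctly over the recursion body. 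The axioms $(TCh1)$--$(TCh7)$ require a small rooted-time-weak-bisimulation witness in each case, with the key point being time-determinism / the $\delta$-synchronization form of the rule for $+^t$ in Table~\ref{TabRulesDRT2}: for instance $(TCh6)$, $(\delta.E)\tch(\delta.F) \simeq_{T} \delta.(E\tch F)$, holds because both sides have a single $\delta$ transition to (bisimilar targets) $E \tch F$; $(TCh4)$, $(TCh5)$ rest on $pri(\cdot)$ killing all $\delta$-behaviour so that $+^t$ collapses to $+$. The most delicate axioms are $(SM6)$ and especially $(SM7)$: here I would show $(\delta.E + \ms{vis}(F)) \sme{S} G \simeq_{T} \delta.E \sme{S} G + \ms{vis}(F)\sme{S} G$ by checking that the $\ms{vis}(F)$ summand contributes exactly the $\tau$-free visible-action transitions, so no spurious $\tau$ appears to pre-empt the $\delta$ of the other summand — this is the place where priority of $\tau$ over $\delta$ forces the auxiliary operator and the nonstandard distributivity, and the bisimulation must be set up to track $P' \pco{S} G$ states and reuse congruence of $\simeq_{T}$ w.r.t.\ $\pco{S}$ on the reached terms.

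I expect the main obstacle to be $(SM7)$ (and to a lesser extent $(SM6)$ and $(Tau4)$): one has to argue carefully that in $(\delta.E + \ms{vis}(F)) \sme{S} G$ the presence of $\ms{vis}(F)$ never enables a $\tau$ transition (by the rule for $\sme{S}$, synchronization never produces $\tau$, and $\ms{vis}$ emits only visible actions), hence the $\delta$ move coming from $\delta.E$ on the left is matched on the right by the $\delta.E \sme{S} G$ summand without pre-emption, and symmetrically, so that the two transition systems are rooted-time-weakly-bisimilar via the relation $\{((\delta.E+\ms{vis}(F))\sme{S}G,\ \delta.E\sme{S}G + \ms{vis}(F)\sme{S}G)\} \cup \{(H\pco{S}G, H\pco{S}G)\} \cup \approx$, closing under $\simeq_{T}$ for the standard-action case. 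Once every axiom is shown sound, the theorem follows by a routine induction on the length of the derivation $\cala_{DT}\vdash E=F$, using Theorem~\ref{EDRTcongruence2} at each inference step that applies an axiom inside a context, exactly as in the proof of Theorem~\ref{soundness}.
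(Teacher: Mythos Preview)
Your overall scaffolding is sound and matches the paper's: check each axiom, then invoke Theorem~\ref{EDRTcongruence2} and the equivalence properties of $\simeq_{T}$. The paper in fact spends its ink almost exclusively on $(Tau4)$, cites~\cite{mscs08} for $(RecHi)$, and dismisses the rest as standard (pointing to~\cite{Ac} for the merge axioms). So your emphasis on $(SM7)$ is more than what the paper actually does, though your sketch for it is fine.

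There is, however, a genuine gap in your treatment of $(Tau4)$. You write that ``$\delta.\tau.E \simeq_{T} \delta.E$ holds as a leaf equivalence and congruence propagates it through $F$''. This leaf equivalence is \emph{false}: after the matching $\delta$-moves one is left with the pair $(\tau.E,E)$, which must still lie in the rooted time weak bisimulation (by Definition~\ref{newequiv} a $\delta$-move keeps you in the root). Then $\tau.E \arrow{\tau} E$ forces $E \warrow{\tau} E'$, i.e.\ at least one $\tau$-step from $E$, which fails whenever $E$ cannot move silently (take $E \equiv a.\nil$). Indeed, $\delta.\tau.a.\nil \not\simeq_{T} \delta.a.\nil$ is exactly the pair the paper uses to motivate introducing $\simeq_{T}$ in the first place; this is \emph{why} $(Tau1)$ had to be weakened to $(Tau1')$ and why $(Tau4)$ carries the outer $\alpha$-prefix. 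Congruence cannot rescue you from a false base case.

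What the paper does instead is work at the level of $\approx$, not $\simeq_{T}$: it exhibits the relation
\[
\beta=\{(G\{\delta.\tau.E/X\},\,G\{\delta.E/X\})\mid G\ \text{has at most $X$ free and $X$ is serial in }G\}\cup\{(\tau.E,E)\}
\]
and checks directly that $\beta$ is a weak bisimulation (seriality of $X$ is used so that corresponding transitions always land in a pair of the displayed form or in $(\tau.E,E)$). This yields $F\{\delta.\tau.E/X\}\approx F\{\delta.E/X\}$, and then the outer $\alpha$-prefix in the axiom lifts this to $\simeq_{T}$, since after an $\alpha$-move the rooted condition only demands $\approx$. Note you cannot shortcut this by ``$\tau.E\approx E$ plus congruence of $\approx$'', because $\approx$ is not a congruence for $+$; the explicit bisimulation is needed.
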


\begin{proof}
The soundness of the new axiom $(Tau4)$ is proved by just showing that\\[.1cm]
$\beta \!=\! \{ (G\{\delta.\tau.E / X\}, G\{\delta.E / X\}) \! \mid \! 
G \; {\rm contains \; at \; most} \; X \; {\rm free \; and} \; X \; {\rm is \; serial} 
{\rm \; in \; G} \} \cup \{ (\tau.E, E) \}$\\[.1cm]
is a weak bisimulation. 
This holds because, for corresponding transitions, 
$G\{\delta.\tau.E / X\}$ and $G\{\delta.E / X\}$, with $X$ serial in $G$,
either reach related terms $(G'\{\delta.\tau.E / X\}$ and $G'\{\delta.E / X\})$, respectively,
for some $G'$ such that $G'$ contains  at  most $X$ free  and $X$ is  serial  in $G'$,
or reach related terms $(\tau.E, E)$.

The soundness of the $(RecHi)$ axiom is shown as in~\cite{mscs08}. 
The proof of soundness for the other axioms is standard, see
~\cite{Ac}. In particular, concerning axiom $(Par)$, notice that the semantics of synchronization merge  ``$\sme{S}$'' follows a standard approach, in that: inferring $\sarrow{a}$ transitions using the new $\ms{vis}(\_)$ operator is like directly inferring them from $\sarrow{\tau}\!\!^* \sarrow{a}$ transitions.
\end{proof}


\subsubsection{Completeness for Time-Deterministic Basic Processes}\label{DRTBCompl}

%

Completeness of the $\cala_{DT}$ axiomatization is proven by resorting to equation sets. In particular, we have to introduce the  subclass of {\it time-deterministic} ones.

\begin{defi}
Let $S = \{ X_i = H_i \mid 1 \leq i \leq n \}$, with formal variables 
$\tilde{X} = \{ X_1, \dots , X_n \}$, where $X_1$ is the distinguished variable of $S$, and free variables
$\tilde{W} = \{ W_1, \dots , W_m \}$ be a standard equation set and let us suppose each expression $H_i$ ($1 \leq i \leq n$) to be denoted by
\cws{0}{0}{
H_i \equiv \sum_{j \in J_i} \gamma_{i,j} . X_{f(i,j)} + \sum_{k \in K_i} 
W_{g(i,k)}.
}
$S$ is {\it time-deterministic} if it holds that, for all $i$, we have
\cws{10}{10}{
\forall j,j' \!\in\! J_i \ldotp \gamma_{i,j} = \gamma_{i,j'} = \delta \; \Rightarrow \; \, j = j'
}
$S$ is {\it well-rooted} if it holds that, for all $i$, we have
\begin{itemize}
\item If $X_i \in \tilde{X}_R$ then $\forall j \in J_i \ldotp \gamma_{i,j} \neq \delta
\Rightarrow X_{f(i,j)} \notin \tilde{X}_R$
\item If $X_i \notin \tilde{X}_R$ then $\forall j \in J_i \ldotp X_{f(i,j)} \notin \tilde{X}_R$
\end{itemize}
where $\tilde{X}_R$ is the set of {\it root variables} of $S$
defined by $\tilde{X}_R = \{ X_i \mid X_1 \sarrow{\delta}\!\!\!_S^{\,\,*} X_i \}$.

\end{defi}

\begin{prop}\label{rootprop}
Let expression $E \in \cale$ provably satisfy a standard equation set
$S$. Then there exists a well-rooted standard equation set $S'$ provably satisfied by $E$.
Moreover, if $S$ is time-deterministic, prioritized, guarded and closed, 
then $S'$ is time-deterministic, prioritized, guarded and closed.
\end{prop}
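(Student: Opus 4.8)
The goal is Proposition~\ref{rootprop}: from a standard equation set $S$ provably satisfied by $E$, build a \emph{well-rooted} standard equation set $S'$ provably satisfied by $E$, preserving time-determinism, prioritization, guardedness and closedness. I would proceed by a ``state-splitting'' construction analogous to the way one normally builds normal forms from LTSs: the issue is that a single formal variable $X_i$ may be reachable from the distinguished variable $X_1$ along \emph{both} a path containing at least one $\delta$-step (so it ``ought'' to be a root variable) and a path containing no $\delta$-step (so it ``ought not'' to be one). The fix is to duplicate such variables, keeping one ``root copy'' and one ``non-root copy''.

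\textbf{Step 1: identify which variables need splitting.} Define, purely combinatorially on the graph $\arrowd{\gamma}{S}$, the set $\tilde X_R$ of root variables as in the statement ($\tilde X_R = \{X_i \mid X_1 \sarrow{\delta}{}_S^{*} X_i\}$), and also the set $\tilde X_N$ of variables reachable from $X_1$ \emph{without} traversing a $\delta$-edge. The variables in $\tilde X_R \cap \tilde X_N$ are exactly the problematic ones.

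\textbf{Step 2: construct $S'$ by duplication.} For each $X_i$, introduce a ``non-root'' variant $X_i^N$ and, if $X_i$ is (reachable as) a root variable, a ``root'' variant $X_i^R$; concretely one can work with the index set $\tilde X \times \{R,N\}$ and then throw away the unreachable pairs. Define the right-hand side of $X_i^R$ to be $H_i$ with every summand $\gamma_{i,j}.X_{f(i,j)}$ rewritten: if $\gamma_{i,j}=\delta$ the target stays a root copy $X_{f(i,j)}^R$, and if $\gamma_{i,j}\neq\delta$ the target becomes the non-root copy $X_{f(i,j)}^N$; and $X_i^N$ gets $H_i$ with every target (regardless of label) sent to the non-root copy $X_{f(i,j)}^N$. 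Take $X_1^N$ as the distinguished variable. By construction the resulting set is standard, its root-variable set is exactly $\{X_i^R\}$, and the two well-rootedness clauses hold by the very definition of the rewriting. Time-determinism, prioritization and closedness are inherited summand-by-summand since we never change labels, never merge summands, and never introduce free variables; guardedness is inherited because a $\tau$-cycle in $S'$ projects (forgetting the $R/N$ tags) to a $\tau$-cycle in $S$.

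\textbf{Step 3: show $E$ provably satisfies $S'$.} Here is where the axioms come in, and this is the main obstacle. Let $\tilde E = \{E_1,\dots,E_n\}$ witness that $E$ provably satisfies $S$, so $\cala_{DT}\vdash E_i = H_i\{\tilde E/\tilde X\}$. I would set the witnessing family for $S'$ to be: the non-root copy $X_i^N$ is witnessed by $E_i$ itself, while the root copy $X_i^R$ is witnessed by a \emph{modified} expression $E_i'$ obtained from $E_i$ by replacing, inside the ``top-level $\delta$-reachable part'', every $\tau$-prefixed continuation the way axiom $(Tau4)$ lets us (the primed $\tau$-axioms and $(Tau4)$ of Table~\ref{BasicAxioms} are precisely designed so that $\tau$'s occurring \emph{after} a $\delta$ but \emph{not} under a $+^t$ can be inserted/removed; note the ``serial'' side conditions of $(Tau4)$ and $(Rec2)$ are met because the $E_i$ coming from the representation construction are serial in the relevant variables). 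One then verifies $\cala_{DT}\vdash E_1 = H_1'^{N}\{\tilde E'/\tilde X'\}$ and the analogous equations for the other new variables, using $(A1)$--$(A4)$ to rearrange sums and $(Tau4)$/$(Tau1')$ to reconcile the root copies with the non-root ones; the distinguished equation itself needs no $\tau$-adjustment since nothing $\delta$ precedes it, which is exactly why we may keep $E_1$ as the witness for $X_1^N$.

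\textbf{The hard part.} The genuinely delicate point is Step~3: arguing that the duplication is \emph{provable}, not merely semantically valid. In Milner-style completeness arguments one gets provable satisfaction of a new equation set ``for free'' when the transformation on right-hand sides is a consequence of the axioms applied inside a context; here the subtlety is that splitting a variable into a root and a non-root copy is not literally an axiom application, so one must instead exhibit explicit witnessing expressions (the $E_i$ and $E_i'$ above) and check each new defining equation is derivable. The technical heart is ensuring that whatever $\tau$-behaviour the original $E_i$ exhibits after a $\delta$ is reconciled, via $(Tau4)$ and the primed $\tau$-axioms, between the ``$R$'' witness and the ``$N$'' witness --- and that the side conditions on seriality are always satisfied along the way, which is why I would track seriality of the witnessing expressions as an explicit invariant through the construction.
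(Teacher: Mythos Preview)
Your duplication idea in Step~2 is exactly the paper's approach, but there are two problems.

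First, a small but genuine error: the distinguished variable must be the \emph{root} copy $X_1^R$, not $X_1^N$. By definition $X_1\in\tilde X_R$ (via the zero-length $\delta$-path), so the initial state is always a root variable. If you start from $X_1^N$, then the root variables of $S'$ are the $N$-copies $\{X_i^N : X_i\in\tilde X_R\}$, not the $R$-copies as you claim, and well-rootedness fails immediately (an $N$-copy's non-$\delta$ edges also target $N$-copies, some of which are now root). With $X_1^R$ as distinguished variable your construction works and coincides with the paper's.

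Second, and more importantly, your Step~3 is built on a misunderstanding. You identify provable satisfaction of $S'$ as ``the hard part'' and reach for $(Tau4)$ and modified witnesses $E_i'$, but none of that is needed. The key observation you are missing is that you can assign the \emph{same} witnessing expression $E_i$ to both copies $X_i^R$ and $X_i^N$. The new right-hand sides differ from the old $H_i$ only in which \emph{variable names} appear; once you substitute, both $X_j^R$ and $X_j^N$ are replaced by the same $E_j$, so the substituted right-hand side is \emph{syntactically identical} to $H_i\{\tilde E/\tilde X\}$. Hence $\cala_{DT}\vdash E_i = H_i'\{\ldots\}$ follows directly from the hypothesis $\cala_{DT}\vdash E_i = H_i\{\tilde E/\tilde X\}$, with no further axiom applications. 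This is exactly how the paper handles it (tersely, without even spelling out the witnesses).

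Your invocation of $(Tau4)$ and the talk of ``reconciling $\tau$-behaviour after a $\delta$'' belongs to a different lemma entirely --- the $\tau$-saturation step (Lemma~\ref{tausatlemma}) --- which is where $\delta.\tau.X$ versus $\delta.X$ discrepancies arise and where $(Tau4)$ is genuinely required. The well-rootedness transformation is purely a relabelling of variables and has no interaction with $\tau$-laws at all.
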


\begin{proof}
We first consider new variables $X'_i$, one for each variable $X_i \in \tilde{X}_R$, and new equations $X'_i = H'_i$ where $H'_i$ is obtained from the term $H_i$ such that $X_i = H_i$ by replacing each occurrence of a variable $X_j \in \tilde{X}_R$ with $X'_j$. The idea is that, once the root is left, $X'_i$ variables are used to represent the same behavior as that of $X_i$ variables.
Then we modify: the equations $X_i = H_i$ for the variables $X_i \in \tilde{X}_R$ by replacing each occurrence of $\gamma.X_j$ in $H_i$, for any $\gamma \neq \delta$ 
and $X_j \in \tilde{X}_R$, with $\gamma.X'_j$; the equations $X_i = H_i$ for the variables $X_i \notin \tilde{X}_R$ by replacing each occurrence of a variable $X_j \in \tilde{X}_R$ in $H_i$ with $X'_j$. It is immediate to verify that the new equation system has the same root variables 
$\tilde{X}_R$ as the original one (because $\delta$ prefixed variables in the equations
of root variables are not modified) and that, by construction, the two statements in the
definition of ``well-rooted'' standard equation systems hold true.
\end{proof}

Theorems \ref{repr} and \ref{onesol} concerning representability and one and only one solution still hold with exactly the same proofs (the change from $\simeq$ to $\simeq_{T}$ does not affect the transforming expressions into equation sets and vice-versa). Here we just have to additionally show 
that the standard solution of a time-deterministic guarded equation set is time-deterministic and that any time-deterministic guarded basic process can be represented by a time-deterministic equation set. 

\begin{thm}[time-deterministic solution]\label{soldet}
If $S$ is a time-deterministic, prioritized, standard, guarded and closed equation set,
then there is a time-deterministic guarded process $P \in \calp$ 
which provably satisfies $S$. 
\end{thm}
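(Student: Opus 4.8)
The plan is to reuse the inductive construction of the standard solution from Theorem \ref{onesol} and check that, when the input equation set $S$ is time-deterministic, every intermediate equation set produced by that construction remains time-deterministic, so that the basic process $P$ obtained at the end generates a time-deterministic labelled transition system. Recall that the construction in \cite{Mil89} proceeds by induction on the number $n$ of equations: for $n=1$ we take $P \equiv \ms{rec}X_1 . H_1$, and for $n+1$ we substitute $\ms{rec}X_{n+1} . H_{n+1}$ for $X_{n+1}$ in all the other right-hand sides, obtaining $H'_i = H_i \{ \ms{rec}X_{n+1} . H_{n+1} / X_{n+1} \}$, and then solve the resulting $n$-equation set. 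The first thing I would do is make precise what ``time-deterministic'' means for the {\it non-standard} equation sets arising as intermediate steps — namely the syntactic condition that, in each right-hand side, at most one summand is of the form $\delta . F$ — and observe that the substitution step $H_i \{ \ms{rec}X_{n+1}.H_{n+1} / X_{n+1} \}$ preserves this: a summand $\delta . X_{n+1}$ of $H_i$ becomes the single summand $\delta . (\ms{rec}X_{n+1}.H_{n+1})$, and no new top-level $\delta$-summands are created because $\ms{rec}X_{n+1}.H_{n+1}$ is a recursion, not a sum.

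Next I would argue that the standard solution $P$ of a time-deterministic, prioritized, standard, guarded and closed set $S$ is itself a time-deterministic process in the sense of Definition \ref{deftimedet}. For this I would examine the transitions of $P$: since $S$ is guarded, $P$ is a guarded basic process whose reachable states correspond (via the provable-satisfaction expressions $\tilde E$) to the formal variables $X_i$, and a $\delta$-transition out of the state associated with $X_i$ can only be inferred from a top-level summand $\delta . X_{f(i,j)}$ of $H_i$ — using here that $S$ is prioritized, so no $\delta$-summand that would be pre-empted by a $\tau$-summand is present, hence the priority rules of Table \ref{DeltaRules} do not cut these transitions. Time-determinism of $S$ says there is at most one such summand, so at most one $\delta$-transition is inferable from each reachable state, which is exactly Definition \ref{deftimedet}. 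A small technical point to handle carefully is that the states of $P$ reachable ``between'' applications of $(Rec1)$ unfoldings must be checked too, but since unfolding $\ms{rec}X_i.(\cdots)$ preserves the multiset of top-level summands up to the substitution just discussed, the single-$\delta$-summand invariant is maintained along the whole construction.

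The main obstacle I anticipate is bookkeeping rather than conceptual: one must be careful that the intermediate equation sets, which are no longer {\it standard} (their right-hand sides contain recursion subterms), still enjoy the single-top-level-$\delta$-summand property after each substitution and after the final round of $(Rec1)$-unfoldings used in \cite{Mil89} to exhibit the solution, and that guardedness (needed to invoke the priority-based simplifications and to ensure the transition system is genuinely finite-state with no $\tau$-loops) is not destroyed — but the latter is already handled in the proof of Theorem \ref{onesol}. Once these invariants are in place, the conclusion is immediate: $P \in \calp$ is guarded, provably satisfies $S$ by Theorem \ref{onesol}, and is time-deterministic by the argument above.
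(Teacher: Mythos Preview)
Your plan is essentially correct and close in spirit to the paper's argument, but the paper organizes it more cleanly via a single technical device that you only gesture at. Where you say that reachable states of $P$ ``correspond (via the provable-satisfaction expressions $\tilde E$) to the formal variables $X_i$'' and then flag as a ``small technical point'' the need to check states arising from unfoldings, the paper introduces a precise syntactic notion of \emph{$X$-expansion}: an expression $E$ is an $X$-expansion if, given the equation $X = \sum_{j} \gamma_j.X_j$ in $S$, we have $E \equiv \ms{rec}X.\sum_j \gamma_j.E_j$ where each $E_j$ is either the variable $X_j$ itself or, recursively, an $X_j$-expansion. The paper then shows (i) the standard solution $P$ is an $X_1$-expansion, and (ii) any $\gamma_j$-successor of an $X$-expansion is an $X_j$-expansion. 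Time-determinism of $P$ is then immediate from time-determinism of $S$, with no separate bookkeeping for intermediate states. Your first paragraph, tracking a generalized ``single top-level $\delta$-summand'' invariant through the intermediate (non-standard) equation sets produced by the inductive construction, is a detour the paper does not take: once reachable states are characterized as $X$-expansions, the invariant on intermediate equation sets is not needed. Finally, your appeal to the prioritized hypothesis is superfluous for the conclusion as stated: even if a $\tau$-summand coexisted with the unique $\delta$-summand, priority would only \emph{suppress} the $\delta$-transition, so at most one $\delta$-transition would still hold.
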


\begin{proof}
The inductive procedure for building $P$ is that presented in the proof of Theorem \ref{onesol}, in turn taken from~\cite{Mil89}. In addition here we show that such $P$ is time-deterministic.
We first observe that, since $S$ is standard and closed, 
the equation of any of its variables $X$ is of the form $X = \sum_{j \in J} \gamma_j . X_j$, for some variables $X_j$, with $j \in J$, all having a defining equation in $S$.

We now show that every process $P'$ reachable from $P$ is an expansion of some variable $X$ of $S$: in short, an $X$-expansion. In general an expression $E$ is an $X$-expansion if the following holds: if $X = \sum_{j \in J} \gamma_j . X_j$ is the defining equation for $X$ in $S$, $E \equiv \ms{rec}X. \sum_{j \in J} \gamma_j . E_j$ for some expressions $\{ E_j \mid j \in J\}$ such that, for each $j \in J$, either $E_j \equiv X_j$ or $E_j$ is, itself, an $X_j$-expansion.
It is immediate to show that the inductive procedure for building $P$ yields an $X$-expansion, with 
$X$ being the distinguished (first) variable of $S$. Moreover, 
taken any process $P'$ that is an $X$-expansion for some $X$ in $S$, with $X = \sum_{j \in J} \gamma_j . X_j \in S$ and
 $P' \equiv \ms{rec}X. \sum_{j \in J} \gamma_j . E_j$, we have that process $P''$ reached from $P'$ with the $\gamma_j$ transition
is an $X_j$-expansion.
%
%
%
%
%

Therefore, since $S$ is time-deterministic, then $P$ is time-deterministic.
\end{proof}

\begin{thm}[time-deterministic representability]\label{reprtimedet}
Every time-deterministic guarded process $P \in \calp$ provably satisfies a 
time-deterministic, prioritized, standard, guarded and closed equation set $S$.
\end{thm}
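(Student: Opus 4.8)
The plan is to mimic the ``representability'' Theorem~\ref{repr}, whose proof proceeds by structural induction on $E$, and to additionally track time-determinism of the resulting equation set. So first I would recall the construction from~\cite{Mil89}: for each syntactic form of $E$ one builds an equation set whose distinguished variable stands for $E$, combining (by disjoint union, renaming formal variables apart) the equation sets obtained inductively for the immediate subterms, and then one saturates/normalizes so that the resulting set is standard. The new content here is purely about the $\delta$-summands: I must show that, because $P$ is time-deterministic, in the standard equation set $S$ so obtained every equation $H_i$ has at most one summand of the form $\delta.X_j$.

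The key observation making this work is the correspondence, already implicit in Theorem~\ref{repr}, between the variables of $S$ and the states reachable from $P$: each formal variable $X_i$ ``represents'' a derivative $P_i$ of $P$, and the summands of $H_i$ are in bijection with the outgoing transitions of $P_i$ (this is what ``provably satisfies'' plus ``standard'' amounts to once one has also invoked Theorem~\ref{reduction} to get a prioritized set). Since $P$ is time-deterministic, every derivative $P_i$ has at most one outgoing $\delta$-transition; hence $H_i$ has at most one $\delta.X_{f(i,j)}$ summand, which is exactly the condition $\forall j,j'\in J_i.\ \gamma_{i,j}=\gamma_{i,j'}=\delta \Rightarrow j=j'$. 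The remaining clauses (standard, guarded, closed) come verbatim from Theorems~\ref{repr} and~\ref{reduction}: guardedness of $P$ gives guardedness of $S$, closedness is automatic since $P\in\calp$ has no free variables, and prioritizedness is obtained by applying Theorem~\ref{reduction}.

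Concretely I would carry out the steps as follows. First, apply Theorem~\ref{repr} to the guarded process $P\in\calp$ to obtain a standard guarded (and, since $P$ is closed, closed) equation set $S_0$ provably satisfied by $P$; then apply Theorem~\ref{reduction} to get a prioritized standard guarded closed $S_1$ still provably satisfied by $P$. Second, establish the state-correspondence: by induction on the structure of $E$ along the construction in~\cite{Mil89}, each formal variable $X_i$ of $S_1$ is associated with a unique derivative $P_i$ of $P$ (with $P_1\equiv P$) such that $X_i\arrowd{\gamma}{S_1}X_j$ iff $P_i\arrow{\gamma}P_j$ — this is essentially the soundness direction of ``provably satisfies'' combined with the operational semantics. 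Third, conclude: since every $P_i$ is time-deterministic, it has at most one outgoing $\delta$-transition, so $H_i$ contains at most one $\delta$-prefixed summand; hence $S_1$ is time-deterministic, and we take $S\equiv S_1$.

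The main obstacle I expect is making the state-correspondence between formal variables of $S$ and derivatives of $P$ fully rigorous: the construction in~\cite{Mil89} interleaves disjoint unions of subterm equation sets with normalization steps (eliminating $\nil$, merging duplicate summands, pushing recursion in), and I must check that none of these steps can merge two distinct $\delta$-summands into a genuinely different shape or, conversely, create a spurious second $\delta$-summand in an equation whose associated state has only one $\delta$-transition. The safest route is probably to argue semantically rather than syntactically: since $P$ provably satisfies $S$, soundness (Theorem~\ref{soundness}/\ref{soundness2}) gives that the standard solution of $S$ restricted to each variable is $\simeq_T$ to the corresponding derivative of $P$, and a little care with the operational rules shows that a standard prioritized equation with two distinct $\delta.X_j$, $\delta.X_k$ summands forces the associated process to have two outgoing $\delta$-transitions — contradicting time-determinism of $P$ if we have also ensured (via the well-rootedness/representability machinery) that each syntactic $\delta$-summand corresponds to a real transition. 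This reduction of a syntactic property to the semantic one of Definition~\ref{deftimedet} is where the proof needs the most attention.
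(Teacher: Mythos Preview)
Your approach is genuinely different from the paper's, and the obstacle you flag is real and is not resolved by the semantic fix you propose.

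The paper does \emph{not} go through Theorem~\ref{repr}. Instead it builds $S$ directly from the transition system of $P$: enumerate the (finitely many) states $P_1,\dots,P_n$ with $P_1\equiv P$; for each $i$, use $(Rec1)$ and the $(Pri)$ axioms to prove $P_i=\sum_{j\le m_i}\gamma^i_j.P_{k^i_j}$ where the summands are exactly the outgoing transitions of $P_i$; then take $S=\{X_i=\sum_j\gamma^i_j.X_{k^i_j}\}$. With this construction the correspondence ``variable $\leftrightarrow$ state'' holds by definition, so time-determinism of $P$ immediately gives at most one $\delta$-summand per equation, and guardedness of $P$ (no $\tau$-cycles in the transition system of a strongly guarded process) gives guardedness of $S$.

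Your route via the structural-induction representability fails at exactly the point you suspected. The Milner construction assigns \emph{syntactic subterms}, not semantic states, to formal variables, and combines equation sets for $E'+E''$ with \emph{disjoint} copies of variables. A tiny counterexample: $P\equiv\delta.\nil+\delta.\nil$ is time-deterministic (a single transition $P\arrow{\delta}\nil$), yet the construction yields an equation $Z_1=\delta.X_2+\delta.Y_2$ with $X_2=\nil$, $Y_2=\nil$ and $X_2\not\equiv Y_2$: two $\delta$-summands, hence not a time-deterministic equation set. Your proposed semantic fix (``two distinct $\delta.X_j$, $\delta.X_k$ summands force two outgoing $\delta$-transitions'') is false here: the solving expressions $E_j$ and $E_k$ can be syntactically identical (both $\nil$), so the solved process has only one $\delta$-transition. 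Repairing this would require quotienting the variable set by semantic equality of the solving expressions, which is precisely what the paper's transition-system construction does from the outset.
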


\begin{proof}
%
Let $P_1 \dots P_n$ be the states of the transition system of $P \equiv P_1$. 
By applying axiom $(Rec1)$ and the $(Pri)$ axioms, for each 
$i \in \{ 1 \dots n \}$, there exist $m_i$, $\{\gamma^i_j\}_{j \leq m_i}$
(denoting actions), 
$\{k^i_j\}_{j \leq m_i}$ (denoting natural numbers) s.t.\ we can derive 
$P_i = \sum_{j \leq m_i} \gamma^i_j . P_{k^i_j} $,
where ``$\arrow{\gamma^i_j} P_{k^i_j}$'', with $j \leq m_i$, are the outgoing transitions of $P_i$ 
(no outgoing transitions corresponds to the sum being $\nil$).
Hence we can characterize the behavior of
$P_1$ by means of a time-deterministic, prioritized, standard, guarded and closed equation set  
$S = \{ X_i = H_i \mid 1 \leq i \leq n \}$ where 
$H_i \equiv \sum_{j \leq m_i} \gamma^i_j . X_{k^i_j} $ and
$P_1 \dots P_n$ are a solution of the equation set. 
Such an equation set is 
guarded because the arguments of
the sums in the equations are the outgoing transitions of the states of $P$ and
$P$ is a guarded basic processes,
hence every cycle in its transition
system contains at least one non-$\tau$ action.
\end{proof}

\begin{defi}
An $\alpha$-{\it saturated} standard equation set $S$ with formal variables 
$\tilde{X}$ is a standard equation set such that, for all $X \in \tilde{X}$, items $(i)$ and $(iii)$ of Definition \ref{satur} hold.
 \end{defi}

\begin{lem}\label{satlemma2}
Let expression $E \in \cale$ provably satisfy $S$, time-deterministic, prioritized, standard and guarded. Then there is an $\alpha$-saturated time-deterministic, prioritized, standard and guarded equation set $S'$
provably satisfied by $E$. Moreover $S'$ is well-rooted and closed if $S$ is well-rooted and closed. 

\begin{proof}
Since the saturation involves only standard $\alpha$ actions, 
we can derive $S'$ by following the same procedure as that in~\cite{Mil89} 
to saturate $X \sarrow{\tau}\!\!\!_S^{\,\,*} \; \arrowd{\alpha}{S} \, 
\sarrow{\tau}\!\!\!_S^{\,\,*} \; X'$ in standard equation sets. Hence, the procedure involves 
the axioms $(A1)-(A4)$ and $(Tau1'),(Tau2),(Tau3')$, i.e. the axioms that correspond
to standard axioms when just dealing with (saturation of) standard actions, plus the $(Pri)$ axioms that are needed in order to remove unwanted instances of $\delta$ prefixes produced by using $(Tau2)$.
\end{proof}

\end{lem}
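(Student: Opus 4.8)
The plan is to obtain $S'$ by the very same saturation construction that Milner uses for standard equation sets in~\cite{Mil89}, exactly as invoked in the proof of Lemma~\ref{satlemma}, and then to account for the two features that distinguish the present setting: the $Tau$-axioms available here are the primed versions of Table~\ref{BasicAxioms} (restricted to $\alpha\in\ms{PAct}$), and the output must not merely be $\alpha$-saturated, standard and guarded but also \emph{time-deterministic}, \emph{prioritized}, and --- when $S$ is --- \emph{well-rooted} and closed.

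First I would note that $\alpha$-saturation asks only for clauses $(i)$ and $(iii)$ of Definition~\ref{satur}, dropping the $\delta$-clause $(ii)$ that full saturation would require. Consequently the construction only ever \emph{adds} summands of the form $\alpha.X_j$ with $\alpha\in\ms{PAct}$ (when $X_i \sarrow{\tau}\!\!\!_S^{\,\,*} \arrowd{\alpha}{S} \sarrow{\tau}\!\!\!_S^{\,\,*} X_j$) or free variables $W$ (when $X_i \sarrow{\tau}\!\!\!_S^{\,\,*} \rhd_S W$); it never introduces a $\delta$-prefixed summand, and it keeps the equations in standard form. I would then run Milner's procedure verbatim, each step being provably justified by $(A1)-(A4)$ together with $(Tau1')$, $(Tau2)$ and $(Tau3')$ --- the primed laws being exactly the instances of Milner's $Tau$-axioms that occur, since every prefix the procedure manipulates is a standard action, while $(Tau2)$ is unchanged from Table~\ref{Axioms}. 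Guardedness is preserved as in~\cite{Mil89}, because a newly added $\tau.X_j$ summand merely short-circuits an already existing $\tau$-path and so creates no cycle $X_i \sarrow{\tau}\!\!\!_{S'}^{\,\,+} X_i$; and closedness is preserved since no fresh free variable is ever created.

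The one genuinely new point is the priority interaction flagged right after the lemma: in the derivations that justify a step, $(Tau2)$ used in the direction $\tau.E = E + \tau.E$ may temporarily place the $\delta$-prefixed summands of $E$ alongside the exposed $\tau$-prefix, producing an intermediate term that is not prioritized. I would remove those $\delta$-summands soundly, by applying $(Pri6)$ and then $(Pri4)$, $(Pri3)$, $(Pri1)$ and $(A4)$, and then continue the derivation. This repair never alters the shape of the \emph{final} equation set $S'$ beyond Milner's standard additions, so the $\delta$-prefix structure of $S'$ coincides with that of $S$: hence $S'$ inherits both time-determinism (at most one $\delta$-summand per equation) and the prioritized property, and moreover $\tilde{X}_R$ is the same set for $S'$ as for $S$. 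From this last fact well-rootedness of $S'$ follows from that of $S$: a summand $\alpha.X_j$ added to $H_i$ is witnessed by a path $X_i \sarrow{\tau}\!\!\!_S^{\,\,*} X_k \arrowd{\alpha}{S} X_l \sarrow{\tau}\!\!\!_S^{\,\,*} X_j$ in $S$, every edge of which is labelled by an action different from $\delta$ (the $\tau$-steps and the $\arrowd{\alpha}{S}$-step with $\alpha\in\ms{PAct}$); iterating the two clauses of the well-rootedness definition for $S$ along this path then forces $X_j\notin\tilde{X}_R$ whenever $X_i\in\tilde{X}_R$, and preserves $X_j\notin\tilde{X}_R$ whenever $X_i\notin\tilde{X}_R$, while added free variables are irrelevant to the two clauses.

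The step I expect to be the main obstacle is precisely this interaction between $(Tau2)$ and priority inside the per-step justifications: one must check that every use of $(Tau2)$ which exposes a $\delta$-summand next to a $\tau$ can be repaired by the $(Pri)$-axioms without disturbing the $\alpha.X_j$ summands and the free variables being installed, so that the induction underlying Milner's saturation procedure still closes as in~\cite{Mil89}. Everything else --- preservation of the standard form, guardedness, closedness, and the well-rootedness bookkeeping above --- is routine.
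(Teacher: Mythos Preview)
Your proposal is correct and follows the same approach as the paper: run Milner's saturation procedure using $(A1)$--$(A4)$ together with $(Tau1')$, $(Tau2)$, $(Tau3')$, and invoke the $(Pri)$ axioms to clean up the $\delta$-summands that $(Tau2)$ temporarily exposes next to a $\tau$. Your write-up is in fact more thorough than the paper's own proof, which leaves the preservation of time-determinism, prioritization, well-rootedness and closedness implicit, whereas you spell out each of these checks explicitly and correctly.
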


We now introduce the novel concept of $\tau$-{\it saturation} of (time-deterministic, prioritized, standard and closed) equation sets $S$. $\tau$-{\it saturation} transforms $\delta.X$ terms occurring in the definition of a non-root variable of $S$ into $\delta.\tau.X$ terms. Such a transformation preserves the equation set solutions due to axiom $(Tau4)$, which makes it possible to apply it inside recursions. $\tau$-{\it saturation} will play a fundamental role in the equation set mergeability Theorem \ref{merge} in that it allows us to eliminate, via axiom $(Tau1')$, $\tau$ prefixes occurring inside terms used to replace $X$ in $\delta.\tau.X$.

\begin{defi}
Let $S = \{ X_i = H_i \mid 1 \leq i \leq n \}$ be a 
time-deterministic, prioritized, standard and closed equation set  with formal variables 
$\tilde{X} = \{ X_1, \dots , X_n \}$. Moreover
assume expressions 
$H_i$ ($1 \leq i \leq n$) to be denoted by:
\cws{0}{0}{
H_i \equiv \sum_{j \in J_i} \gamma_{i,j} . X_{f(i,j)}
}

$\tau$-{\it saturation} of $S$  
yields the equation set $S' = \{ X_i = H'_i \mid 1 \leq i \leq n \}$
with formal variables $\tilde{X}$ and with
\cws{0}{0}{
H'_i \equiv \sum_{j \in J_i} \gamma_{i,j} . G_{i,j}
}
where: $G_{i,j} \equiv \tau.X_{f(i,j)}$ if $X_i \notin \tilde{X}_R$ and $\gamma_{i,j} = \delta$ ; $G_{i,j} \equiv X_{f(i,j)}$ otherwise.
 
\end{defi}

\begin{lem}\label{tausatlemma}
Let process $P \in \calp$ provably satisfy 
$S$ 
time-deterministic, prioritized, standard, guarded and closed.
Then $P$ provably satisfies $S'$ obtained by $\tau$-saturating $S$.

\begin{proof}
We assume, without loss of generality, that the root variables of $S$ and $S'$ are the first ones
in the index variable ordering (if that is not the case we simply re-order variables in the same
way inside both $S$ and $S'$ and, obviously, since the initial variable is not involved in such
a re-ordering, any term that provably satisfies an equation set still satisfies a re-ordered one
and vice-versa). 

Let $Q$ be the standard solution of $S$ and $Q'$ be the standard solution of $S'$.
Since $S'$ is obtained by $\tau$-saturating $S$, $S$ and $S'$ just differ for the fact that, in the definition
of non-root variables, $\delta.R$ summands are replaced by $\delta.\tau.R$ summands. As a consequence,
according to the inductive procedure in~\cite{Mil} for deriving the standard solution from a guarded standard equation set (here reported in the proof of Theorem~\ref{onesol}),
we have that $Q$ and $Q'$ just differ for the fact that: in subterms $\ms{rec}X.E$, where $X$ is a non-root variable of $S$, $\delta.R$ summands of the sum $E$ are replaced by $\delta.\tau.R$ summands. It is easy to see, by contradiction, that such subterms $\ms{rec}X.E$ must be in the scope
of an $\alpha$ prefix operator. If that was not the case, i.e.\ if $\ms{rec}X.E$ was only in the scope of $\delta$ prefixes (or not in the scope of any prefix), it would imply that $X$ is a root variable: since $Q$, $Q'$ are built just by variable replacement, in $S$, $S'$ variable $X$ would be reachable from the initial variable by performing $\delta$ steps only (traversing the variables $Y$ such that $\ms{rec}X.E$ is in the scope of $\ms{rec}Y.\_$, in the outside-inside order). 

Therefore, by multiple applications of the axiom $(Tau4)$ we have that $Q = Q'$ and, since $P=Q$, for unique solution of guarded equation sets, we are finished.
\end{proof}
\end{lem}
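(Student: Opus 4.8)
The plan is to reduce the statement to a provable equality between the \emph{standard solutions} of $S$ and of $S'$, and then to establish that equality by repeated application of axiom $(Tau4)$. First I would, without loss of generality, re-order the formal variables of $S$ (and correspondingly of $S'$) so that the root variables come first; since the distinguished variable $X_1$ is itself a root variable it keeps the first position, so this renaming affects neither which terms provably satisfy $S$ nor which provably satisfy $S'$. Observe that $S'$ is again guarded: $\tau$-saturation only turns summands $\delta.X_{f(i,j)}$ occurring in equations of non-root variables into $\delta.\tau.X_{f(i,j)}$, and since $\delta\neq\tau$ this creates no new unguarded variable occurrence, hence no new $\tau$-cycle. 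Let $Q$ be the standard solution of $S$ and $Q'$ the standard solution of $S'$ (both exist and are guarded by Theorem~\ref{onesol}). Since $P$ provably satisfies the guarded set $S$, Theorem~\ref{onesol} gives $\cala_{DT}\vdash P = Q$, while $Q'$ provably satisfies $S'$. Hence it suffices to prove $\cala_{DT}\vdash Q = Q'$: combined with $\cala_{DT}\vdash P=Q$ this gives $\cala_{DT}\vdash P=Q'$, and replacing $Q'$ by $P$ in the family of expressions witnessing that $Q'$ provably satisfies $S'$ shows that $P$ provably satisfies $S'$ as well.

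Next I would pin down the shape of the difference between $Q$ and $Q'$. Unwinding Milner's inductive construction of the standard solution (the successive substitutions $H'_i\equiv H_i\{\ms{rec}X_{n+1}.H_{n+1}/X_{n+1}\}$ used in the proof of Theorem~\ref{onesol}), and using that $S$ and $S'$ agree on the equations of root variables while each non-root equation of $S'$ is obtained from that of $S$ by replacing its unique $\delta$-summand $\delta.X_{f(i,j_0)}$ (unique by time-determinism) with $\delta.\tau.X_{f(i,j_0)}$, one sees that $Q$ and $Q'$ are syntactically identical except that inside every subterm of the form $\ms{rec}X.E$ with $X$ a \emph{non-root} variable, the $\delta$-guarded summand $\delta.R$ of $E$ in $Q$ occurs as $\delta.\tau.R$ in $Q'$.

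The crux — and the step I expect to be the main obstacle — is to show that every such offending subterm $\ms{rec}X.E$ (with $X$ non-root) lies within the scope of some prefix $\alpha.(\cdot)$ with $\alpha\in\ms{PAct}$ in $Q$, i.e.\ is \emph{not} reachable from the root of $Q$ using $\delta$-prefixes, $+$ and $\ms{rec}$ only. I would argue this by contradiction: if some offending $\ms{rec}X.E$ occurred only inside $\delta$-prefixes (or inside no prefix at all), then, tracing outwards through the nested recursion binders $\ms{rec}Y.(\cdot)$ enclosing it — each of which, by the construction, corresponds to a formal variable of $S$ and is passed into the next one as exactly one summand of its equation — one obtains a chain of $\delta$-steps in $S$ leading from $X_1$ to $X$, contradicting $X\notin\tilde{X}_R$. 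Here one uses that $Q$ is built purely by variable substitution, so the prefixes separating consecutive binders are precisely the action prefixes occurring in the equations of $S$.

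Finally I would rewrite $Q$ into $Q'$ by applying $(Tau4)$ once for each offending occurrence, proceeding from innermost to outermost. For a fixed such occurrence, let $\alpha.D$ be the innermost enclosing prefixed subterm with $\alpha\in\ms{PAct}$ (which exists by the previous step), and write $D\equiv F\{\delta.R/Z\}$ for a fresh variable $Z$, where $F$ contains a single occurrence of $Z$, at the position of that $\delta$-summand; then $F\{\delta.\tau.R/Z\}$ is the corresponding subterm of $Q'$, and $R$ does not contain $Z$. Since every context above that position in $D$ has one of the forms $\gamma.(\cdot)$, $(\cdot)+(\cdot)$ or $\ms{rec}Y.(\cdot)$ (these being the only contexts produced by the construction of the standard solution), $Z$ is serial in $F$, so $(Tau4)$ applies and yields $\cala_{DT}\vdash\alpha.F\{\delta.\tau.R/Z\}=\alpha.F\{\delta.R/Z\}$. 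Using congruence to apply this in context and collecting the rewrites over all offending occurrences gives $\cala_{DT}\vdash Q=Q'$, hence $\cala_{DT}\vdash P=Q'$, so $P$ provably satisfies $S'$, as required.
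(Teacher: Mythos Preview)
Your proposal is correct and follows essentially the same approach as the paper: reduce to showing that the standard solutions $Q$ and $Q'$ are provably equal, identify that they differ only in $\delta.R$ versus $\delta.\tau.R$ summands inside $\ms{rec}X.E$ with $X$ non-root, argue by contradiction that any such subterm lies under an $\alpha$-prefix (else $X$ would be reachable from $X_1$ by $\delta$-steps only and hence a root variable), and conclude by repeated use of $(Tau4)$. You are somewhat more explicit than the paper in a few places---noting that $S'$ remains guarded, spelling out how to instantiate $(Tau4)$ via a fresh variable $Z$ and checking seriality, and explaining why $P=Q'$ yields that $P$ provably satisfies $S'$---but the underlying argument is the same.
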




\begin{thm}[mergeability]\label{merge}
Let 
process $P \in \calp$ provably satisfy $S$, and 
process $Q \in \calp$ provably satisfy
$T$, where both $S$ and $T$ are  
time-deterministic, prioritized, standard, guarded and closed sets of equations, and
let $P \simeq Q$. 
Then there is a 
time-deterministic, prioritized, standard, guarded and closed equation set $U$ 
provably satisfied by both $P$ and $Q$.
\end{thm}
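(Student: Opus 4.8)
The strategy is to mirror the proof of the basic-calculus mergeability theorem (the one cited from~\cite{Mil89}, adapted earlier in the excerpt), but insert the new $\tau$-saturation step so that the root condition of $\simeq_T$ is respected. First I would invoke Proposition~\ref{rootprop} to replace $S$ and $T$ by well-rooted time-deterministic, prioritized, standard, guarded and closed equation sets still provably satisfied by $P$ and $Q$ respectively; from now on all equation sets are assumed well-rooted. Next I would apply Lemma~\ref{satlemma2} to $\alpha$-saturate both $S$ and $T$, preserving all the structural properties (time-determinism, prioritization, guardedness, closedness, well-rootedness). Then comes the genuinely new ingredient: apply Lemma~\ref{tausatlemma} to $\tau$-saturate the (already $\alpha$-saturated) $S$ and $T$, so that every $\delta.X_j$ summand occurring in the defining equation of a \emph{non-root} variable becomes $\delta.\tau.X_j$. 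The point of doing this is exactly what the paragraph before Definition of $\tau$-saturation advertises: when we later merge variables and substitute a concrete subterm for $X_j$, the axiom $(Tau1')$ can absorb the leading $\tau$, which is what makes the $\simeq_T$-style matching of $\delta$ moves line up with provable equality.

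\textbf{Building the merged set.} With $S$ and $T$ now $\alpha$-saturated, $\tau$-saturated, well-rooted and otherwise nice, I would reproduce Milner's construction of $U'$: from $P \simeq Q$ one reads off a relation between the variables of (saturated) $S$ and of (saturated) $T$ — a variable $X_i$ of $S$ is paired with a variable $Y_k$ of $T$ whenever the corresponding derivatives $P_i \approx Q_k$ — and the formal variables of $U'$ are these matched pairs, with defining equations obtained by intersecting/combining the summands of the two sides in the standard way. As observed in the basic-calculus proof, the fact that $S$ and $T$ are \emph{prioritized} guarantees that every $\delta$ prefix appearing really does correspond to a genuine weak $\warrow{\delta}$ transition, so the matching argument of~\cite{Mil89} goes through despite priority; here additionally one has to check, for the $\delta$-summands, that the matching respects the \emph{root} structure, and this is exactly where well-rootedness of $S$ and $T$ is used: a $\delta$ step out of a root variable goes to a root variable on both sides, so root pairs get matched with root pairs and the resulting $U'$ is itself well-rooted (hence, after an easy verification, time-deterministic — time-determinism of $S$ and $T$ forces at most one $\delta$-summand per equation, and this is preserved by the merge). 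That $U'$ is standard, guarded and closed, and provably satisfied by both $P$ and $Q$, follows as in~\cite{Mil89}, now using axioms $(A1)$–$(A4)$, $(Tau1')$, $(Tau2)$, $(Tau3')$ together with the $(Pri)$ axioms in place of the original $Tau$ axioms. Finally, applying Theorem~\ref{reduction} to $U'$ yields the prioritized (time-deterministic, standard, guarded, closed) equation set $U$ provably satisfied by both $P$ and $Q$, which is the claim.

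\textbf{Where the work is.} The routine parts are the invocations of Propositions~\ref{rootprop} and~\ref{reduction} and Lemmas~\ref{satlemma2} and~\ref{tausatlemma}, and the bookkeeping that all the adjectives (time-deterministic / prioritized / standard / guarded / closed / well-rooted) are simultaneously preserved by each step — tedious but mechanical. The real obstacle is the merging step itself: one must verify that the variable relation extracted from $P \simeq Q$ on the $\tau$-saturated sets still closes up correctly, i.e.\ that matching a $\delta.\tau.X_{f(i,j)}$ summand of $S$ against a $\delta.\tau.Y_{g(k,l)}$ summand of $T$ produces, after substituting the standard solutions and applying $(Tau1')$ to strip the inserted $\tau$'s, a provable equality between the two sides of the $U'$-equation — and that this $\tau$-stripping does not reintroduce unguardedness or break time-determinism. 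Concretely, the subtlety is that the inserted $\tau$ is only legitimate in front of \emph{non-root} derivatives (that is the content of the $\tau$-saturation definition), so one has to argue that the merge never needs to put such a $\tau$ in a root position; this is again a consequence of well-rootedness, but it is the step that requires care rather than citation.
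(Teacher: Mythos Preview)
Your overall strategy matches the paper's: well-root via Proposition~\ref{rootprop}, $\alpha$-saturate via Lemma~\ref{satlemma2}, build the merged set over the bisimulation relation, and use $\tau$-saturation together with $(Tau1')$ to make the verification go through. The difference is in \emph{where} you place the $\tau$-saturation, and it matters. You apply it to $S$ and $T$ \emph{before} constructing the merged set and then run Milner's construction on the results; the paper instead builds $U$ directly from the (standard, $\alpha$-saturated, well-rooted) $S$ and $T$, and only \emph{afterwards} $\tau$-saturates all three of $S$, $T$, and $U$ to obtain $S'$, $T'$, $U'$, carrying out the verification that $P$ satisfies $U$ by exhibiting solutions $G_{i,j}$ of the $\tau$-saturated $U'$ (with the Milner choice $G_{i,j} \equiv \tau.P_i$ when $Z_{i,j}$ has an unmatched-$\tau$ summand coming from the $T$ side, and $G_{i,j} \equiv P_i$ otherwise).

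Your order creates two concrete problems. First, once you $\tau$-saturate, the equation sets are no longer \emph{standard} (summands become $\delta.\tau.X_j$ rather than $\gamma.X_j$), so the relations $\arrowd{\gamma}{S}$ and Milner's merge construction are not literally defined, and your claim that the resulting $U'$ is ``standard'' is false as written. Second, and more substantively, you never $\tau$-saturate $U$ itself: in the paper's verification, when a non-root $K'_{i,j}$ carries the summand $\delta.\tau.Z_{k,l}$ (the inner $\tau$ coming precisely from $\tau$-saturating $U$) and the substituted $G_{k,l}$ is $\tau.P_k$, it is that inserted $\tau$ that lets $(Tau1')$ collapse $\delta.\tau.\tau.P_k$ to $\delta.\tau.P_k$ and match the corresponding summand of $H'_i\{\tilde{P}/\tilde{X}\}$; without $\tau$-saturating $U$ the $\delta$-case of the verification does not close. (Your final appeal to Theorem~\ref{reduction} is also unnecessary: the paper's $U$ is prioritized directly by construction, since a $\delta$-summand in $K_{i,j}$ can only arise when both $X_i$ and $Y_j$ lack $\tau$-moves, $S$ and $T$ being prioritized.)
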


\begin{proof}
We may suppose that $S$ is the
time-deterministic prioritized standard guarded and closed equation set
$S : \tilde{X} = \tilde{H}$ and $T$ is the 
time-deterministic prioritized standard guarded  and closed equation set $T : \tilde{Y} = \tilde{J}$ where $\tilde{X} = \{ X_1, \dots , X_m \}$ and
$\tilde{Y} = \{ Y_1, \dots , Y_n \}$ are disjoint sets of formal variables.
We can assume that both 
$S$ and $T$ are both well-rooted and $\alpha$-saturated because of Proposition~\ref{rootprop} and  Lemma~\ref{satlemma2}.

Since $P \simeq Q$ and $S$ and $T$ are $\alpha$-saturated, from Propositions~\ref{necessary}
and~\ref{propreftimedet}
we derive that there exists a relation $\beta \subseteq \tilde{X} \times \tilde{Y}$ 
such that:
\begin{enumerate}
\item[1.] Whenever $(X,Y) \in \beta$:

\begin{enumerate}
\item[(i)] If $X \arrowd{\alpha}{S} X'$ then, either $(A)$ $\alpha = \tau$ and $(X',Y) \in \beta$,
or $(B)$ for some $Y'$, $Y \arrowd{\alpha}{T} Y'$
and $(X',Y') \in \beta$.
\item[(ii)] If $X \arrowd{\delta}{S} X'$ then, either $(A)$ $Y \arrowd{\tau}{T}$, or $(B)$
for some $Y'$, $Y \arrowd{\delta}{T} Y'$
and $(X',Y') \in \beta$.
\item[(iii)] If $Y \arrowd{\alpha}{T} Y'$ then, either $(A)$ $\alpha = \tau$ and $(X,Y') \in \beta$,
or $(B)$ for some $X'$, $X \arrowd{\alpha}{S} X'$
and $(X',Y') \in \beta$.
\item[(iv)] If $Y \arrowd{\delta}{T} Y'$ then, either $(A)$ $X \arrowd{\tau}{S}$, or $(B)$
for some $X'$, $X \arrowd{\delta}{S} X'$
and $(X',Y') \in \beta$.
\end{enumerate}

\item[2.] $(X_1,Y_1) \in \beta$, and when $(X,Y) \in \tilde{X}_R \times \tilde{Y}_R$
then cases (i)$(A)$,
(ii)$(A)$,(iii)$(A)$ and (iv)$(A)$ do not apply.

\end{enumerate}

Notice that, due to condition $2$ and to the fact that $S$ and $T$ are both well-rooted, we 
can assume $\beta$ to be such that
$\beta \subseteq \tilde{X}_R \times \tilde{Y}_R \cup(\tilde{X}-\tilde{X}_R) \times (\tilde{Y}-\tilde{Y}_R)$.


We now build the new equation set $U : \tilde{Z} = \tilde{K}$, where 
$\tilde{Z}$ is the set of variables $\tilde{Z} = \{ Z_{i,j} \mid (X_i,Y_j) \in \beta \}$
and $\tilde{K}$ is the set of expressions $\tilde{K} = \{ K_{i,j} \mid (X_i,Y_j) \in \beta \}$
defined as follows. Each expression $K_{i,j}$ is a sum containing the terms:

\begin{enumerate}
\item[(i)] $\alpha . Z_{k,l}$, whenever $X_i \arrowd{\alpha}{S} X_k$ and $Y_j \arrowd{\alpha}{T} Y_l$ and
$(X_k,Y_l) \in \beta$.
\item[(ii)] $\tau . Z_{k,j}$, whenever $X_i \arrowd{\tau}{S} X_k$ and there is no $l$ 
such that: $Y_j \arrowd{\tau}{T} Y_l$ and $(X_k,Y_l) \in \beta$. As a consequence we must have that 
$(X_k,Y_j) \in \beta$.
\item[(iii)] $\tau . Z_{i,l}$, whenever $Y_j \arrowd{\tau}{T} Y_l$ and there is no $k$ 
such that: $X_i \arrowd{\tau}{S} X_k$ and $(X_k,Y_l) \in \beta$. As a consequence we must have that
$(X_i,Y_l) \in \beta$.
\item[(iv)] $\delta . Z_{k,l}$, whenever $X_i \arrowd{\delta}{S} X_k$ and 
$Y_j \arrowd{\delta}{T} Y_l$ and $(X_k,Y_l) \in \beta$.
\end{enumerate}

Notice that $U$ is clearly closed, standard and prioritized and it is also
guarded: as in the standard case, by contradiction, any $\tau$-cycle $Z_{i,j} \arrowd{\tau}{U} Z_{i,j}$ would imply either a $\tau$-cycle $X_i \arrowd{\tau}{S} X_i$ or a $\tau$-cycle $Y_j \arrowd{\tau}{T} Y_j$.
Moreover, notice that at most one term of the kind (iv) can occur due to the fact that $S$ and $T$ are time-deterministic, hence also $U$ is time-deterministic. 
Finally, notice that, due to condition $2$ over the relation $\beta$
and to the fact that $S$ and $T$ are both well-rooted, we have that $\tilde{Z}_R = \{ Z_{i,j} \mid (X_i,Y_j) \in \beta \cap (\tilde{X}_R \times \tilde{Y}_R) \}$
and that $U$ is well-rooted as well.

We now show that $P$ provably satisfies the equation set $U$ (with $Z_{1,1}$ as distinguished 
variable). To do this, we consider the equation sets $S' : \tilde{X} = \tilde{H'}$, $T' : \tilde{Y} = \tilde{J'}$ and $U' : \tilde{Z} = \tilde{K'}$ obtained by $\tau-$saturating $S$, $T$, and $U$ respectively. 
We also suppose that $\tilde{P} = \{ P_1, \dots , P_m \}$, with $P_1 \equiv P$, and 
$\tilde{Q} = \{ Q_1, \dots , Q_n \}$, with $Q_1 \equiv Q$, both with free variables in 
$\tilde{W}$, are such that $\cala \vdash \tilde{P} =  \tilde{H'} \{ \tilde{P} / \tilde{X} \}$
and $\cala \vdash \tilde{Q} =  \tilde{J'} \{ \tilde{Q} / \tilde{Y} \}$.

We choose expressions $\tilde{G} = \{ G_{i,j} \mid (X_i,Y_j) \in \beta \}$
as
$$
G_{i,j} \equiv
\left\{
\begin{array}{ll} 
\tau . P_i & {\rm if} \; Z_{i,j} \arrowd{\tau}{U} Z_{i,l} \; {\rm for} \; {\rm some}\; l\\[0cm]
P_i & {\rm otherwise} 
\end{array}
\right.
$$
and in the following we show that $\cala \vdash \tilde{G} =  \tilde{K'} \{ \tilde{G} / \tilde{Z} \}$. Notice that for $G_{i,j}$ such that $Z_{i,j} \in \tilde{Z}_R$ we have $G_{i,j} \equiv P_i$ because the expression $K_{i,j}$ cannot include terms of the kind
(iii). In particular $G_{1,1} \equiv P_1 \equiv P$.

Since $U$ is guarded (hence also $U'$ is guarded) we can conclude that any solution of $U$ (being it also a solution for $U'$ which is guarded) is provably equal to $P$. Hence $P$ is a solution for $U$.

For each equation $G_{i,j} =  K'_{i,j} \{ \tilde{G} / \tilde{Z} \}$ we have the following two cases:

\begin{itemize}

\item $Z_{i,j} \narrowd{\tau}{U} Z_{i,l}$ for any $l$, hence $G_{i,j} \equiv P_i$.
In this case we have the two following subcases:

\begin{itemize}

\item $X_i \arrowd{\delta}{S}$. In this case we have (since the standard equation set $S$ is prioritized) 
$X_i \narrowd{\tau}{S}$. Since $K_{i,j}$ does not include terms of the kind (iii), this implies 
$Y_j \narrowd{\tau}{T}$. Therefore the structure of $K_{i,j}$ is such that it may contain only terms of 
kind (i) and (iv), 
where: terms of kind (i) cannot be $\tau$ prefixes and exactly one term of kind (iv) must be present because (since $Y_j \narrowd{\tau}{T}$) the case
(ii)$(A)$ does not apply.
Due to the properties of relation $\beta$ the following holds for the terms included in $K_{i,j}' \{ \tilde{G} / \tilde{Z} \}$. 
The terms of kind (i), which become, by possibly
using axiom $(Tau1')$, terms $a.P_k$ for some $k$, are exactly (with possible repetitions)
the terms for which $X_i \arrowd{a}{S} X_k$.
Moreover, concerning the term of kind (iv), we have the following two cases. If $Z_{i,j} \in \tilde{Z}_R$ then
such a term is
$\delta.P_k$ for some $k$
(since the $\delta$ prefixed variable again belongs to $\tilde{Z}_R$ we are guaranteed it is not replaced by $\tau.P_k$): it is exactly the term for which $X_i 
\arrowd{\delta}{S} X_k$. Otherwise, such a term, due to $\tau-$saturation, is a $\delta$ prefix followed by $\tau$ and  it becomes, by possibly
using axiom $(Tau1')$, term $\delta.\tau.P_k$ for some $k$: it is exactly
the term for which $X_i 
\arrowd{\delta}{S} X_k$. Hence, by using axiom 
$(A3)$ 
to deal with repetitions and by applying the hypothesis 
we derive $K'_{i,j} \{ \tilde{G} / \tilde{Z} \} = H'_i \{ \tilde{P} / \tilde{X} \}
= P_i$.


\item $X_i \narrowd{\delta}{S}$. In this case we have that
also $Z_{i,j} \narrowd{\delta}{U}$, hence the structure of $K_{i,j}$ 
is such that it may contain only terms of kind (i) and (ii). 
Therefore this case works exactly as in the untimed case of~\cite{Mil89}:
due to the properties of relation $\beta$ the following hold.
By using a similar reasoning
as in the first subcase, terms of kind (i) and (ii) 
become, by possibly using 
axiom $(Tau1')$, exactly as prefixes included in the summation 
$H_i' \{ \tilde{P} / \tilde{X} \}$ (with possible repetitions). Hence, by using axioms 
$(A3)$ 
to deal with repetitions and by applying the hypothesis 
we derive $K_{i,j}' \{ \tilde{G} / \tilde{Z} \} = H_i' \{ \tilde{P} / \tilde{X} \}
= P_i$.

\end{itemize}

\item $Z_{i,j} \arrowd{\tau}{U} Z_{i,l}$ for some $l$, hence $G_{i,j} \equiv \tau.P_i$. 
In this case we have that the structure of $K_{i,j}'$ is such that it may contain only terms of 
kind (i), (ii) and (iii).
Due to the properties of relation $\beta$ the following holds for the terms included in $K_{i,j}' \{ \tilde{G} / \tilde{Z} \}$. 
Every term of kind (iii) becomes, by possibly using axiom $(Tau1')$, $\tau.P_i$.
Moreover (by using a similar reasoning
as in the first subcase) the terms of kind (i) and (ii) 
become, by possibly using 
axiom $(Tau1')$, exactly as the non-$\delta$ prefixes included in the summation 
$H_i \{ \tilde{P} / \tilde{X} \}$ (with possible repetitions). Hence, by using axioms 
$(A3)$,
$(Pri3)$ and $(Pri6)$ that allow $\delta$ prefixes to be removed in the presence of a $\tau$ alternative, we derive $K_{i,j} \{ \tilde{G} / \tilde{Z} \} = \tau.P_i +
H_i \{ \tilde{P} / \tilde{X} \}$. By applying the hypothesis and by 
using axiom $(Tau2)$ we have that $\tau.P_i +
H_i \{ \tilde{P} / \tilde{X} \} = \tau.P_i + P_i = \tau.P_i$.

\end{itemize}
In a completely symmetrical way we can also show that $Q$ provably satisfies $U$.
\end{proof}


Hence we have proved completeness over time-deterministic guarded  basic processes. 

\begin{thm}\label{compguardedtd}
Let $P, Q \in \calp$ be time-deterministic guarded processes. If $P \simeq_{T} Q$ then 
$\cala_{DT} \vdash P = Q$.
\end{thm}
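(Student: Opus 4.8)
The plan is to mimic Milner's equation-set argument of~\cite{Mil89}, now relative to $\cala_{DT}$ and to the subclass of time-deterministic, prioritized, standard, guarded and closed equation sets: represent $P$ and $Q$ by such equation sets, merge them into a single one that both provably satisfy, and conclude via uniqueness of the solution of a guarded equation set. All three ingredients have already been established above, so the proof is essentially an assembly.

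First I would invoke the time-deterministic representability Theorem~\ref{reprtimedet} twice, once for $P$ and once for $Q$, obtaining time-deterministic, prioritized, standard, guarded and closed equation sets $S$ and $T$ with $P$ provably satisfying $S$ and $Q$ provably satisfying $T$. Next, since $P \simeq_{T} Q$ by hypothesis, I would apply the mergeability Theorem~\ref{merge} to $S$ and $T$ to produce a time-deterministic, prioritized, standard, guarded and closed equation set $U$ provably satisfied by both $P$ and $Q$. Finally, because $U$ is guarded, Theorem~\ref{onesol} furnishes an expression $R \in \cale$ provably satisfying $U$ such that \emph{every} expression provably satisfying $U$ is provably equal to $R$ (and, by Theorem~\ref{soldet}, since $U$ is closed, $R$ may be taken to be a time-deterministic guarded process of $\calp$). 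Specializing this to $P$ and to $Q$ gives $\cala_{DT} \vdash P = R$ and $\cala_{DT} \vdash Q = R$, whence $\cala_{DT} \vdash P = Q$, as required.

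The substantive work is entirely in the cited auxiliary results; within the present theorem the only things to watch are the bookkeeping issues. I expect the genuinely delicate point — already discharged in the proof of Theorem~\ref{merge} — to be that the merged equation set must respect the \emph{strengthened} root condition of $\simeq_{T}$: the matching relation between the variables of $S$ and $T$ has to be compatible with the split into root variables (those $\delta$-reachable from the distinguished variable) and non-root variables, which is why $S$ and $T$ are first made well-rooted via Proposition~\ref{rootprop} and why the $\delta$-moves are matched using the time-deterministic reformulation of Proposition~\ref{propreftimedet}; and that the construction of $U$ relies on $\tau$-saturation (Lemma~\ref{tausatlemma}), turning $\delta.X$ summands at non-root variables into $\delta.\tau.X$ via axiom $(Tau4)$ so that the auxiliary $\tau$ prefixes thereby introduced can later be absorbed by $(Tau1')$. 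One should also note that, although Theorems~\ref{reprtimedet}, \ref{onesol} and~\ref{merge} are phrased for the earlier axiom system, the derivations they yield use only axioms present in $\cala_{DT}$, in particular the primed variants $(Tau1')$ and $(Tau3')$ in place of $(Tau1)$ and $(Tau3)$.
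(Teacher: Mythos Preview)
Your proposal is correct and follows exactly the paper's approach: the paper states the theorem immediately after the mergeability Theorem~\ref{merge} with the remark ``Hence we have proved completeness over time-deterministic guarded basic processes,'' indicating it is the straightforward assembly of Theorems~\ref{reprtimedet}, \ref{merge} and~\ref{onesol} that you describe. Your additional commentary on the delicate points inside the proof of Theorem~\ref{merge} (well-rootedness, $\tau$-saturation, the primed $Tau$ axioms) is accurate but belongs to that theorem rather than to the present one.
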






Finally, by an analogous of Theorem \ref{unguard}, we get also completeness over unguarded time-deterministic basic processes. 

\begin{lem}\label{timedunguard}
For each time-deterministic process $P \in \calp$ there exists a time-deterministic guarded process $P' \in \calp$ such that $\cala \vdash P = P'$.
\end{lem}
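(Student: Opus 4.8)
The plan is to mirror the proof of Theorem~\ref{unguard} (the untimed ``make-guarded'' result), adapting it to the time-deterministic setting. The key observation is that Theorem~\ref{unguard} already establishes, for every $P \in \calp$, a guarded $P' \in \calp$ with $\cala \vdash P = P'$; since $\cala_{DT}$ contains $\cala$ (modulo the primed replacements, which do not affect the $(Ung)$ or $(Pri)$ or $(Rec)$ axioms used there), the derivation goes through verbatim in $\cala_{DT}$. The only genuinely new content is that $P'$ remains time-deterministic. So the structure of my proof is: (i) invoke Theorem~\ref{unguard} to obtain a guarded $P'$ with $\cala \vdash P = P'$, hence $\cala_{DT} \vdash P = P'$; (ii) argue that $P'$ is time-deterministic.

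For step (ii), the cleanest route is semantic rather than syntactic: by soundness of $\cala$ (Theorem~\ref{soundness}) we have $P \simeq P'$, and since $P \simeq P'$ implies $P \approx P'$, and time-determinism is a property of the generated labeled transition system that is preserved under $\approx$. More precisely, I would argue that if $P \approx P'$ and $P$ is time-deterministic then so is $P'$: suppose some state $R'$ reachable from $P'$ has two distinct outgoing $\delta$ transitions $R' \arrow{\delta} R'_1$ and $R' \arrow{\delta} R'_2$ with $R'_1 \not\approx R'_2$ --- but wait, time-determinism as defined in Definition~\ref{deftimedet} is a \emph{syntactic} identity of successors ($R'_1 = R'_2$), not an equivalence, so a purely bisimulation-based argument does not immediately suffice. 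The safer approach is therefore syntactic: trace through the inductive transformation in the proof of Theorem~\ref{unguard} and check that each axiom application preserves time-determinism of the associated process, using the fact that the axioms actually invoked ($(A1)$--$(A4)$, $(Tau1)$--$(Tau3)$, $(Pri1)$--$(Pri6)$, $(Rec1)$, $(Rec2)$, $(Ung1)$--$(Ung4)$) only rearrange, duplicate, prune $\delta$-branches, or unfold recursions --- none of which can create a state with two syntactically distinct $\delta$-successors out of one that had at most one.

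Concretely, I would observe the following: time-determinism of a basic process $R \in \calp$ is equivalent, by Proposition~\ref{maxprog} and the structure of basic terms, to the condition that along every path from $R$, whenever a state has an outgoing $\delta$ transition it has exactly one $\delta$-derivative up to syntactic identity. The $(Ung)$ axioms either delete $\delta$-branches (via $(Ung2)$ replacing $\tau.X + E$ by $\tau.\ms{pri}(E)$, which by $(Pri3)$ discards all $\delta$-prefixed summands) or restructure $\tau$-guarded subterms without touching the $\delta$-summands that are actually reachable; the $(Pri)$ axioms only delete $\delta$-summands; and $(Rec1)$, $(Rec2)$, $(Ung1)$ preserve the transition system up to isomorphism on reachable states. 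Hence each intermediate term in the transformation, when closed and a basic process, has a transition system in which no reachable state gains a second distinct $\delta$-successor. Since $P$ starts time-deterministic and $P'$ is the end product, $P'$ is time-deterministic.

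I expect the main obstacle to be a bookkeeping one: the transformation in Theorem~\ref{unguard} passes through intermediate \emph{open} terms in $\cale_{pri}$ (involving $\ms{pri}(X)$ for free $X$ and the new variables $X', X''$), for which ``time-determinism'' is not literally defined, so I must phrase the invariant carefully --- e.g.\ as a property of the closed-instantiation transition systems, or restrict attention to the fact that the only $\delta$-summands ever present are those already present in (a sub-term of) the original $P$, which is time-deterministic. The cleanest formulation is probably: the transformation never introduces a new $\delta$-prefixed summand, it only ever deletes $\delta$-summands or relocates $\tau$-guarded material, so the multiset of $\delta$-transitions out of any reachable state can only shrink; combined with $P$ being time-deterministic, $P'$ is. Once that invariant is stated, the verification against the finite list of axioms is routine, and the lemma follows.
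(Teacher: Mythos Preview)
Your proposal is correct and takes essentially the same approach as the paper: the paper's proof is the single sentence ``Same proof as that of Theorem~\ref{unguard} with the additional observation that the transformations performed preserve time-determinism,'' and your plan---invoke Theorem~\ref{unguard} and then argue syntactically that each transformation step preserves time-determinism---is exactly this. Two small remarks: the lemma is stated for $\cala$, not $\cala_{DT}$, so your digression about primed axioms is unnecessary; and you were right to abandon the semantic route, since $\simeq$ does not preserve time-determinism in general (e.g.\ $\delta.\tau.\nil \simeq \delta.\tau.\nil + \delta.\nil$ is derivable in $\cala$), so the argument must indeed track the specific transformations of Theorem~\ref{unguard} rather than arbitrary $\cala$-derivations.
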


\begin{proof}
Same proof as that of Theorem \ref{unguard} with the additional observation that the transformations performed preserve time-determinism.
\end{proof}

\begin{thm}\label{comptd}
Let $P, Q \in \calp$ be time-deterministic processes. If $P \simeq_{T} Q$ then 
$\cala_{DT} \vdash P = Q$.
\end{thm}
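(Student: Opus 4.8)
The plan is to combine the completeness result over time-deterministic \emph{guarded} basic processes (Theorem~\ref{compguardedtd}) with the guardedness-elimination procedure of Lemma~\ref{timedunguard}, exactly mirroring the way Theorem~\ref{unguard} together with Theorem~\ref{compguarded} yielded full completeness for the basic calculus in Section~\ref{SectAPFSB}.

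First I would take $P, Q \in \calp$ time-deterministic with $P \simeq_{T} Q$ and apply Lemma~\ref{timedunguard} to obtain time-deterministic \emph{guarded} processes $P', Q' \in \calp$ with $\cala \vdash P = P'$ and $\cala \vdash Q = Q'$. Here one small point must be checked: Lemma~\ref{timedunguard} is stated for the basic axiom system $\cala$, whereas we want a derivation in $\cala_{DT}$. This is unproblematic, because the transformation of Theorem~\ref{unguard} (reused verbatim for Lemma~\ref{timedunguard}) only ever applies the axioms $(A1)$--$(A4)$, $(Pri1)$--$(Pri6)$, $(Rec1)$, $(Rec2)$ and $(Ung1)$--$(Ung4)$, none of which is among the axioms of Table~\ref{Axioms} that $\cala_{DT}$ replaces by their primed variants; hence the very same derivation is already a derivation in $\cala_{DT}$, giving $\cala_{DT} \vdash P = P'$ and $\cala_{DT} \vdash Q = Q'$.

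Next, by soundness of $\cala_{DT}$ with respect to $\simeq_{T}$ (Theorem~\ref{soundness2}) we get $P \simeq_{T} P'$ and $Q \simeq_{T} Q'$. Since $\simeq_{T}$ is an equivalence relation on processes, from $P' \simeq_{T} P \simeq_{T} Q \simeq_{T} Q'$ we obtain $P' \simeq_{T} Q'$. As $P'$ and $Q'$ are time-deterministic and guarded, Theorem~\ref{compguardedtd} applies and yields $\cala_{DT} \vdash P' = Q'$. Concatenating the three derivations, $\cala_{DT} \vdash P = P' = Q' = Q$, which is the claim.

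I expect the only genuine obstacle to be the bookkeeping just described: making sure that (i) the unguardedness-removal of Lemma~\ref{timedunguard} stays within the axioms common to $\cala$ and $\cala_{DT}$, so that no use of the now-replaced $(Tau1)$ or $(Tau3)$ creeps in; and (ii) that the procedure indeed preserves time-determinism, so that Theorem~\ref{compguardedtd} is applicable to $P'$ and $Q'$ — but the latter is precisely what Lemma~\ref{timedunguard} asserts, so no further argument is needed.
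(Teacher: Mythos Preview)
Your proposal is correct and follows exactly the route the paper intends: the paper states Theorem~\ref{comptd} without proof, treating it as the immediate combination of Lemma~\ref{timedunguard} with Theorem~\ref{compguardedtd}, precisely as you spell out. One tiny bookkeeping remark: the transformation in Theorem~\ref{unguard} also relies on Lemma~\ref{makeung}, whose proof (as in Milner) uses $(Tau2)$, so your list of axioms should include $(Tau2)$ as well --- but since $(Tau2)$ is retained unchanged in $\cala_{DT}$, your conclusion that the derivation lives in $\cala_{DT}$ is unaffected.
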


\subsubsection{Completeness for the Discrete Time Calculus}\label{DRTStaticCompl}




We first introduce, along the lines of \cite{concur05,mscs08}, a syntactical characterization that guarantees processes to be finite-state.

\begin{defi}
$\cale_{DT}^{fs}$ is the set of expressions
$E \in \cale_{DT}$ such that:
for any subterm $E'$ of $E$, every
free occurrence of a variable $X$ does not appear in $E'$
in the scope of a static 
operator, i.e. ``$\_ \pco{S} \_$'' or ``$\_ /L$'', and, if it appears in the scope of a ``$\_ +^t \_$'' operator, it is 
guarded inside such an operator
by a standard
action $\alpha$. 
$\calp_{DT}^{fs}$ is the set of closed $\cale_{DT}^{fs}$ expressions.
\end{defi}

\begin{lem}\label{staticremoval}
Let $P',P'' \in \calp$ be time-deterministic guarded processes. $P \equiv P' \lme{S} P''$, 
$P \equiv P' \sme{S} P''$, $P \equiv P' \pco{S} P''$, $P \equiv P' /L$ and $P \equiv P' +^t P''$
can be turned by the axiom system $\cala_{DT}$ into the form
$\sum_{1 \leq i \leq k} \gamma_i . P_i$, 
where $k \geq 0$ ($k = 0$ corresponds to the sum being $\nil$) and there exists at most one $i$, with $1 \leq i \leq k$, such that $\gamma_i = \delta$. 
Moreover, $\{ (\gamma_i,P_i) \mid 1 \leq i \leq k \}=\{ (\gamma,Q) \mid  P \arrow{\gamma} Q\}$.\footnote{Since processes $P_i$ and labels $\gamma_i$, with $1 \leq i \leq k$, are those
such that $P \arrow{\gamma_i} P_i$ this implies that: if $\gamma_h = \delta$ for some $h$, then $\gamma_i \neq \tau$ for all $i$, with $1 \leq i \leq k$.}
\end{lem}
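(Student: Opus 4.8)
The plan is to proceed by cases on the top-level operator producing $P$, in each case applying the relevant distribution/elimination axioms of Tables~\ref{TimedAxioms} and~\ref{StaticAxioms} until the static (or timed-static) operator is pushed inside prefixes and eventually eliminated. First I would put $P'$ and $P''$ into a ``head normal form'' $\sum_i \gamma_i.P'_i$ (respectively for $P''$): since $P',P''\in\calp$ are time-deterministic guarded processes, by Theorem~\ref{reprtimedet} (time-deterministic representability) each provably satisfies a time-deterministic, prioritized, standard, guarded, closed equation set, and by unfolding the distinguished equation once via $(Rec1)$ and applying the $(Pri)$ axioms we obtain $\cala_{DT}\vdash P' = \sum_{j}\gamma'_j.P'_j$ where the summands are exactly (with multiplicities collapsed by $(A3)$) the outgoing transitions of $P'$, and at most one $\gamma'_j$ equals $\delta$ (time-determinism); the $P'_j$ are again time-deterministic guarded basic processes. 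This head-normal-form reduction is the tool I would reuse in every case.

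Next I would treat the five cases. For $P'\lme{S}P''$ and $P'\sme{S}P''$ I distribute over the head normal form of $P'$ (resp.\ $P',P''$) using $(LM4)$ and, for the synchronization merge, $(SM6)$/$(SM7)$ together with $(SM1)$--$(SM5)$; the auxiliary operator $\ms{vis}(\_)$ is pushed through by $(Vis1)$--$(Vis3)$. Here I must check that the side conditions on $(SM3)$, $(SM5)$ and the $\delta$-rule $(SM5)$ with $\gamma=\delta$ exactly reproduce the operational rules for $\lme{S}$, $\sme{S}$ in Table~\ref{TabRulesAux}, so that the resulting summands coincide with the outgoing transitions of $P$; in particular a $\delta$ summand of $P'\sme{S}P''$ arises only when both $P'$ and $P''$ have their (unique) $\delta$ summand, giving at most one $\delta$ summand in the result. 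For $P'\pco{S}P''$ I first apply $(Par)$ to reduce to the three already-handled merges, then sum the results and collapse duplicates; time-determinism of the $\delta$ part is inherited because only the $\sme{S}$ component can contribute a $\delta$ summand. For $P'/L$ I distribute over the head normal form of $P'$ via $(Hi4)$ and then eliminate the hiding on each prefixed summand via $(Hi1)$--$(Hi3)$; the single $\delta$ summand (if present, and only if no $a\in L$ summand is present, matching the operational rule of Table~\ref{TabRulesDRT}) is preserved, and where a $\delta$ summand is pre-empted because some $a\in L$ becomes $\tau$, I would use $(Pri6)$ and $(Pri3)$ to delete it, again matching the operational semantics. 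For $P'+^tP''$ I use $(TCh7)$ to distribute over the head normal form of $P'$, then $(TCh1)$ and $(TCh7)$ again over $P''$, reducing to summands of the form $(\gamma'_j.P'_j)\tch(\gamma''_k.P''_k)$; the standard-action summands are handled by the operational rule in Table~\ref{TabRulesStand} (and provably by the fact that $\alpha.E \tch F$ exposes $\alpha.E$), while the $\delta$ interaction is handled by $(TCh6)$ (with $(TPre2)$ to normalise $\delta^t$ to $\delta$ if needed), producing exactly one $\delta$ summand $\delta.(P'_j\tch P''_k)$ when both sides have their unique $\delta$ summand, and none otherwise ($(TCh3)$ absorbing $\snil$ as appropriate).

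Finally, in every case I would verify the displayed equality $\{(\gamma_i,P_i)\mid 1\le i\le k\}=\{(\gamma,Q)\mid P\arrow{\gamma}Q\}$ by soundness (Theorem~\ref{soundness2}) together with a direct inspection of the operational rules: soundness guarantees each derived summand $\gamma_i.P_i$ corresponds to a transition $P\arrow{\gamma_i}P_i$ up to $\simeq_T$, but since we want syntactic identity of the set of $(\gamma,Q)$ pairs, I would instead argue structurally — the distribution axioms above were chosen precisely to mirror the SOS rules one-for-one — so that the multiset of summands produced is exactly the multiset of outgoing transitions, and $(A1)$--$(A4)$ collapse it to the set. The main obstacle I anticipate is the interplay of priority with the hiding and synchronization-merge cases: one must be careful that wherever the operational semantics pre-empts a $\delta$ transition (because hiding generates a $\tau$, or because the merged process acquires a $\tau$), the axiom system actually deletes the corresponding $\delta$ summand, which is exactly why $(SM7)$ is stated via $\delta.E + \ms{vis}(F)$ rather than naive distribution and why $(Pri6)$, $(Pri3)$ are needed after $(Hi3)$; getting these side conditions to line up with the SOS rules — and confirming that at most one $\delta$ summand survives — is the delicate bookkeeping of the proof.
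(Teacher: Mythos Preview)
Your overall shape is right and matches the paper's approach, but there is a genuine gap in the synchronization-merge case that propagates to parallel composition. After you distribute $P'_{next}\sme{S}P''_{next}$ down to pairwise summands and hit a $\tau$-prefix, axiom $(SM4)$ yields $\ms{pri}(P'_i\sme{S}P'')$, and $P'_i\sme{S}P''$ is again a synchronization merge, not a sum of prefixes. You never say how to discharge this residual merge. The paper handles it by an induction on the sum of the maximal $\tau$-path lengths of $P'$ and $P''$ (finite because both are guarded, hence $\tau$-loop-free and finite-state): in the inductive step one unfolds $P'_i$ to its head normal form, reapplies the distribution, and the measure strictly decreases. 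Without this induction your reduction of $\sme{S}$ does not terminate in a sum form, and since the operational rule for $\sme{S}$ uses $\ms{vis}(\_)$ (i.e.\ $\sarrow{\tau}\!\!^*\sarrow{a}$), the summands you obtain will not be the full set of transitions of $P'\sme{S}P''$ unless you chase these $\tau$-chains. Relatedly, for $\pco{S}$ the paper does \emph{not} rerun the full $\sme{S}$ induction: it observes that each $\tau.P'_i\sme{S}P'' = \ms{pri}(P'_i\sme{S}P'')$ summand is absorbed, via $(Tau2)$ and $(A3)$, into the already-present $\tau.(P'_i\pco{S}P'')$ summand coming from the left-merge expansion (after rewriting $P'_i\pco{S}P''$ with $(Par)$ and $(SM1)$). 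Your plan to ``sum the results'' of the three merges works only once the $\sme{S}$ gap is closed.

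A smaller omission: in the $\tch$ case you assert ``$\alpha.E\tch F$ exposes $\alpha.E$'', but there is no such axiom. The paper obtains this via $(Pri2)$, i.e.\ $\alpha.E=\ms{pri}(\alpha.E)$, and then $(TCh5)$: $\ms{pri}(\alpha.E)\tch F=\ms{pri}(\alpha.E)+\ms{pri}(F)$. You should name $(TCh5)$ explicitly; $(TCh3)$ and $(TPre2)$ are not used here. Your treatment of hiding, including the explicit use of $(Pri6)$/$(Pri3)$ to delete the $\delta$ summand when some $a\in L$ becomes $\tau$, is fine and in fact slightly more careful than the paper's terse display.
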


\begin{proof}


By means of axiom $(Rec1)$ and $(Pri)$ axioms it is immediate to show that $P'=P'_{next} \equiv \sum_{1 \leq i \leq n} \gamma'_i . P'_i$ and $P''=P''_{next} \equiv \sum_{1 \leq i \leq m} \gamma''_i . P''_i$,
where $n,m \geq 0$ and there exists at most one $i$, with $1 \leq i \leq n$, such that $\gamma_i = \delta$ and 
at most one $j$, with $1 \leq j \leq m$, such that $\gamma_j = \delta$. 
Moreover, $\{ (\gamma'_i,P'_i) \mid 1 \leq i \leq n \}=\{ (\gamma,Q) \mid  P' \arrow{\gamma} Q\}$ and
$\{ (\gamma''_i,P''_i) \mid 1 \leq i \leq m \}=\{ (\gamma,Q) \mid  P'' \arrow{\gamma} Q\}$.


The cases of left merge, i.e. $P \equiv P' \lme{S} P''$, hiding, i.e. $P \equiv P'/L$,
are proved by using the corresponding axioms (that are standard) to perform the following transformations:
\footnote{We assume prefix to take precedence over $\lme{S}$, 
$\sme{S}$ and $\pco{S}$ operators, when writing terms.}
$$P' \lme{S} P'' = P'_{next} \lme{S} P'' = \sum_{i \leq n} ( \gamma'_i . P'_i \lme{S} P'') =
\sum_{i \leq n, \gamma'_i \notin S \cup \{\delta\}} \gamma'_i . (P'_i \pco{S} P'')$$
$$P'/L = P'_{next}/L = \sum_{i \leq n} (\gamma'_i . P'_i)/L =
\sum_{i \leq n, \gamma'_i \in L} \tau . (P'_i/L) +
\sum_{i \leq n, \gamma'_i \notin L} \gamma'_i . (P'_i/L) 
$$

We now consider the case of synchronization merge, i.e. $P \equiv P' \sme{S} P''$.
We initially have:
$$P' \sme{S} P'' = P'_{next} \sme{S} P''_{next} = 
\sum_{i \leq n} (\gamma'_i . P'_i \sme{S} P''_{next}) =
\sum_{i \leq n} (\sum_{j \leq m} (\gamma'_i . P'_i \sme{S} \gamma''_j . P''_j))
$$
where the second equality is obtained as follows.
First of all we observe that, for any (possibly empty) set of terms $Q_i$ and visible actions $a_i$, with $i \leq h$, we have that
$\sum_{i \leq h} a_i.Q_i = vis(\sum_{i \leq h} a_i.Q_i)$ by applying axioms $(Vis1-3)$. Similarly, for any (possibly empty) set of terms $Q_i$ and standard actions $\alpha_i$, with $i \leq h$, we have that
$\sum_{i \leq h} \alpha_i.Q_i = pri(\sum_{i \leq h} \alpha_i.Q_i)$ by applying axioms $(Pri1-2)$ and $(Pri4)$. We have two cases for the structure of $P'_{next}$. If there exists $k \leq n$ such that $\gamma'_k = \delta$ then
we have:
$\sum_{i \leq n, i \neq k} \gamma'_i . P'_i = vis(\sum_{i \leq n, i \neq k} \gamma'_i . P'_i)$, 
hence $P'_{next} \sme{S} P''_{next} = \delta.P'_k \sme{S} P''_{next} + 
(\sum_{i \leq n, i \neq k} \gamma'_i . P'_i) \sme{S} P''_{next}$ by axiom $(SM7)$; moreover, since, for any $I \subset \{ i | i \leq n, i \neq k \}$, it holds
$\sum_{i \in I} \gamma'_i . P'_i = pri(\sum_{i \in I} \gamma'_i . P'_i)$, we conclude that
$(\sum_{i \leq n, i \neq k} \gamma'_i . P'_i) \sme{S} P''_{next}
= \sum_{i \leq n, i \neq k} (\gamma'_i . P'_i \sme{S} P''_{next})$ by repeatedly applying axiom $(SM6)$. Otherwise, we directly have: for any $I \subset \{ i | i \leq n \}$, it holds $\sum_{i \in I} \gamma'_i . P'_i = pri(\sum_{i \in I} \gamma'_i . P'_i)$, hence
$(\sum_{i \leq n} \gamma'_i . P'_i) \sme{S} P''_{next}
= \sum_{i \leq n} (\gamma'_i . P'_i \sme{S} P''_{next})$ by repeatedly applying axiom $(SM6)$.
The third equality is obtained by applying the same procedure above to terms $P''_{next}$ for each of the $n$ summands.
Therefore 
$$
P'_{next} \sme{S} P''_{next} =
\sum_{i \leq n, \gamma'_i = \tau} (\tau . P'_i \sme{S} P'')  +
\sum_{j \leq m, \gamma''_j = \tau} (P' \sme{S} \tau . P''_j) +
\sum_{i \leq n, j \leq m, \gamma'_i = \gamma''_j \in S \cup \{ \delta \}} \gamma'_i. (P'_i \pco{S} P''_j) 
$$

In the following we show that, for any closed normal forms $P',P''$,
we can turn $P \equiv P' \sme{S} P''$ into $\sum_{1 \leq i \leq n} \gamma_i . P_i $ such that the arguments of the sum correspond to the transitions of $P$ by inducing on the following measure: the maximal length of the sequences of $\tau$ transitions performable by $P'$ plus the maximal length of the sequences of $\tau$ transitions performable by $P''$.
From this result we can conclude that any such $P$ can be turned into the desired form because, since normal forms include only guarded recursion, $P',P''$ cannot include cycles of $\tau$ loops (and are finite-state), hence
the sequences of $\tau$ transitions they can perform are bounded.
\begin{itemize}
\item The base case of the induction corresponds to such a measure being $0$, i.e. both $P'$ and $P''$ cannot
perform $\tau$ transitions. This means that, when transforming $P$ in the sum-form above, the first two sums are not obtained,
hence the assertion obviously holds.
\item
The inductive case is performed by just observing that the summands $\tau . P'_i \sme{S} P''$ and $P' \sme{S} \tau . P''_j$
obtained by transforming $P$ into the sum-form above can be rewritten, by using axiom $(SM4)$, into 
$pri(P'_i \sme{S} P'')$ and $pri(P' \sme{S} P''_j)$, respectively. For such terms
$P'_i \sme{S} P''$ and $P' \sme{S} P''_j$ we can apply the induction hypothesis and turn them into the form $\sum_{1 \leq i \leq n} \gamma'''_i . P'''_i$ 
such that the arguments of the sum correspond to their transitions.
The obtained term $pri(\sum_{1 \leq i \leq n} \gamma'''_i . P'''_i)$ can be 
then turned into $\sum_{1 \leq i \leq n, \gamma'''_i \neq \delta} \gamma'''_i . P'''_i$ by using axioms $(Pri1-4)$.  
It is now easy to observe that, once terms 
$pri(P'_i \sme{S} P'')$ and $pri(P' \sme{S} P''_j)\,$ have been turned into the form
$\sum_{1 \leq i \leq n, \gamma'''_i \neq \delta} \gamma'''_i . P'''_i $ inside the sum-form above,
the arguments of the obtained overall sum correspond to the transitions of $P' \sme{S} P''$.
This is because,
according to the operational rules for synchronization merge, $P' \sme{S} P''$ is endowed, besides visible and $\delta$ transitions obtained by synchronization of visible and $\delta$ transitions of $P'$ and $P''$, with the visible transitions of $P'_i \sme{S} P''$ ($P' \sme{S} P''_j$) whenever $P' \arrow{\tau} P'_i$ ($P'' \arrow{\tau} P''_j$),


\end{itemize}

Let us then consider the case of parallel composition operator, i.e. $P \equiv P' \pco{S} P''$.
We initially have 
$$P' \pco{S} P'' = P'_{next} \lme{S} P'' + P''_{next} \lme{S} P' + P'_{next} \sme{S} P''_{next}$$
We then apply the transformation for $P'_{next} \lme{S} P''$ considered 
in the proof for the case of left merge (and we also apply it to $P''_{next} \lme{S} P'$)
and the transformation for $P'_{next} \sme{S} P''_{next}$ considered
in the proof for the case of synchronization merge. Here, however, instead of dealing with the first and second
sums of the sum form obtained from $P'_{next} \sme{S} P''_{next}$ by means of an inductive transformation, we
just get rid of them as follows.
Since, for any $i$ and $j$, we have 
$\tau . P'_i \sme{S} P'' = pri(P'_i \sme{S} P'')$ and $P' \sme{S} \tau . P''_j = pri(P' \sme{S} P''_j)$
and terms $P'_i \sme{S} P''$ and $P' \sme{S} P''_j$
already occur in the transformation of $P'_{next} \lme{S} P''$ and $P''_{next} \lme{S} P'$ (by additionally applying
axiom $(Par)$ to parallel composition and commutativity via $(SM1)$ ) of the form 
$\tau.((P'_i \sme{S} P'') + P''')$ and $\tau.((P' \sme{S} P''_j) + P''')$, respectively, by using axioms $(Tau2)$ and $(A3)$,
we derive
\cws{8}{6}{P' \pco{S} P'' =
\!\! \sum_{i \leq n, \gamma'_i \notin S \cup {\delta}} \!\gamma'_i . (P'_i \pco{S} P'') +
\!\! \sum_{i \leq m, \gamma''_i \notin S \cup {\delta}} \!\gamma''_i . (P' \pco{S} P''_i) +
\!\! \sum_{i \leq n, j \leq m, \gamma'_i = \gamma''_j \in S \cup {\delta}
} \!\gamma'_i. (P'_i \pco{S} P''_j)}
where, in the second sum, we also have exploited the commutativity of ``$\pco{S}$'' derived by axiom $(SM1)$.
It is immediate to observe that, being
$\arrow{\gamma'_i} P'_i$, with $i \leq n$, 
and 
$\arrow{\gamma''_i} P''_i$, with $i \leq m$, the outgoing transitions of $P'$ and $P''$, respectively, the
arguments of the above sum correspond to the transitions derived for $P$ from the operational rules of parallel composition.

Finally, let us consider the case $P \equiv P' +^t P'' = P'_{next} +^t P''_{next}$.
We have two cases. If for both $P'$ and $P''$ there exist $h$, $k$ such
that $\gamma'_h = \delta$ and $\gamma''_k = \delta$, then we have that $P = 
(\sum_{1 \leq i \leq n, i \neq h} \gamma'_i . P'_i + \delta.P'_h) +^t (\sum_{1 \leq j \leq m, j \neq k} \gamma''_j . P''_j  + \delta.P'_k)$. By using axioms $(TCh7)$ and $(TCh1)$ we derive
$P = 
((\sum_{1 \leq i \leq n, i \neq h} \gamma'_i . P'_i) +^t (\sum_{1 \leq j \leq m, j \neq k} \gamma''_j . P''_j)) +
((\sum_{1 \leq i \leq n, i \neq h} \gamma'_i . P'_i) +^t \delta.P'_k) +
(\delta.P'_h +^t \sum_{1 \leq j \leq m, j \neq k} \gamma''_j . P''_j) +
(\delta.P'_h +^t \delta.P'_k)$. Then, by observing that
$\sum_{1 \leq i \leq n, i \neq h} \gamma'_i . P'_i = pri(\sum_{1 \leq i \leq n, i \neq h} \gamma'_i . P'_i)$ and $\sum_{1 \leq j \leq m, j \neq k} \gamma''_j . P''_j =
pri(\sum_{1 \leq j \leq m, j \neq k}$ $\gamma''_j . P''_j)$ can be derived by using axioms
$(Pri1-2)$ and $(Pri4)$ because, since both $P'$ and $P''$ are time deterministic, all involved actions are standard actions, we derive $P = (\sum_{1 \leq i \leq n, i \neq h} \gamma'_i . P'_i) + (\sum_{1 \leq j \leq m, j \neq k} \gamma''_j . P''_j) +
\delta.(P'_h +^t P'_k)$ by using axioms $(TCh5-6)$, $(Pri3)$ and $(A3)$.
Otherwise, if there exists one between $P'$ and $P''$ (let us suppose it to be $P'$, the other case is symmetric) such that there is no $h$ yielding $\gamma'_h = \delta$, then
we directly have: $P = (\sum_{1 \leq i \leq n} \gamma'_i . P'_i) + (\sum_{1 \leq j \leq m, j \neq k} \gamma''_j . P''_j)$ if there is $k$ yielding $\gamma''_k = \delta$;
$P = (\sum_{1 \leq i \leq n} \gamma'_i . P'_i) + (\sum_{1 \leq j \leq m} \gamma''_j . P''_j)$ otherwise. Both equalities are derived by preliminarily observing that 
$\sum_{1 \leq i \leq n} \gamma'_i . P'_i = pri(\sum_{1 \leq i \leq n} \gamma'_i . P'_i)$ can be derived by using axioms
$(Pri1-2)$ and $(Pri4)$ (because all involved actions are standard actions) and
by subsequently applying axioms $(TCh5)$ and, for the first equality, $(Pri3)$.
\end{proof}

%
%
%
%
%
%
%
%
%
%
%
%
%
%
%
%
%
%
%
%
%
%
%
%
%
%

\begin{thm}\label{unguardfs}
For each process $P \in \calp_{DT}^{fs}$ there exists a time-deterministic 
process $P' \in \calp$ such that
$\cala \vdash P = P'$.

\begin{proof}

We show, by structural induction over the syntax of expressions $E \in \cale_{DT}^{fs}$ 
that $E$ can be turned 
into a 
basic expression $F \in \cale$ 
such that
$\cala \vdash F = E$ and: 
\begin{enumerate}
\item
For any variable $X$: if $X$ occurs free in $F$ then $X$ occurs free in $E$. 
\item
If all free variable $X$ occurrences are guarded in $E$ by a standard action $\alpha$ then the same holds in $F$.
\item 
Every free occurrence of a variable $X$ in $F$ that is in the scope
of a ``$\_+\_$'' operator is 
guarded inside such an operator
by a standard
action $\alpha$.

\item $F\{\nil/X \mid X\,$ occurs free in $F \}$ is time-deterministic.

%
%
\end{enumerate}


The base cases of the induction are $E \equiv \snil$, $E \equiv \nil$ and 
$E \equiv X$: for the first one we just apply axiom $(Ter)$, the second and third ones are
trivial because they are of the
desired form already.
The inductive cases are the following ones:
\begin{itemize}

\item If $E \equiv \gamma^t.E'$ 
then $E$ can be turned into the desired form by directly exploiting
the inductive argument over $E'$ and by applying axioms $(TPre1-2)$.

\item If $E \equiv \ms{rec}X. E'$
then $E$ can be turned into the desired form by directly exploiting
the inductive argument over $E'$ and by considering $F \equiv \ms{rec}X. F'$: since the obtained expression $F'$ satisfies items $3$ and $4$, we have that $G \equiv F \{\nil/Y \mid Y\,$ occurs free in $F \}$ is time-deterministic because, in states of $G$, terms $\ms{rec}X. G'$ (for some $G'$) that are in the scope of a  ``$\_+\_$'' operator are 
guarded inside such an operator
by a standard
action $\alpha$.

\item If $E \equiv E' \pco{S} E''$ 
then we can turn $E$ into the desired form as follows. By exploiting 
the inductive argument over $E'$ and $E''$, and by observing that
$E$ cannot include free variables, we obtain the two closed basic 
terms $F'$ and $F''$ that, by using 
Lemma~\ref{timedunguard} 
can be transformed into a guarded time-deterministic basic processes.
As a consequence we have that the closed term $P_1$  
obtained by replacing both $E'$ and $E''$ inside $E$ with such terms has
a finite transition system and is time-deterministic. 
We can therefore turn $P_1$ into a time-deterministic basic process 
as follows.
%
%
Let $P_1 \dots P_n$ be the (finite) states of the transition system of $P_1$. 
Due to Lemma~\ref{staticremoval}, 
for each 
$i \in \{ 1 \dots n \}$
we can rewrite $P_i$ into the form
$P_i = \sum_{j \leq m_i} \gamma^i_j . P_{k^i_j} $, 
hence we can characterize the behavior of
$P_1$ by means of a time-deterministic prioritized standard guarded equation set  
exactly as we did for the proof of Theorem \ref{reprtimedet}.
Such an equation set is 
guarded because, by Lemma~\ref{staticremoval}, the arguments of
the sums above are the outgoing transitions of the states of $P_1$ and
$P_1$ is the parallel composition of two guarded basic processes,
hence (since it cannot turn visible actions into $\tau$ ones) every cycle in its transition
system contains at least a non-$\tau$ action.
Since guarded equation sets have a unique solution (Theorem \ref{onesol}), we have that there exists a 
time-deterministic (due to Theorem \ref{soldet}) basic process $P'_1$ such that $P'_1 = P_1$.
%

\item If $E \equiv E' /L$ 
then we can turn $E$ into the desired form as follows. Similarly as for parallel composition, by exploiting 
the inductive argument over $E'$, and by observing that
$E$ cannot include free variables, we obtain a closed basic 
term $F'$ that, by using 
Lemma~\ref{timedunguard} 
can be transformed into a guarded time-deterministic basic processes $P$.
We then proceed exactly as in the proof of Proposition~$6.5$ of~\cite{mscs08} to replace
with $\tau.\_$ each $a.\_$ prefix occurring in $P$ such that $a \in L$, thus obtaining a term $P'$ (we can distribute top-down and bottom-up the hiding inside $P/L$ thanks to axiom $(RecHi)$ and to the fact that,
due to axiom $(Hi2)$, $\delta.\_$ prefixes can be dealt with as $a.\_$ prefixes with $a \notin L$). We then turn the weakly guarded basic process $P'$ into a guarded $P''$ by 
applying Lemma~\ref{timedunguard} and we finally turn $P''/L$ into a basic process 
by characterizing its behavior by means of a time-deterministic prioritized standard guarded equation set exactly as we did for parallel composition (notice that now the $\_/L$ has no hiding effect
so, being $P''$ guarded, every cycle in the transition system of $P''/L$ contains at least a non-$\tau$ action).

\item If $E \equiv E' +^t E''$ 
then we can turn $E$ into the desired form as follows. By exploiting 
the inductive argument over $E'$ and $E''$, and by observing that
all free variables in $E$ are guarded by a standard action $\alpha$, we obtain the two basic 
expressions $F'$ and $F''$ for which, due to item $2$, the same property holds.
$F'$ and $F''$, by using the statement in the proof of 
Lemma~\ref{unguard} (on which Lemma~\ref{timedunguard} is based), 
can be transformed into guarded expressions $H'$ and $H''$, respectively, such that the same property holds 
(because weakly unguardedness elimination in Lemma~\ref{unguard} introduces a $\tau$ guard). Moreover the obtained $H'$ and $H''$ are both expressions in $\cale$ because both $F'$ and $F''$ satisfy item $3$, hence, during the induction of Lemma~\ref{unguard}, $pri(\_)$ operators are always generated in front of guards. Finally, notice that, since both $F'$ and $F''$ satisfy item $4$, the same holds also for $H'$ and $H''$, because the transformations performed  in Lemma~\ref{unguard} preserve time-determinism.
%
%
%
We consider $G'$ and $G''$ such that the set of free variables of $G'$ and $G''$ is disjoint from 
the set of free variables of $H'$ and $H''$
and $(G' +^t G'') \{\alpha_X.E_X / X \mid X$ occurs free in $G',G'' \} \equiv H' +^t H''$ (such $G'$, $G''$ exist because 
all free variables in $H'$, $H''$ are guarded by a standard action $\alpha$).  
We have that the process $P_1 \equiv (G' +^t G'') \{\alpha_X.\nil / X \mid X$ occurs free in $G',G'' \}$ has a finite transition system and is time-deterministic (being a $+^t$ of two time-deterministic processes).
We can therefore turn $P_1$ into a time-deterministic basic process 
as follows.
%
%
Let $P_1 \dots P_n$ be the (finite) states of the transition system of $P_1$. 
Due to Lemma~\ref{staticremoval}, 
for each 
$i \in \{ 1 \dots n \}$
we can rewrite $P_i$ into the form
$P_i = \sum_{j \leq m_i} \gamma^i_j . P_{k^i_j} $, 
hence we can characterize the behavior of
$P_1$ by means of a time-deterministic prioritized standard guarded equation set  
exactly as we did for the proof of Theorem \ref{reprtimedet}: it is 
guarded because, by Lemma~\ref{staticremoval}, the arguments of
the sums above are the outgoing transitions of the states of $P_1$ and
$P_1$ is the $+^t$ of two guarded basic processes.
Since guarded equation sets have a unique solution (Theorem \ref{onesol}), we have that there exists a 
time-deterministic (due to Theorem \ref{soldet}) basic process $P'_1$ such that $P'_1 = P_1$.
By applying exactly the same axioms in the same way, we can transform $(G' +^t G'') \{\alpha_X.E_X / X \mid X$ occurs free in $G',G'' \} \equiv H' +^t H''$ into $F \equiv G_1\{\alpha_X.E_X / X \mid X$ occurs free in $G',G'' \}$, with $G_1$ such that $G_1 \{\alpha_X.\nil / X \mid X$ occurs free in $G',G'' \} \equiv P'_1$.  Moreover, we have that $F$ satisfies item $2$ because all
its free variables are inside  $E_X$ terms. Also the other items are satisfied because $P'_1$ is time-deterministic and every $E_X$ term satisfies items $3$ and $4$.
\qedhere
\end{itemize}

%
%
%

\end{proof}

\end{thm}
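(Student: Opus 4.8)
The plan is to establish a stronger, inductively-friendly claim: by structural induction on the syntax of expressions $E \in \cale_{DT}^{fs}$, every such $E$ can be transformed, using $\cala_{DT}$, into a \emph{basic} expression $F \in \cale$ with $\cala_{DT} \vdash F = E$ satisfying four invariants --- $(1)$ every variable free in $F$ is free in $E$; $(2)$ if all free variable occurrences in $E$ are guarded by a standard action $\alpha$, then so are they in $F$; $(3)$ every free variable occurrence in $F$ that lies under a ``$+$'' is guarded there by a standard action; and $(4)$ $F\{\nil/X \mid X \; {\rm free\;in}\; F\}$ is time-deterministic. Applying this to a closed $E \equiv P \in \calp_{DT}^{fs}$ immediately yields a time-deterministic closed basic term $F \equiv P' \in \calp$ with $\cala_{DT} \vdash P = P'$, which is the theorem (using Lemma~\ref{timedunguard} as well if a guarded witness is also wanted).

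The easy cases are the base cases $\snil$ (apply $(Ter)$), $\nil$ and $X$, and the prefix case $E \equiv \gamma^t.E'$ (apply the induction hypothesis to $E'$, then the $(TPre)$ axioms). For $E \equiv \ms{rec}X.E'$ I would take $F \equiv \ms{rec}X.F'$ for the $F'$ obtained from $E'$; invariants $(3)$ and $(4)$ on $F'$ guarantee that in the states of $F\{\nil/\cdots\}$ any recursion occurring under a ``$+$'' is guarded by a standard action, so time-determinism is preserved. For $E \equiv E' +^t E''$, where the definition of $\calp_{DT}^{fs}$ forces every free variable to be guarded by a standard action, I would apply the hypothesis, then use the construction inside the proof of Theorem~\ref{unguard} to make the subterms guarded (weak-unguardedness elimination there introduces $\tau$-guards, and by invariant $(3)$ only $\ms{pri}(\_)$ appears in front of guards, so we stay inside $\cale$), then abstract the free variables as prefixed terms $\alpha_X.\nil$ to reduce to a closed, finite-state, time-deterministic process handled by the equation-set technique below, after which the original prefixed terms $E_X$ are substituted back, preserving $(1)$--$(4)$.

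The substantive cases are the static operators $E \equiv E' \pco{S} E''$ and $E \equiv E'/L$, which $\calp_{DT}^{fs}$ forces to be closed. In both, the plan is: apply the hypothesis to the subterms, make the resulting closed basic expressions guarded and time-deterministic via Lemma~\ref{timedunguard}, so that the closed term $P_1$ obtained by plugging them back in is finite-state and time-deterministic; then, by Lemma~\ref{staticremoval}, rewrite each of its finitely many reachable states in the form $\sum_{j} \gamma_j . P_{k_j}$ (with at most one $\delta$-summand, the summands being exactly the outgoing transitions); this produces a time-deterministic, prioritized, standard, guarded and closed equation set --- guardedness following because $P_1$ is built from guarded basic processes and neither $\pco{S}$ nor the non-hiding $/L$ can create $\tau$-cycles --- whose unique solution (Theorem~\ref{onesol}) is, by Theorem~\ref{soldet}, a time-deterministic basic process provably equal to $P_1$, exactly as in the proof of Theorem~\ref{reprtimedet}.

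The main obstacle is the hiding case $E'/L$: hiding turns visible actions into $\tau$ and so can create \emph{dynamically generated unguardedness} ($\tau$-loops, e.g.\ in $(\ms{rec}X.a.X)/\{a\}$), and moreover has a prioritisation effect, since the $\tau$ it generates pre-empts competing $\delta$ actions. I would deal with this as in Proposition~$6.5$ of~\cite{mscs08}: use $(RecHi)$ to push hiding top-down and bottom-up through recursion (treating $\delta.\_$ like a non-hidden $a.\_$ via $(Hi2)$), replace each hidden $a.\_$ by $\tau.\_$, re-guard the resulting weakly guarded term with Lemma~\ref{timedunguard} --- which internally relies on the priority axioms $(Pri1)$--$(Pri6)$ and $(Ung1)$--$(Ung4)$ of Section~\ref{SectAPFSB} to escape the $\tau$-divergence caused by hiding while respecting priority of $\tau$ over $\delta$ --- and only then pass to the equation set. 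This is precisely where re-using the Section~\ref{SectAPFSB} machinery is indispensable.
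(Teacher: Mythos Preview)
Your proposal is correct and follows essentially the same approach as the paper: the same structural induction with the same four invariants, the same handling of the base and prefix cases, the same treatment of $\ms{rec}X$, the same variable-abstraction trick for $+^t$, and the same equation-set route via Lemma~\ref{staticremoval} and Theorems~\ref{onesol}/\ref{soldet} for the static operators, including the $(RecHi)$-based handling of dynamically generated unguardedness in the hiding case.
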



From Theorem~\ref{unguardfs} and Theorem~\ref{comptd} 
we derive the completeness of $\cala_{DT}$ over processes of $\calp_{DT}^{fs}$.

\begin{thm}
Let $P,Q \in \calp_{DT}^{fs}$. If $P \simeq_{T} Q$ then $\cala_{DT} 
\vdash P = Q$.

\end{thm}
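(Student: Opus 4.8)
The plan is to reduce the statement to the two completeness results already established: Theorem~\ref{unguardfs}, which turns any $\calp_{DT}^{fs}$ process into a provably equal time-deterministic basic process of $\calp$, and Theorem~\ref{comptd}, the completeness of $\cala_{DT}$ over time-deterministic basic processes of $\calp$.

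First I would take $P, Q \in \calp_{DT}^{fs}$ with $P \simeq_{T} Q$. By Theorem~\ref{unguardfs} there are time-deterministic processes $P', Q' \in \calp$ with $\cala \vdash P = P'$ and $\cala \vdash Q = Q'$; since every axiom of $\cala$ is an axiom of $\cala_{DT}$, also $\cala_{DT} \vdash P = P'$ and $\cala_{DT} \vdash Q = Q'$. Next I would invoke soundness of $\cala_{DT}$ with respect to $\simeq_{T}$ (Theorem~\ref{soundness2}) to obtain $P \simeq_{T} P'$ and $Q \simeq_{T} Q'$. Since $\simeq_{T}$ is an equivalence relation, transitivity yields $P' \simeq_{T} Q'$ from $P \simeq_{T} Q$.

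Finally I would apply Theorem~\ref{comptd} to the time-deterministic processes $P', Q' \in \calp$ to obtain $\cala_{DT} \vdash P' = Q'$. Chaining the three derivations gives $\cala_{DT} \vdash P = P' = Q' = Q$, which is exactly the claim.

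The main obstacle is not in this final assembly, which is entirely routine once the ingredients are in place, but lies upstream: in Theorem~\ref{unguardfs} (dealing with the unguardedness dynamically generated by hiding via $(RecHi)$ and the inductive normal-form construction for the static and timed operators, resorting to Lemma~\ref{staticremoval} and Lemma~\ref{timedunguard}) and in Theorem~\ref{comptd}, which rests on the mergeability Theorem~\ref{merge} and its $\tau$-saturation device. One small point worth checking at this level is that $\simeq_{T}$ is genuinely transitive; this holds because the relational composition of two rooted time weak bisimulations is again one, the root $\delta$-step being matched by a single $\delta$-step on both sides and the behaviour below the root reducing to ordinary weak bisimulation $\approx$, whose transitivity is standard.
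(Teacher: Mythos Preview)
Your proof is correct and follows exactly the approach the paper takes: combine Theorem~\ref{unguardfs} and Theorem~\ref{comptd} via soundness (Theorem~\ref{soundness2}) and transitivity of $\simeq_{T}$. One small caveat: your claim that ``every axiom of $\cala$ is an axiom of $\cala_{DT}$'' is not literally true, since $(Tau1)$ and $(Tau3)$ are \emph{replaced} in $\cala_{DT}$ by the restricted $(Tau1')$ and $(Tau3')$; the statement of Theorem~\ref{unguardfs} writing $\cala$ rather than $\cala_{DT}$ is a typo in the paper (its proof visibly uses $(Ter)$, $(TPre)$, $(TCh)$, $(Hi)$, $(RecHi)$, $(Par)$, $(LM)$, $(SM)$, $(Vis)$), so you should simply read it as giving $\cala_{DT}\vdash P=P'$ directly.
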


\section{Related Work}
\label{SectRW}

In the following we consider related work on priority mechanisms and on expressing time in process algebra. Here we just comment on the characteristics of our axiomatization using notions introduced in~\cite{ML}. In~\cite{ML}  
the form of Milner's axiomatization~\cite{Mil89,Mil} is discussed by observing that it is not in pure equational
Horn logic: the axioms involve non-equational side-conditions and/or they are schematically infinitary, as, e.g., the folding axiom $(Rec2)$. Our axiomatization is an extension/variant of Milner's one that preserves its characteristics.

\subsection{Local and Global Priority}
\label{SSSectLGP}

In classical prioritized calculi the 
parallel composition operator is usually managed in two ways: either
by implementing {\it local} pre-emption or {\it global} pre-emption 
(see~\cite{CLN2}). 

Assuming local pre-emption means that $\tau$ actions of a sequential process 
(i.e. a process not including a parallel composition operator in its
immediate behavior)
may pre-empt only actions $\delta$ of the same sequential process. 
For instance in $\tau. E \pco{\emptyset} \delta . F$ the action 
$\delta$ of the righthand process is not pre-empted by the action $\tau$ of 
the lefthand process, as instead happens if we assume global pre-emption. 
It is easy to see that an extension to our basic calculus with a parallel composition operator 
implementing local pre-emption makes it possible to preserve congruence
w.r.t. Milner's observational congruence. However, unfortunately, expressing
local pre-emption makes it necessary to introduce location information
in the semantics and does not allow to directly produce an 
axiomatization by means of standard techniques.

If global pre-emption is, instead, assumed, then standard Milner's notion of
observational congruence is not a congruence for the parallel composition operator 
(see~\cite{CLN2}). 
%
This is because, e.g.,  $\tau. \nil$
 is observationally congruent to $\ms{rec}X.\tau.X$, but
$\tau. \nil \pco{\emptyset} \delta.P$, whose semantics is that of $\tau. \delta. P$, is not observationally congruent to $\ms{rec}X.\tau.X \pco{\emptyset} \delta.P$, whose semantics, due to global pre-emption, is
that of $\ms{rec}X.\tau.X$.
In general note that the problem with congruence is related to the
behavior of parallel composition for {\it processes Q which may initially execute neither
``$\tau$'' prefixes, nor ``$\delta$'' prefixes}, among which is $\nil$
(for any such Q, the use of $\tau.Q \simeq \ms{rec}X.(\tau.X + Q)$ with
the context ``$\_ \pco{\emptyset} \delta.P$'' provides
a counterexample to congruence). 
In this case a possibility is to resort to a 
finer notion of observational congruence (which is divergent sensitive in 
certain cases) similar to those presented in~\cite{NCCC} and~\cite{HL}.

An alternative approach is to adopt an explicit priority operator, see e.g.~\cite{ACFI}. In this context~\cite{ACFI} shows 
that a finite complete axiomatization is admitted only if the action set of the considered process algebra is finite. With our simpler
form of priority such a limitation is not needed: as for the axiomatizations in~\cite{Mil89,Mil,concur05,mscs08} we can assume a denumerable action set and we have a finite complete axiomatization (with axioms using action variables).

\subsection{Time}

When priority derives from time (maximal progress assumption), 
i.e.\ when $\delta$ actions represent time delays and standard actions  
are executed in zero time, there is an alternative choice w.r.t.\ that of just adopting global priority, 
as, e.g., in the above mentioned approach~\cite{HL}.
Conceptually, under the timed interpretation, the problem with congruence derives from the fact that the 
parallel composition operator deals with the 
terminated process $\nil$ (and in general with processes which may initially execute
neither $\tau$ actions nor $\delta$ actions) as if it let time pass.
For example $\nil \pco{\emptyset} \delta$ may execute $\delta$ and become
$\nil \pco{\emptyset} \nil$. This is obviously in contrast with the fact
that $\nil$ is weakly bisimilar to $\ms{rec}X.\tau.X$, which is clearly
a process that does not let time pass (in the context of time it 
represents a Zeno process which executes infinite $\tau$ actions in the same 
time point): it originates a so-called {\it time deadlock}.
The solution adopted in~\cite{HenR} and in this paper is, instead, to consider, as processes which can
let time pass, only processes which can actually execute $\delta$ actions.
In this way $\nil$ is interpreted not as a terminated
process which may let time pass, but as a time deadlock. As a consequence 
the behaviour of parallel composition is defined, as in~\cite{HenR}, in such a way that the {\it absence} of 
$\delta$ actions within the
actions executable by a process (which means that the process cannot let
time pass) pre-empts the other process from executing a timed action $\delta$, see Table \ref{TabRulesDRT}.

As we already explained in the introduction, in order to produce an axiomatization
we have to face the problem of standard axiom $\ms{rec} X. (\tau.X + E) = \ms{rec} X. \tau. E$ unsoundness.
In order to overcome this problem, in the above mentioned approach of~\cite{HL} 
the distinguished symbol ``$\perp$'' is introduced, which 
represents an ill-defined term that can be removed from a summation only if 
a silent computation is possible. In this way by considering the rule
$\ms{rec} X. (\tau.X + E) = \ms{rec} X. \tau. (E \; + \perp)$ the resulting
term that escapes divergence can be turned into a ``normal'' term only
if $E$ may execute a silent move. 
This law is surely sound (over terms without ``$\perp$'') 
also in our language, but is not sufficient to
achieve completeness. Since, differently from~\cite{HL}, 
we do not impose conditions about stability in our definition of observational
congruence, we can escape divergence
not only when $E$ includes a silent move but {\it for all possible} terms $E$.
For example in our calculus (but not in~\cite{HL}) the term 
$\ms{rec} X. \tau.X$ is equivalent to $\tau . \nil$ (as in standard CCS), 
so we can escape divergence in $F$ even if $\tau.X$ has not a silent 
alternative inside $\ms{rec} X$.
In our case $\tau$ divergence can always be escaped
by turning $E$ into $\ms{pri}(E)$ and
the strongly guarded terms we obtain
are always ``well-defined'' terms.
Notice that the introduction of this auxiliary operator, representing priority ``scope'', is crucial for being able
to axiomatize the priority of $\tau$ actions over $\delta$ actions
when standard observational congruence is considered.
Since we have to remove $\delta$ actions performable by a term $E$ 
even if $E$ does not include a silent move, we cannot do this by employing a special 
symbol like ``$\perp$'' instead of using an operator. This is because
$\perp$ must somehow be removed at the end of the deletion process
(in~\cite{HL} $\perp$ is eliminated by silent alternatives) 
in order to obtain a ``normal'' term.

Concerning the discrete interpretation of time, our Discrete Time Calculus just differs from~\cite{HenR} for the choice of using CSP~\cite{Hoa} parallel composition and hiding, instead of CCS~\cite{Mil} parallel composition and restriction, as done in~\cite{HenR}. This 
allowed us to provide an axiomatization that is complete for finite-state processes
by applying the technique of~\cite{concur05,mscs08} that requires dynamic generation
of ``$\tau$'' actions to be expressed by a dedicated hiding operator (and not by another operator, like CCS~\cite{Mil} parallel,
that mixes generation of $\tau$ actions with other mechanisms).
Notice that an analogous complete axiomatization of~\cite{HenR} could have been obtained by expressing the CCS-like parallel composition of~\cite{HenR} in terms of parallel composition and hiding by using the generic process algebra TCP+REC introduced 
in~\cite{mscs08}, that was produced during the mentioned collaboration with Prof.\ Jos Baeten. 

Finally, concerning limitations of our approach, we consider the possible extension to discrete-time process algebras with multiple clocks and general clock scoping, as e.g. the calculus in~\cite{NLM}. In this context a complete axiomatization of maximal progress for weakly guarded finite-state processes cannot be achieved by a direct application of our approach. 
This because a $\tau$-circle could involve 
two or more processes, each of which has an (independent) outgoing $\delta$ transition: in this situation we could not directly use our axiomatization to remove the $\tau$-circle while maintaining time-determinism.

\section{Future Work}
\label{SectConc}

We just make some remarks concerning future work. We plan to apply the axiomatization techniques used in this paper also in the context of stochastic time, in particular to provide a complete axiomatization of Markovian observational congruence for Revisited Interactive Markov Chains~\cite{mtcs02}, where, due to maximal progress, $\tau$ actions are prioritized w.r.t.\ Markovian delays. Differently from
original Interactive Markov Chains~\cite{Her}, where the maximal progress assumption is implemented by a global priority mechanism and a $\tau$-divergent sensitive equivalence like that in~\cite{HL} is adopted, Revisited Interactive Markov Chains are based on a parallel composition that requires both
processes to let time pass, as for our discrete time calculus. As a consequence the axiomatization of
the basic calculus introduced in this paper can also form a basis (once $\delta$ prefixes are turned into Markovian delays and 
Markovian observational congruence of ~\cite{mtcs02} is considered) for completely axiomatizing the Markovian calculus of~\cite{mtcs02}.

\section*{Acknowledgments}
\noindent We thank the anonymous reviewers for their careful reading, comments and suggestions.

{\small}


\begin{thebibliography}{MM}


\bibitem{Ac} L.~Aceto,
{\it ``On ``Axiomatising Finite Concurrent Processes'' ''}
in SIAM Journal on Computing~23(4):852-863, 1994

\bibitem{ACFI} L.~Aceto, T.~Chen, W.~J.~Fokkink, A.~Ingólfsdóttir:
{\it ``On the Axiomatizability of Priority''},
in Proc. of the {\it 33rd Int. Coll. on Automata, Languages and Programming (ICALP 2006)},
LNCS~4052: 480-491, Venice (Italy), 2006

\bibitem{ABDGHW} A.~Aldini, M.~Bravetti, A.~Di Pierro, R.~Gorrieri, C.~Hankin, H.~Wiklicky:
{\it ``Two Formal Approaches for Approximating Noninterference Properties''},
in {\it Foundations of Security Analysis and Design II (FOSAD 2002)},
LNCS 2946:1-43, 2002



\bibitem{concur05} J.C.M.~Baeten, M.~Bravetti 
{\it ``A Ground-Complete Axiomatization of Finite State Processes in Process Algebra''},
in Proc. of the {\it 16th Int. Conf. on Concurrency Theory (CONCUR'05)}, 
LNCS 3653:248-262, San Francisco (CA, USA), August 2005 

\bibitem{mscs08} J.C.M.~Baeten, M.~Bravetti,
{\it ``A Ground-Complete Axiomatization of
Finite-State Processes in a Generic Process
Algebra''}, in
Mathematical Structures in Computer Science,  18(6): 1057 - 1089, Cambridge University Press, 2008



\bibitem{BMM} J.C.M.~Baeten, C.A.~Middelburg and K.~Middelburg,
{\it ``Process Algebra with Timing''},
Springer-Verlag, 2002

\bibitem{Ber} M.~Bernardo,
{\it ``Theory and Application of Extended Markovian Process Algebra''},
Ph.D.\ Thesis, University of Bologna (Italy), 1999

\bibitem{Bra} M.~Bravetti,
{\it ``Specification and Analysis of Stochastic Real-Time Systems''},
Ph.D.\ Thesis, University of Bologna (Italy), 2002.
Available at \verb+http://www.cs.unibo.it/~bravetti+

\bibitem{mtcs02} M.~Bravetti,
{\it ``Revisiting Interactive Markov Chains''},
in Proc. of the {\it 3rd Int. Workshop on Models for 
Time-Critical Systems (MTCS 2002)}, ENTCS 68(5), Brno 
(Czech Republic), August 2002

\bibitem{jlamp18} M.~Bravetti,
{\it ``Reduction semantics in Markovian process algebra''},
Journal of Logical and Algebraic Methods in Programming, 96:41-64, 2018



\bibitem{tcs3} M.~Bravetti, A.~Aldini,
{\it ``Discrete Time Generative-Reactive Probabilistic
Processes with Different Advancing Speeds''},
in Theoretical Computer Science, 290(1):355-406, 2003

\bibitem{mtcs00} M.~Bravetti, M.~Bernardo,
{\it ``Compositional Asymmetric Cooperations for
Process Algebras with Probabilities, Priorities, and Time''},
in Proc.\ of the {\it 1st Int. Workshop on Models for Time-Critical
Systems (MTCS 2000)}, ENTCS 39(3), State College (PA), 2000



\bibitem{tcs} M.~Bravetti, R.~Gorrieri,
{\it ``The Theory of Interactive Generalized Semi-Markov Processes''},
in Theoretical Computer Science, 282:5-32, 2002

\bibitem{icalp00} M.~Bravetti, R.~Gorrieri,
{\it ``A Complete Axiomatization for Observational Congruence of
Prioritized Finite-State Behaviors''}, 
in Proc.\ of the {\it 27th Int. Colloquium on Automata, Languages 
and Programming (ICALP 2000)}, U.~Montanari, J.D.P.~Rolim 
and E.~Welzl ed., LNCS 1853:744-755,
Geneva (Switzerland), 2000

\bibitem{tocl} M.~Bravetti, R.~Gorrieri,
{\it ``Deciding and Axiomatizing Weak ST Bisimulation for a Process Algebra 
with Recursion and Action Refinement''},
in ACM Transactions on Computational Logic 3(4):465-520, 2002

\bibitem{CLM} R.~Cleaveland, G.~L{\"{u}}ttgen, M.~Mendler:
{\it ``An Algebraic Theory of Multiple Clocks''},
in Proc. of the {\it 8th Int. Conf. on Concurrency Theory (CONCUR '97)},
LNCS~1243: 166-180,  Warsaw (Poland), 1997

\bibitem{CLN} R.~Cleaveland, G.~Luttgen, V.~Natarajan,
{\it ``A Process Algebra with Distributed Priorities''},
in Proc.\ of the {\it 7th Int.\ Conf.\ on Concurrency Theory (CONCUR '96)},
LNCS~1119:34-49, 1996

\bibitem{CLN2} R.~Cleaveland, G.~Luttgen, V.~Natarajan,
{\it ``Priority in Process Algebras''},
in Handbook of Process Algebra, Chapter 12,
pp.~711-765, Elsevier, 2001

\bibitem{CH} R.~Cleaveland, M.~Hennessy,
{\it ``Priorities in Process Algebra''},
in Information and Computation~87:58-77, 1990 




\bibitem{Gro} J.F. Groote, 
{\it ``Transition system specifications with negative premises''},
in Theoretical Computer Science, 118(2):263-299, 1993

\bibitem{Han} H.~Hansson,
{\it ``Modeling Time and Reliability''},
Kluwer,1993


\bibitem{HenR} M.~Hennessy, T.~Regan,
{\it ``A Process Algebra for Timed Systems''}, in Information and
Computation, 117(2):221-239, 1995

\bibitem{Her} H.~Hermanns,
{\it ``Interactive Markov Chains -- The Quest for Quantified Quality''},
Springer-Verlag, LNCS 2428, 2002


\bibitem{HL} H.~Hermanns, M.~Lohrey,
{\it ``Priority and Maximal Progress Are Completely Axiomatisable 
(Extended Abstract)''},
in Proc.\ of the {\it 9th Int.\ Conf.\ on Concurrency Theory (CONCUR '98)},
LNCS~1466:237-252, Nice (France), 1998


\bibitem{Hil} J.~Hillston,
{\it ``A Compositional Approach to Performance Modelling''},
Cambridge University Press, 1996

\bibitem{Hoa} C.A.R.~Hoare,
{\it ``Communicating Sequential Processes''},
Prentice Hall, 1985

\bibitem{ML} M.~Mendler, G.~L{\"{u}}ttgen:
{\it ``Is Observational Congruence Axiomatisable in Equational Horn Logic?''},
in Proc. of the {\it 18th Int. Conf. on Concurrency Theory (CONCUR 2007)},
LNCS~4703: 197-211,  Lisbon (Portugal), 2007

\bibitem{Mil} R.~Milner,
{\it ``Communication and Concurrency''},
Prentice Hall, 1989


\bibitem{Mil89} R.~Milner,
{\it ``A complete axiomatization for observational congruence of finite-state behaviours''},
in Information and Computation~81:227-247, 1989



\bibitem{NCCC} V.~Natarajan, I.~Christoff, L.~Christoff, R.~Cleaveland,
{\it ``Priorities and Abstraction in Process Algebra''},
in Proc.\ of {\it Int.\ Conf.\ on Foundations of Software Technology and Theoretical
Computer Science (FSTTCS)}, LNCS~880:217-230, 1994

\bibitem{NS} X.~Nicollin, J.~Sifakis,
{\it ``An Overview and Synthesis on Timed Process Algebras''},
in Time: Theory in Practice, LNCS~600, 1991

\bibitem{NLM} B.~Norton, G.~L{\"{u}}ttgen, M.~Mendler:
{\it ``A Compositional Semantic Theory for Synchronous Component-based Design''},
in Proc. of the {\it 14th Int. Conf. on Concurrency Theory (CONCUR 2003)},
LNCS~2761: 453-467,  Marseille (France), 2003


\bibitem{SM} D.~Sangiorgi, R.~Milner,
{\it ``The Problem of "Weak Bisimulation up to"\, ''},
in Proc.\ of {\it 3rd Int.\ Conf.\ on Concurrency Theory (CONCUR '92)},
LNCS~630:32-46, Stony Brook, NY (USA), 1992



\bibitem{Yi} W.~Yi,
{\it ``Time Behaviour of Asynchronous Agents''},
in Proc. of the {\it 1st Int. Conf. on Concurrency Theory (CONCUR'90)},
LNCS~458:502-520, Amsterdam (The Netherlands), 1990


\end{thebibliography}
\end{document}